\newcommand{\declarecolor}[2]{\definecolor{#1}{RGB}{#2}\expandafter\newcommand\csname #1\endcsname[1]{\textcolor{#1}{##1}}}
\newcommand{\declareperson}[1]{\expandafter\newcommand\csname#1\endcsname[1]{\textcolor{Orange}{#1: ##1}}}
\theoremstyle{plain}
\newtheorem{theorem}{Theorem}[section]
\newtheorem{lemma}[theorem]{Lemma}
\newtheorem{corollary}[theorem]{Corollary}
\newtheorem{proposition}[theorem]{Proposition}
\newtheorem{fact}[theorem]{Fact}
\theoremstyle{definition}
\newtheorem{definition}[theorem]{Definition}
\theoremstyle{remark}
\newtheorem{remark}[theorem]{Remark}
\newlist{parts}{enumerate}{10}
\setlist[parts]{label=\arabic*.,ref=\arabic*}
	\crefname{partsi}{Part}{Parts}
	\crefname{partsi}{part}{parts}
\Crefname{partsi}{Part}{Parts}
\newcommand*{\Z}{{\mathbb{Z}}}
\let\R\relax
\newcommand*{\R}{{\mathbb{R}}}
\newcommand*{\bone}{{\mathds{1}}}
\newcommand*{\cM}{{\mathcal{M}}}
\newcommand*{\eps}{{\epsilon}}
\newcommand*{\cI}{{\mathcal{I}}}
\newcommand*{\cB}{{\mathcal{B}}}
\newcommand*{\tP}{{\tilde{P}}}
\newcommand*{\Hess}[1]{\nabla^2 #1}
\newcommand*{\SharpP}{{\ComplexityFont{\#P}}}
\DeclareMathOperator{\supp}{supp}
\DeclareMathOperator{\rank}{rank}
\DeclareMathOperator{\diag}{diag}
\DeclareMathOperator{\expansion}{h}
\DeclareMathOperator{\cond}{cond}
\DeclareMathOperator{\vol}{vol}
\providecommand{\given}{}
\DeclarePairedDelimiterX{\card}[1]{\lvert}{\rvert}{\renewcommand\given{\nonscript\:\delimsize\vert\nonscript\:\mathopen{}}#1}
\DeclarePairedDelimiterX{\abs}[1]{\lvert}{\rvert}{\renewcommand\given{\nonscript\:\delimsize\vert\nonscript\:\mathopen{}}#1}
\DeclarePairedDelimiterX{\norm}[1]{\lVert}{\rVert}{\renewcommand\given{\nonscript\:\delimsize\vert\nonscript\:\mathopen{}}#1}
\DeclarePairedDelimiterX{\tuple}[1]{\lparen}{\rparen}{\renewcommand\given{\nonscript\:\delimsize\vert\nonscript\:\mathopen{}}#1}
\DeclarePairedDelimiterX{\parens}[1]{\lparen}{\rparen}{\renewcommand\given{\nonscript\:\delimsize\vert\nonscript\:\mathopen{}}#1}
\DeclarePairedDelimiterX{\brackets}[1]{\lbrack}{\rbrack}{\renewcommand\given{\nonscript\:\delimsize\vert\nonscript\:\mathopen{}}#1}
\DeclarePairedDelimiterX{\set}[1]\{\}{\renewcommand\given{\nonscript\:\delimsize\vert\nonscript\:\mathopen{}}#1}
\let\Pr\relax
\DeclarePairedDelimiterXPP{\Pr}[1]{\mathbb{P}}[]{}{\renewcommand\given{\nonscript\:\delimsize\vert\nonscript\:\mathopen{}}#1}
\DeclarePairedDelimiterXPP{\PrX}[2]{\mathbb{P}_{#1}}[]{}{\renewcommand\given{\nonscript\:\delimsize\vert\nonscript\:\mathopen{}}#2}
\DeclarePairedDelimiterXPP{\Ex}[1]{\mathbb{E}}[]{}{\renewcommand\given{\nonscript\:\delimsize\vert\nonscript\:\mathopen{}}#1}
\DeclarePairedDelimiterXPP{\ExX}[2]{\mathbb{E}_{#1}}[]{}{\renewcommand\given{\nonscript\:\delimsize\vert\nonscript\:\mathopen{}}#2}
\DeclarePairedDelimiter{\dotprod}{\langle}{\rangle}
\title{Log-Concave Polynomials II: High-Dimensional  Walks and an FPRAS for Counting Bases of a Matroid}
\author{Nima Anari}
\affil{\small Stanford University, \textsf{anari@cs.stanford.edu}}
\author{Kuikui Liu}
\author{Shayan Oveis Gharan}
\affil{\small University of Washington, \textsf{liukui17@cs.washington.edu}, \textsf{shayan@cs.washington.edu}}
\author{Cynthia Vinzant}
\affil{\small North Carolina State University, \textsf{clvinzan@ncsu.edu}}
\begin{document}
\maketitle
	
\begin{abstract}
    We design an FPRAS to count the number of bases of any matroid given by an independent set oracle, and to estimate the partition function of the random cluster model of any matroid in the regime where $0<q<1$. Consequently, we can sample random spanning forests in a graph and (approximately) compute the reliability polynomial of any matroid. We also prove the thirty year old conjecture of Mihail and Vazirani that the bases exchange graph of any matroid has expansion at least 1.
    
 Our algorithm and the proof build on the recent results of Dinur, Kaufman, Mass and Oppenheim \cite{KM17,DK17, KO18} who show that high dimensional walks on simplicial  complexes  mix rapidly if the corresponding localized random walks on 1-skeleton of links of all complexes
  are strong spectral expanders. One of our key observations is a close connection between pure simplicial complexes and multiaffine homogeneous polynomials. Specifically, if $X$ is a pure simplicial complex with positive weights on its maximal faces, we can associate with $X$ a multiaffine homogeneous polynomial $p_{X}$ such that the eigenvalues of the localized random walks on $X$ correspond to the eigenvalues of the Hessian of derivatives of $p_{X}$.
\end{abstract}
\section{Introduction}
	Let $\mu:2^{[n]}\to \R_{\geqslant 0}$ be a probability distribution on the subsets of the set $[n]=\set{1,2,\dots,n}$. We assign a multiaffine polynomial with variables $x_1,\dots,x_n$ to $\mu$,
	\[ g_\mu(x) = \sum_{S\subseteq [n]} \mu(S)\cdot \prod_{i\in S} x_i.\]
	The polynomial $g_\mu$ is also  known as the \emph{generating polynomial} of $\mu$. A polynomial $p\in\R[x_1,\dots,x_n]$ is $d$-homogeneous if every monomial of $p$ has degree $d$. We say $\mu$ is  \emph{$d$-homogeneous} if the polynomial $g_\mu$ is $d$-homogeneous, meaning that $\card{S}=d$ for any $S$ with $\mu(S)>0$.

	A polynomial $p \in \R[x_1,\dots,x_n]$ with nonnegative coefficients is \emph{log-concave} on a subset $K\subseteq \R_{\geqslant 0}^n$ if $\log p$ is a concave function at any point in $K$, or equivalently, its Hessian $\nabla^{2}\log p$ is negative semidefinite on $K$. We say a polynomial $p$ is \emph{strongly log-concave} on $K$ if for any $k\geqslant 0$, and any sequence of integers $1\leqslant i_1,\dots,i_k\leqslant n$, 
	\[ (\partial_{i_1} \cdots \partial_{i_k} p)(x_1,\dots,x_n)\]
	is log-concave on $K$. In this paper, for convenience and clarity, we only work with (strong) log-concavity with respect to the all-ones vector, $\bone$. So, unless otherwise specified, $K=\set{\bone}$ in the above definition. We say the distribution $\mu$ is \emph{strongly log-concave} at $\bone$ if $g_\mu$ is strongly log-concave at $\bone$. The notion of strong log-concavity was first introduced by \textcite{Gur09,Gur10} to study approximation algorithms for mixed volume and multivariate generalizations of Newton's inequalities.

	In this paper,  we show that the ``natural'' Monte Carlo Markov Chain (MCMC) method on the support of a $d$-homogeneous strongly log-concave  distribution $\mu:2^{[n]}\to \R_{\geqslant 0}$ mixes rapidly. This chain can be used to generate random samples from a distribution arbitrarily close to $\mu$.

	The chain $\cM_{\mu}$ is defined as follows. We take the state space of $\cM_\mu$ to be the support of $\mu$, namely $\supp(\mu) = \set{S\subseteq [n]\given \mu(S)\neq 0}$. For  $\tau\in \supp(\mu)$, first we drop an element $i \in \tau$, chosen uniformly at random from $\tau$.  Then, among all sets $\sigma \supset \tau \setminus \set{i}$ in the support of $\mu$ we choose one with probability proportional to $\mu(\sigma)$.

	It is easy to see that $\cM_{\mu}$ is reversible with stationary distribution $\mu$. Furthermore, assuming $g_\mu$ is strongly log-concave, we will see that $\cM_\mu$ is irreducible. We prove that this chain mixes rapidly. More formally, for a state $\tau$ of the Markov chain $\cM$, and $\epsilon > 0$, the \emph{total variation mixing time} of $\cM$ started at $\tau$ with transition probability matrix $P$ and stationary distribution $\pi$ is defined as follows:
	\[ t_{\tau}(\epsilon)=\min\set{t\in \Z_{\geqslant 0}\given \norm{P^t(\tau,\cdot)-\pi}_1 \leqslant \epsilon},\]
 	where $P^t(\tau,\cdot)$ is the distribution of the chain started at $\tau$ at time $t$.

	The following theorem is the main result of this paper.
	\begin{theorem}\label{thm:SLCmixing}
		Let $\mu:2^{[n]} \rightarrow \R_{\geqslant0}$ be a $d$-homogeneous strongly log-concave probability distribution. If $P_{\mu}$ denotes the transition probability matrix of $\cM_{\mu}$ and $X(k)$ denotes the collection of size-$k$ subsets of $[n]$ which are contained in some element of $\supp(\mu)$, then for every $0 \leqslant k \leqslant d-1$, $P_{\mu}$ has at most $\card{X(k)} \leqslant \binom{n}{k}$ eigenvalues of value $>1 - \frac{k+1}{d}$. In particular, $\cM_{\mu}$ has spectral gap at least $1/d$, and if $\tau$ is in the support of $\mu$ and $0<\epsilon<1$, the total variation mixing time of the Markov chain $\cM_{\mu}$ started at $\tau$ is at most
		\[ t_\tau(\eps) \  \leqslant \ d\log\parens*{\frac1{\epsilon\mu(\tau)}}. \]
	\end{theorem}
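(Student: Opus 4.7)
The plan is to view $\supp(\mu)$ as the facets of a pure $d$-dimensional weighted simplicial complex $X$ (with facet weights $\mu$) and invoke the local-to-global spectral expansion machinery of Kaufman--Mass, Kaufman--Oppenheim, and Dinur--Kaufman \cite{KM17,KO18,DK17}. The chain $\cM_\mu$ is then precisely the down-up walk at the top level $d$ of $X$, and its spectrum is controlled by the 1-skeleton walks on the links $\link_X(\tau)$ for all faces $\tau \in X(k)$ with $k \leqslant d-2$. A preliminary observation is that strong log-concavity descends to links: for any face $\tau$, the polynomial $\partial_\tau g_\mu = \prod_{i \in \tau} \partial_i g_\mu$ is $(d - \card{\tau})$-homogeneous, has nonnegative coefficients, and is log-concave at $\bone$ by the very definition of strong log-concavity.

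The main technical step is to show that for every such link the 1-skeleton random walk has second eigenvalue at most $0$. Setting $q := \partial_\tau g_\mu$ and $\ell := d - \card{\tau}$, a short computation using multiaffineness of $g_\mu$ identifies the 1-skeleton transition matrix with $\tfrac{1}{\ell - 1}\diag(\nabla q(\bone))^{-1}\Hess{q}(\bone)$, whose stationary distribution is proportional to $\nabla q(\bone)$. Its top eigenvalue $1$ is witnessed by $\bone$ via the Euler identity $\Hess{q}(\bone)\, \bone = (\ell - 1)\, \nabla q(\bone)$. For the remaining eigenvalues, the key inequality is that, for any $\ell$-homogeneous polynomial $q$ with nonnegative coefficients that is log-concave at $\bone$, the log-concavity condition $\Hess{\log q}(\bone) \preceq 0$ rearranges to
\[
\Hess{q}(\bone) \ \preceq \ \frac{1}{q(\bone)}\, \nabla q(\bone)\, \nabla q(\bone)^{\top},
\]
a rank-one PSD matrix, so $\Hess{q}(\bone)$ has at most one positive eigenvalue. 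Conjugating by $\diag(\nabla q(\bone))^{1/2}$ preserves the inertia, and since the sole positive eigenvalue has been identified with the Perron eigenvalue $1$, all other eigenvalues of the 1-skeleton walk must be $\leqslant 0$.

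With this uniform local expansion ($\lambda_k \leqslant 0$ at every level $k \leqslant d-2$) in hand, I would invoke the refined local-to-global theorem of \cite{KM17, KO18, DK17}, which in this regime yields the claimed level-by-level spectral control on $P_\mu$: at most $\card{X(k)}$ eigenvalues of $P_\mu$ exceed $1 - (k+1)/d$. The $k = 0$ case gives spectral gap $\geqslant 1/d$, and the stated mixing time follows from the standard $t_\tau(\epsilon) \leqslant \gamma^{-1} \log(1/(\epsilon \mu(\tau)))$ bound for reversible Markov chains. The principal obstacle will be the Hessian-eigenvalue argument of the second step together with its clean identification with the 1-skeleton walk transition matrix; once that is in place, everything else is either a direct computation or a straightforward appeal to the existing Kaufman--Mass / Kaufman--Oppenheim framework.
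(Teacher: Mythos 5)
Your proposal is correct and follows essentially the same route as the paper: construct the weighted complex $X^{g_\mu}$ with facet weights given by the coefficients of $g_\mu$, identify the non-lazy 1-skeleton walk on each link with the normalized Hessian $\frac{1}{\ell-1}\diag(\nabla q(\bone))^{-1}\Hess{q}(\bone)$ of the corresponding derivative $q=\partial_\tau g_\mu$, deduce $0$-local spectral expansion from the rank-one bound $\Hess{q}(\bone)\preccurlyeq \frac{\nabla q(\bone)\nabla q(\bone)^{\intercal}}{q(\bone)}$, and conclude via the local-to-global eigenvalue bound and the standard spectral mixing-time estimate. The only differences are cosmetic: the paper reproves the needed $\lambda=0$ case of the Kaufman--Oppenheim theorem (\cref{thm:localexpander}) rather than invoking it as a black box, and it makes explicit that $P_\mu=P^{\vee}_{d}$ is PSD and shares its nonzero spectrum with $P^{\wedge}_{d-1}$ (\cref{lem:upperlower}), so negative eigenvalues cause no loss and the absolute spectral gap used in \cref{thm:mixingtime} is indeed at least $1/d$.
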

	%\noindent In fact, we prove that the spectral gap of the chain $\cM_\mu$ is at least $1/d$ (see \cref{subsec:randomwalks} for the definition).
	To state the key corollaries of this theorem, we will need the following definition.
	\begin{definition}
		Given a domain set $\Omega$, which is compactly represented by, say, a membership oracle, and a nonnegative weight function $w:\Omega \rightarrow \R_{\geqslant0}$, a \emph{fully polynomial-time randomized approximation scheme (FPRAS)} for computing the partition function $Z = \sum_{x \in \Omega} w(x)$ is a randomized algorithm that, given an error parameter $0 < \epsilon < 1$ and error probability $0 < \delta < 1$, returns a number $\tilde{Z}$ such that ${\rm Prob}[(1 - \epsilon)Z \leqslant \tilde{Z} \leqslant (1 + \epsilon)Z] \geqslant 1 - \delta$. The algorithm is required to run in time polynomial in the problem input size, $1/\epsilon$, and $\log(1/\delta)$.
	\end{definition}
	Equipped with this definition, we can now concisely state the main applications of \cref{thm:SLCmixing}. \cref{thm:SLCmixing} gives us an algorithm to efficiently sample from a distribution which approximates $\mu$ closely in total variation distance. By the equivalence between approximate counting and approximate sampling for self-reducible problems \cite{JVV86}, this gives an FPRAS for each of the following:
	\begin{enumerate}
	    \item counting the bases of a matroid, and
	    \item estimating the partition function of the random cluster model for a new range of parameter values
	\end{enumerate}
	For real linear matroids, we also give an algorithm for estimating the partition function of a generalized version of a $k$-determinantal point process.
	Note that these problems are all instantiations of the following: estimate the partition function of some efficiently computable nonnegative weights on bases of a matroid. Furthermore, as the restriction and contraction of a matroid by a subset of the ground set are both (smaller) matroids, problems of this form are indeed self-reducible. In the following sections we discuss these applications in greater depth.

%{\cynthia{C: Why do we actually get a FPRAS from running this Markov chain?}}

\subsection{Counting Problems on Matroids}
	Let $M=([n],\cI)$ be an arbitrary matroid on $n$ elements (see \cref{subsec:matroids}) of rank $r$. Let $\mu$ be the uniform distribution on the bases of the matroid $M$. It follows that $\mu$ is $r$-homogeneous. Using the Hodge-Riemann relation proved by \textcite{AHK18}, a subset of the authors proved \cite{AOV18} that for any matroid $M$, $\mu$ is strongly log-concave.\footnote{Indeed, in \cite{AOV18}, it is shown that $g_{\mu}$ satisfies a seemingly stronger property known as ``complete log-concavity'', namely that $\partial_{v_{1}} \dotsb \partial_{v_{k}} g_M$ is log-concave (at $\bone$) for any sequence of \emph{directional} derivatives $\partial_{v_{1}}\dotsb \partial_{v_{k}}$ with nonnegative directions $v_{1},\dots,v_{k} \in \R_{\geqslant0}^{n}$. We will prove in a future companion paper that complete log-concavity is equivalent to strong log-concavity.} This implies that the chain $\cM_{\mu}$ converges rapidly to stationary distribution. This gives the first polynomial time algorithm to generate a uniformly random base of a matroid. Note that to run $\cM_{\mu}$ we only need an oracle to test whether a given set $S\subseteq [n]$ is an independent set of $M$. Therefore, with only polynomially many queries (in $n,r,\log(1/\eps)$) we can generate a random base of $M$.
	\begin{corollary}
		For any matroid $M=([n],\cI)$ of rank $r$, any basis $B$ of $M$ and $0<\eps<1$, the mixing time of the Markov chain $\cM_{\mu}$ 
		starting at $B$ is at most
		\[ t_B(\eps) \ \leqslant \ r\log(n^{r}/\epsilon) \leqslant  \  r^2 \log(n/\eps).\]
	\end{corollary}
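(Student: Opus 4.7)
The plan is to apply \cref{thm:SLCmixing} directly to the uniform distribution $\mu$ on the bases of $M$, after confirming its hypotheses. First I would observe that since $\mu$ is supported on the bases of $M$, which all have size $r$, the generating polynomial $g_\mu$ is $r$-homogeneous, so the homogeneity hypothesis of \cref{thm:SLCmixing} holds with $d = r$. Next, the paragraph preceding the corollary explicitly invokes the result of \textcite{AOV18} (building on the Hodge--Riemann relations of \textcite{AHK18}) to conclude that $g_\mu$ is strongly log-concave at $\bone$, so the second hypothesis of \cref{thm:SLCmixing} is also satisfied.

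Given these facts, \cref{thm:SLCmixing} immediately gives the mixing time bound
\[ t_B(\epsilon) \ \leqslant \ r\log\parens*{\frac{1}{\epsilon \mu(B)}}. \]
It then remains to lower bound $\mu(B)$. Since $\mu$ is uniform on bases and the total number of bases of $M$ is at most the number of $r$-subsets of $[n]$, we have $\mu(B) = 1/\card{\cB(M)} \geqslant 1/\binom{n}{r} \geqslant 1/n^{r}$. Substituting this yields
\[ t_B(\epsilon) \ \leqslant \ r \log(n^r/\epsilon), \]
which is the first claimed inequality. The second inequality $r\log(n^r/\epsilon) \leqslant r^2\log(n/\epsilon)$ follows by writing $\log(n^r/\epsilon) = r\log n + \log(1/\epsilon) \leqslant r(\log n + \log(1/\epsilon)) = r\log(n/\epsilon)$ for $r \geqslant 1$.

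There is essentially no genuine obstacle here; the proof is a one-line deduction from \cref{thm:SLCmixing} together with the strong log-concavity of the basis generating polynomial established in \cite{AOV18}. The only minor bookkeeping is the estimate $\mu(B) \geqslant 1/n^r$, for which $\binom{n}{r} \leqslant n^r$ suffices and no finer matroid-theoretic input is required.
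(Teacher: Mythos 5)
Your proposal is correct and follows the paper's own route exactly: apply \cref{thm:SLCmixing} with $d=r$ to the uniform distribution on bases (whose strong log-concavity is quoted from \cite{AOV18}, and later reproved in \cref{thm:gMclc}), then bound $\mu(B)\geqslant 1/\binom{n}{r}\geqslant 1/n^{r}$, which is precisely the one-line justification the paper gives. The final estimate $r\log(n^{r}/\eps)\leqslant r^{2}\log(n/\eps)$ is also handled the same way, so there is nothing to add.
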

	To prove this we simply used the fact that a matroid of rank $r$ on $n$ elements has at most $\binom{n}{r}\leqslant n^r$ bases. There are several immediate consequences of the above corollary. Firstly, by equivalence of approximate counting and approximate sampling for self-reducible problems \cite{JVV86} we can count the number of bases of any matroid given by an independent set oracle up to a $1+\eps$ multiplicative error in polynomial time.
	\begin{corollary}
		There is a randomized algorithm that for any matroid $M$ on $n$ elements with rank $r$ given by an independent set oracle, and any $0<\eps<1$, counts the number of bases of $M$ up to a multiplicative factor of $1\pm\eps$ with probability at least $1-\delta$ in time polynomial in $n,r,1/\eps,\log(1/\delta)$.
	\end{corollary}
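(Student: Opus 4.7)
The plan is to invoke the standard reduction of \cite{JVV86} from approximate counting to approximate (near-uniform) sampling for self-reducible problems. The preceding corollary already supplies a sampler from a distribution $\eps$-close in total variation to the uniform distribution on bases of $M$, running in time $\poly(n, r, \log(1/\eps))$, so all that remains is to verify self-reducibility and to set internal parameters carefully.

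Self-reducibility is witnessed by matroid contraction. Pick an element $e \in [n]$ that is neither a loop nor a coloop (loops belong to no basis and may be discarded; coloops belong to every basis and drop the rank by $1$ trivially). Writing $\cB(M)$ for the set of bases and $M/e$ for the contraction, the bases of $M$ containing $e$ are in bijection with the bases of $M/e$ via $B \mapsto B \setminus \set{e}$, so
\[ \card{\cB(M)} \ = \ \frac{\card{\cB(M/e)}}{\PrX{B \sim \mu_{M}}{e \in B}}. \]
Iterating along an independent set $\set{e_{1},\dots,e_{r}}$ telescopes $\card{\cB(M)}$ into a product of $r$ such marginal probabilities, terminating at a rank-zero matroid with exactly one (empty) basis. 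Independent-set oracles for each contraction are obtainable from the oracle for $M$ at the cost of one query, since $S$ is independent in $M/T$ iff $S \cup T$ is independent in $M$.

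To make each ratio efficiently estimable, $e_{i}$ must have marginal bounded away from zero in the corresponding contraction. The standard averaging argument $\sum_{e} \PrX{B \sim \mu_{M'}}{e \in B} = \rank(M')$ guarantees that some element has marginal at least $\rank(M')/n \geqslant 1/n$; a polynomial number of samples from the chain $\cM_{\mu_{M'}}$ both identifies such an $e_{i}$ with high probability and estimates its marginal to multiplicative accuracy $1 \pm \eps/(2r)$ by a Chernoff bound.

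The main obstacle is combining the various sources of error. Budgeting multiplicative error $\eps/(2r)$ per level makes the final product of $r$ estimated ratios lie in $1 \pm \eps$; the sampler's total variation slack must be set to $O(\eps/(rn))$ so that the bias it introduces in each estimated marginal is dominated by the Monte Carlo noise; and a union bound with per-level failure probability $\delta/r$ yields overall confidence $1 - \delta$. The resulting $O((r/\eps)^{2}\log(r/\delta))$ samples per level, each generated in time $\poly(n, r, \log(rn/\eps))$ by the preceding corollary, keep the total runtime polynomial in $n, r, 1/\eps, \log(1/\delta)$. No new ideas beyond the JVV86 reduction are required once the sampler from the preceding corollary is in hand.
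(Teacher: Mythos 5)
Your proposal is correct and is exactly the route the paper takes: the paper simply invokes the \cite{JVV86} equivalence of approximate counting and approximate sampling, with self-reducibility witnessed by matroid contraction/restriction, on top of the sampler from the preceding corollary; you have merely written out the standard reduction in detail. One small quantitative note: since the chosen element's marginal is only guaranteed to be at least $r/n$, the per-level sample count should carry an extra factor on the order of $n/r$ beyond your $O((r/\eps)^{2}\log(r/\delta))$, which of course does not affect the polynomial-time conclusion.
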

	As an immediate corollary for any $1\leqslant k\leqslant r$ we can count the number of independent sets of $M$ of size $k$. This is because if we truncate $M$ to independent sets of size at most $k$ it remains a matroid. As a consequence we can generate uniformly random forests in a given graph, and compute the reliability polynomial
	\begin{align*}
	    C_{M}(p) = \sum_{S \subseteq [n] : \rank(S) = r} (1 - p)^{\card{S}} p^{n - \card{S}}
	\end{align*}
	for any matroid and $0 \leqslant p \leqslant 1$, all in polynomial time. Note this latter fact follows from the ability to count the number of independent sets of a fixed size, as the complements of rank-$r$ subsets $S \subseteq [n]$ are precisely the independent sets of the dual of $M$. Prior to our work, we could only compute the reliability polynomial for graphic matroids due to a recent work of Guo and Jerrum \cite{GJ18a}. 
	
	One can associate a graph $G_M$ to any matroid $M$, called \emph{the bases exchange} graph. This graph has a vertex for every basis of $M$ and two bases $B,B'$ are connected by an edge if $\card{B\Delta B'}=2$. It follows by the bases exchange property of matroids that this graph is connected. For an unweighted graph $G=(V,E)$, the expansion of a set $S\subset V$ and the graph $G$ are defined as 
	\[\expansion(S)=\frac{\card{E(S,\overline{S})}}{\card{S}} 
	\ \ \ \ \text{ and } \ \ \ \
	\expansion(G)=\min_{S:\card{S}\leqslant \card{\overline{S}}} \expansion(S).
	\]
	\Textcite{MV89} conjectured that the bases exchange graph has expansion at least one, i.e., that $\expansion(G_M)\geqslant 1$, for any matroid $M$. It turns out that the bases exchange graph is closely  related to the Markov chain $\cM_{\mu}$. The following theorem is an immediate consequence of the above corollary.
	\begin{theorem}\label{thm:basesexchange}
	For any matroid $M$, the expansion of the bases exchange graph is at least $1$, $\expansion(G_M)\geqslant 1$.	
	\end{theorem}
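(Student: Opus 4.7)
The plan is to apply Theorem~\ref{thm:SLCmixing} to the chain $\cM_\mu$ where $\mu$ is the uniform distribution on bases of $M$, thereby obtaining a spectral gap of at least $1/r$, and then convert this spectral bound into an edge expansion bound on $G_M$ via the Poincar\'e inequality applied to the indicator of a set of bases. Since $\mu$ is $r$-homogeneous and strongly log-concave by the AOV18 result cited above, Theorem~\ref{thm:SLCmixing} gives that the transition matrix $P_\mu$ has spectral gap $\gamma \geqslant 1/r$.

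For the second step, let $S$ be any set of bases, and write $\cB$ for the set of all bases of $M$. Applying $\mathrm{Var}_\mu(\bone_S) \leqslant \mathcal{E}(\bone_S,\bone_S)/\gamma$ with $\gamma \geqslant 1/r$ and using reversibility yields
\[
\sum_{\tau \in S,\ \sigma \notin S} \mu(\tau)\, P_\mu(\tau,\sigma) \ \geqslant\ \frac{1}{r}\,\mu(S)\bigl(1-\mu(S)\bigr).
\]
Since $\mu(\tau) = 1/\card{\cB}$ for every basis $\tau$, and for $\card{S} \leqslant \card{\overline{S}}$ the factor $1 - \mu(S) \geqslant 1/2$, the right-hand side is at least $\card{S}/(2r\card{\cB})$.

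The third step is to bound the same quantity from above in terms of $\card{E(S,\overline{S})}$. Unpacking the definition of $\cM_\mu$, for distinct neighbors $\tau \sim \sigma$ in $G_M$ with $\tau \setminus \sigma = \set{i}$, we have $P_\mu(\tau,\sigma) = \frac{1}{r\, N(\tau\setminus\set{i})}$, where $N(T)$ denotes the number of bases containing $T$. The key observation is $N(\tau \setminus \set{i}) \geqslant 2$, since the common subset $\tau \cap \sigma = \tau \setminus \set{i}$ is contained in the two distinct bases $\tau$ and $\sigma$. Thus $P_\mu(\tau,\sigma) \leqslant 1/(2r)$ for every edge of $G_M$, giving
\[
\sum_{\tau \in S,\ \sigma \notin S} \mu(\tau)\, P_\mu(\tau,\sigma) \ \leqslant\ \frac{\card{E(S,\overline{S})}}{2r\,\card{\cB}}.
\]
Combining this upper bound with the Poincar\'e lower bound yields $\card{E(S,\overline{S})} \geqslant \card{S}$, which is exactly $\expansion(G_M) \geqslant 1$.

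The one delicate point I expect to pay attention to is the factor-of-two saving from $N(\tau\cap\sigma)\geqslant 2$: with only the trivial bound $N\geqslant 1$ (hence $P_\mu(\tau,\sigma) \leqslant 1/r$) the very same argument would recover only $\expansion(G_M)\geqslant 1/2$, so hitting the Mihail--Vazirani constant of $1$ really does rely on noticing that the common intersection of two adjacent bases in $G_M$ extends to at least two bases. Apart from this, the argument is a direct combination of the spectral gap of Theorem~\ref{thm:SLCmixing} with the explicit formula for $P_\mu$.
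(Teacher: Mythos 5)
Your proposal is correct and follows essentially the same route as the paper: the spectral gap of $1/r$ from \cref{thm:SLCmixing}, converted to a conductance bound (your Poincar\'e inequality on $\bone_S$ is exactly the easy direction of Cheeger's inequality, \cref{thm:Cheeger}, that the paper invokes), combined with the pointwise bound $P_\mu(\tau,\sigma)\leqslant \frac{1}{2r}$, which is precisely the paper's Fact that the common $(r-1)$-set of two adjacent bases lies in at least two bases. The factor-of-two observation you flag is indeed the same crucial point the paper isolates.
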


	\subsection{The Random Cluster Model}\label{subsec:RandomCluster}
	Another application of this theory is estimating the partition function of the random cluster model. For a matroid $M = ([n],\cI)$ of rank $r$ and parameters $p, q$, the partition function of the random cluster model from statistical mechanics due to Fortuin and Kasteleyn \cite{For71, FK72I, For72II, For72III} is the following polynomial function associated to $M$,
	\begin{align*}
	Z_M(p,q)=\sum_{S\subseteq [n]} q^{r+1-\rank(S)} p^{\card{S}}
	\end{align*}
	where $\rank(S)$ is the size of the largest independent set contained in $S$. We note that typically one scales each term by $(1-p)^{n-\card{S}}$ but up to a normalization factor (and change of variables) the two polynomials are equivalent. We refer interested readers to a recent book of  \textcite{Grim09} for further information. Typically, one considers the special case where $M$ is a graphic matroid, in which case the exponent of $q$ is simply the number of connected components of $S$. To the best of our knowledge, prior to this work, one could only compute $Z_M$ when $q=2$ because of the close connection to the Ising model \cite{JS93, GJ17}. Our next result is a polynomial time algorithm that estimates $Z_{M}(p,q)$ for any $0< q\leqslant 1$ and $p\geqslant 0$.

	\begin{theorem}\label{thm:randomclusterSLC}
		For a matroid $M$ with rank function $\rank:2^{[n]}\rightarrow \Z_{\geqslant 0}$, parameter $0 < q \leqslant 1$ and choice of ``external field'' $\bm{\lambda} = (\lambda_{1},\dots,\lambda_{n}) \in \R_{>0}^{n}$, the polynomial
		\begin{align*}
		    f_{M,k,q}(x_{1},\dots,x_{n}) = \sum_{S \in \binom{[n]}{k}} q^{-\rank(S)} \prod_{i\in S}\lambda_{i}x_i
		\end{align*}
		is strongly log-concave.
	\end{theorem}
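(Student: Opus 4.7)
The plan is to prove strong log-concavity of $f_{M, k, q}$ by reducing, via a matroid-theoretic recursion under contraction, to an explicit rank-one spectral bound in the degree-two case, and then lifting this bound to all levels using Oppenheim's trickle-down on the associated weighted simplicial complex.

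First I would record the recursive identity: using $\rank_M(T \cup S') = \rank_M(T) + \rank_{M/T}(S')$ for disjoint $T, S'$, for any $T \subseteq [n]$ of distinct indices,
\[
\partial^T f_{M, k, q}(x) \;=\; \lambda^T\, q^{-\rank_M(T)} \cdot f_{M/T,\, k - |T|,\, q}(x_{[n] \setminus T}).
\]
Thus $\{f_{M, k, q}\}$ is closed under directional differentiation up to positive scalars, so strong log-concavity of $f_{M, k, q}$ reduces to showing that the Hessian of $f_{M', k', q}$ at $\bone$ has at most one positive eigenvalue, for every matroid $M'$ arising as a contraction and every $2 \leqslant k' \leqslant k$.

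Next, for the degree-two base case $k' = 2$, I would argue by an explicit matrix computation. After conjugating by $\diag(\lambda)$ (which preserves signature by Sylvester's law), it suffices to show that the symmetric matrix $C$ defined by $C(i, j) = q^{-\rank_{M'}(\{i, j\})}$ for $i \neq j$ and $C(i, i) = 0$ has at most one positive eigenvalue. The entries of $C$ depend only on the partition of $[n']$ into the loops $L$ of $M'$ and the parallel classes $U_1, \dots, U_s$ of non-loops, taking the values $1, q^{-1}, q^{-1}, q^{-2}$ respectively on off-diagonal positions of types $(L, L)$, $(L, U_m)$, $(U_m, U_m)$, and $(U_m, U_{m'})$ for $m \neq m'$. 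Setting $v(i) = 1$ for $i \in L$ and $v(i) = q^{-1}$ for $i$ in any non-loop class, a direct entrywise check shows that $v v^{\top} - C$ is block-diagonal across $\{L, U_1, \dots, U_s\}$: the $L$-block is the positive diagonal $I_L$, while each $U_m$-block equals $\tfrac{1 - q}{q^2} J + \tfrac{1}{q} I$, a nonnegative combination of $I$ and the rank-one matrix $J$, hence PSD for $0 < q \leqslant 1$. Thus $C \preceq v v^{\top}$, and $C$ has at most one positive eigenvalue.

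The main obstacle is the case $k' > 2$, where the Hessian entries of $f_{M', k', q}$ are weighted sums over $(k' - 2)$-element subsets and a naive summation of the rank-one bounds above is much too loose. Here I would interpret $f_{M', k', q}$ as the weighted generating polynomial of the pure simplicial complex $X_{M', k'}$ with maximal faces $\binom{[n']}{k'}$ and weights $q^{-\rank_{M'}(S)} \lambda^S$; the recursion shows that the link at any face $T \in X_{M', k'}$ is isomorphic, up to positive rescaling, to $X_{M'/T,\, k' - |T|}$, so the base case applied to the contraction $M'/T$ at each face $T$ of size $k' - 2$ establishes that the weight matrix of the $1$-skeleton at every top-level link has at most one positive eigenvalue. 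I would then invoke Oppenheim's trickle-down theorem---a cornerstone of the Dinur-Kaufman-Mass-Oppenheim local-to-global framework leveraged by the paper---to propagate this spectral bound from the top-level links down to the link at the empty face, namely the $1$-skeleton of $X_{M', k'}$ itself; since the threshold $\alpha = 0$ (corresponding to ``at most one positive eigenvalue'') is a fixed point of the trickle-down recursion $\alpha \mapsto \alpha/(1 - \alpha)$, the bound is preserved at every intermediate level. This yields the Hessian bound for $f_{M', k', q}$ at all $k'$, completing the proof.
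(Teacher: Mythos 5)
Your proposal is correct and, in substance, follows the same route as the paper. The reduction $\partial^T f_{M,k,q}=\bm{\lambda}^T q^{-\rank(T)}f_{M/T,\,k-\card{T},\,q}$ together with the quadratic-case computation (partition the ground set of $M'/T$ into loops and parallel classes, take $v$ with entries $1$ on loops and $q^{-1}$ on non-loops, and check that $vv^{\intercal}-C$ is blockwise PSD for $0<q\leqslant 1$) is exactly the paper's argument, down to the choice of test vector; and invoking Oppenheim's trickle-down on the weighted complex $X_{M',k'}$ is the same step the paper carries out through \cref{thm:SLCdeg2}, which is precisely the polynomial-language form of \cref{thm:opp}. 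The one hypothesis you never verify is the connectivity requirement of the trickle-down theorem: \cref{thm:opp} needs the $1$-skeleton of \emph{every} link, at every codimension and not just codimension two, to be connected --- this is the indecomposability condition of \cref{thm:SLCdeg2}, shown to be necessary in \cref{prop:LCtoDecomp} --- and without it the ``fixed point $\alpha=0$'' propagation is false (a disjoint union of two simplices satisfies the codimension-two spectral condition but its global $1$-skeleton has $\lambda_2=1$). In the present setting the check is a one-liner, since every $S\in\binom{[n']}{k'}$ carries the strictly positive weight $q^{-\rank(S)}\bm{\lambda}^{S}$, so each link's $1$-skeleton is a complete graph with positive edge weights, hence connected; add that observation (it is the analogue of the paper's remark that $\partial_S f$ contains every admissible monomial) and your argument is complete.
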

	Together with \cref{thm:SLCmixing}, this gives an FPRAS for estimating $f_{M,k,q}(\bone)$ given an independence oracle for the matroid $M$. Estimating $Z_{M}(p, q)$ then follows as
	\begin{align*}
	    Z_{M}(p,q) = q^{r+1}\sum_{k=0}^{n} p^{k}f_{M,k,q}(\bone)
	\end{align*}
	and each term is nonnegative. In fact, the polynomial $Z_{M}$ is closely related to the Tutte polynomial
	\begin{align*}
	    T_{M}(x,y) = \sum_{S \subseteq [n]} (x - 1)^{r - \rank(S)}(y - 1)^{\card{S} - \rank(S)}.
	\end{align*}
	Indeed, we can write
	\begin{align*}
	    T_{M}(x,y) = \frac{1}{(x-1)(y-1)^{r+1}}Z_{M}\parens*{y-1,(x-1)(y-1)}
	\end{align*}
	Hence, an FPRAS for estimating $Z_{M}(p,q)$ for $p \geqslant 0$ and $0 \leqslant q \leqslant 1$ gives an FPRAS for estimating $T_{M}(x, y)$ in the region described by the inequalities $y \geqslant 1$ and $0 \leqslant (x - 1)(y - 1) \leqslant 1$.

%\cynthia{C: what happens to the approximation errors under all these transformations? Also, we never mention these polynomials $T_M$ and $C_M$ again. Do we really want to include them here? And for $C_M$, in the abstract?}

%\begin{remark}
%It also common in the literature (see \cite{Grim09} for example), to define $Z_{M}(p,q)$ as
%\begin{align*}
 %   \sum_{S \subseteq [n]} p^{\card{S}} (1 - p)^{\card{E \setminus S}}q^{\rank(S)}
%\end{align*}
%However, these definitions are all equivalent up to a change of variable, %and all of our results apply.
%\end{remark}

	\subsection{Determinantal Distributions on Real Linear Matroids}
	Finally, we show that the class of homogeneous multiaffine strongly log-concave polynomials is closed under raising all coefficients to a fixed exponent less than $1$.
	\begin{theorem}\label{thm:cpow}
	Let $f= \sum_{S \subseteq [n]} c_{S}\prod_{i\in S}x_i$ be a homogeneous degree-$k$ multiaffine strongly log-concave polynomial. Then $f_{\alpha} = \sum_{S \subseteq [n]} c_{S}^{\alpha} \prod_{i\in S}x_i$ is strongly log-concave for every $0 \leqslant \alpha \leqslant 1$.
	\end{theorem}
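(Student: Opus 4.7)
The plan is to reduce \cref{thm:cpow} to a spectral lemma about entrywise (Hadamard) powers of symmetric matrices. I would invoke the following standard characterization of strong log-concavity (equivalent to the Lorentzian property, extractable from \cite{AOV18}): a multiaffine homogeneous polynomial $p = \sum_{S} c_{S} \prod_{i \in S} x_{i}$ of degree $k$ with nonnegative coefficients is strongly log-concave at $\bone$ if and only if for every $T \subseteq [n]$ with $\card{T} = k - 2$, the symmetric matrix $A^{T}$ given by $A^{T}[i,j] = c_{T \cup \{i,j\}}$ for $i \neq j$ (and $0$ on the diagonal) has at most one positive eigenvalue. This reduces the problem to the degree-$2$ case, since $\partial_{T} p$ is a quadratic with Hessian $A^{T}$, and log-concavity at $\bone$ of a nonnegative-coefficient quadratic form is equivalent to its Hessian having at most one positive eigenvalue---a direct computation using $\Hess{\log q}(\bone) = \Hess{q}(\bone)/q(\bone) - \nabla q(\bone)\nabla q(\bone)^{T}/q(\bone)^{2}$ together with a rank-one / Schur-complement argument.

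The key observation is then immediate: the matrix attached to $f_{\alpha}$ at each $T$ is simply the entrywise $\alpha$-power, $A_{\alpha}^{T}[i,j] = c_{T \cup \{i,j\}}^{\alpha} = (A^{T}[i,j])^{\alpha}$, so writing $M^{\circ \alpha}$ for the entrywise $\alpha$-power of a matrix $M$, we have $A_{\alpha}^{T} = (A^{T})^{\circ \alpha}$. Thus \cref{thm:cpow} follows from the following matrix lemma: \emph{if $M$ is a symmetric matrix with nonnegative entries and at most one positive eigenvalue, then $M^{\circ \alpha}$ has at most one positive eigenvalue for every $\alpha \in [0, 1]$}.

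To establish this lemma I would exploit the integral representation
\begin{equation*}
    t^{\alpha} = \frac{\sin(\alpha \pi)}{\pi} \int_{0}^{\infty} \frac{t}{t + s} \cdot s^{\alpha - 1} \, ds, \qquad t > 0, \ \alpha \in (0, 1),
\end{equation*}
which (under the convention $0^{\alpha} = 0$) exhibits $M^{\circ \alpha}$ as a weighted positive integral of matrices $N_{s}$ with $N_{s}[i,j] = M[i,j]/(M[i,j] + s)$. The plan is twofold: first, show each $N_{s}$ inherits the ``at most one positive eigenvalue'' property from $M$ using Perron--Frobenius (both matrices share the same zero pattern and thus a common positive Perron eigenvector) combined with the concavity of $t \mapsto t/(t+s)$ on $[0, \infty)$; second, show the property is preserved under positive integration via upper semicontinuity of the second largest eigenvalue plus a Riemann-sum approximation that exploits the shared Perron eigendirection.

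The main obstacle is the first step: the entrywise map $t \mapsto t/(t+s)$ is nonlinear, so one cannot just invoke spectral monotonicity, and the argument must delicately combine nonnegativity, Perron--Frobenius, and concavity. If this direct route turns out too technical, a fallback is a continuous deformation along $\alpha \in [0, 1]$: the endpoints $\alpha = 1$ (hypothesis) and $\alpha \to 0^{+}$ (where $f_{\alpha}$ recovers the basis generating polynomial of the support matroid of $f$, strongly log-concave by \cite{AOV18}) are both safe, the eigenvalues of $M^{\circ \alpha}$ depend continuously on $\alpha$, and one argues the second eigenvalue cannot cross zero in between without violating the combinatorial constraint that the support of a strongly log-concave polynomial is M-convex.
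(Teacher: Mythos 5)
Your reduction to a matrix statement is in the right spirit (it is essentially how the paper proceeds, via \cref{thm:SLCdeg2}), but note that the ``if and only if'' you state is false as written: the quadratic condition alone does not characterize strong log-concavity. For example, $x_1x_2x_3+x_4x_5x_6$ passes the test on every $A^T$ yet is decomposable, hence not log-concave at $\bone$ by \cref{prop:LCtoDecomp}. You also need the indecomposability/support condition of \cref{thm:SLCdeg2}; this is a repairable slip here, because $f_\alpha$ and $f$ have the same support, so indecomposability of all derivatives transfers from $f$ to $f_\alpha$ for free.

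The serious gap is in your matrix lemma. First, matrices with the same zero pattern do \emph{not} share a Perron eigenvector, so the premise underlying both of your steps is wrong, and step (a) (that each $N_s$ with entries $M_{ij}/(M_{ij}+s)$ has at most one positive eigenvalue) is left unproven --- it is of essentially the same difficulty as the lemma itself, and ``concavity plus Perron--Frobenius'' is not an argument. Second, and fatally for step (b), the class of symmetric matrices with at most one positive eigenvalue is \emph{not} closed under positive sums or integrals: $E_{12}+E_{21}$ and $E_{34}+E_{43}$ each have exactly one positive eigenvalue, but their sum has two. So even granting (a), integrating the $N_s$ against $s^{\alpha-1}\,ds$ proves nothing; upper semicontinuity of $\lambda_2$ and Riemann sums cannot rescue this without first pinning all the matrices to a single hyperplane. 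That is exactly the missing structural move: one must first pass to a statement of the form ``negative semidefinite on the orthogonal complement of one \emph{fixed} positive vector'' (e.g.\ conditional negative semidefiniteness after the diagonal scaling $D_v^{-1}MD_v^{-1}$, which does commute with entrywise powers), since that class \emph{is} a convex cone; this is what the paper's proof accomplishes by writing $\nabla^2\partial_T f = vv^\intercal - A$ with $A\succcurlyeq 0$ and $v>0$ (\cref{thm:perron}), factoring $c_{T\cup\{i,j\}}^\alpha=v_i^\alpha v_j^\alpha\bigl(1-\tfrac{A_{ij}}{v_iv_j}\bigr)^\alpha$, Taylor-expanding $(1-x)^\alpha$ so that all nonconstant terms have one sign, and invoking the Schur product theorem (\cref{thm:schur}) to exhibit $\nabla^2\partial_T f_\alpha$ as (rank-one PSD) minus (PSD). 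Finally, your fallback deformation argument is not a proof: strong log-concavity is not an open condition, $\lambda_2=0$ is perfectly consistent with it, and the support of $f_\alpha$ is independent of $\alpha\in(0,1]$, so M-convexity of the support provides no obstruction to $\lambda_2$ crossing zero as $\alpha$ varies.
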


	We use the above theorem to design a sampling algorithm for determinantal point processes. A \emph{determinantal point process (DPP)} on a set of elements $[n]$ is a probability distribution $\mu:2^{[n]}\rightarrow \R_{\geqslant 0}$ identified by a positive semidefinite matrix $L \in  \R^{n\times n}$ where for any $S\subseteq [n]$ we have
	\[\mu(S)  \ \propto \  \det(L_S),\]
	where $L_S$ is the principal sub-matrix of $L$ indexed by the elements of $S$. Determinantal point processes are fundamental to the study of a variety of tasks in machine learning, including text summarization, image search, news threading, and diverse feature selection \cite[see, e.g.,][]{KT12}. A \emph{$k$-determinantal point process ($k$-DPP)}  is a determinantal point process conditioned on the sets $S$ having size $k$.

	Given a positive semidefinite matrix $L$, let $\mu$ be the corresponding $k$-DPP. We have
	\[ g_\mu(x)\propto \sum_{S\in \binom{[n]}{k}} \det(L_S)\cdot \prod_{i\in S}x_i.\]
	It turns out that the above polynomial is real stable and so it is strongly log-concave over $\R^n_{\geqslant 0}$ \cite[see, e.g.,][]{AOR16}. \Textcite{AOR16} show that a natural Markov chain with the Metropolis rule mixes rapidly and generates a random sample of $\mu$. The above theorem immediately implies the following log-concavity result.
	\begin{corollary}\label{cor:dpp}
		For every positive semidefinite matrix $L \succcurlyeq 0$ and exponent $0 \leqslant \alpha \leqslant 1$, the polynomial
		\begin{align*}
		    \sum_{S\in \binom{[n]}{k}} \det(L_S)^{\alpha} \prod_{i\in S}x_i
		\end{align*}
		is strongly log-concave.
	\end{corollary}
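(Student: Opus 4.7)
The plan is to obtain Corollary~\ref{cor:dpp} as a direct application of \cref{thm:cpow} to the generating polynomial of the $k$-DPP. The only nontrivial ingredient beyond \cref{thm:cpow} is the base case $\alpha = 1$, which must be established as the starting point for the coefficient-raising operation. Thus the proof naturally splits into two steps: (i) verify that $g(x) = \sum_{S \in \binom{[n]}{k}} \det(L_S) \prod_{i \in S} x_i$ is itself strongly log-concave, and (ii) invoke \cref{thm:cpow} with $c_S = \det(L_S)$ and the given exponent $\alpha \in [0,1]$.

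For step (i), I would recall the standard fact (already referenced in the excerpt via \cite{AOR16}) that $g(x)$ is real stable. One way to see this is to write $g(x)$ as the degree-$k$ coefficient (in an auxiliary variable) of $\det(\diag(x_1,\dots,x_n) L + tI)$ or, equivalently, to exhibit it as a specialization of $\det(\sum_i x_i v_i v_i^\top)$ when $L = \sum_i v_i v_i^\top$; real stability then follows from the closure properties of stable polynomials (the determinant of a positive combination of PSD matrices is stable, and taking the degree-$k$ part in a polarized variable preserves stability). Since $g$ is homogeneous, multiaffine, has nonnegative coefficients, and is real stable, it is strongly log-concave at $\bone$ — indeed, real stable polynomials with nonnegative coefficients are log-concave on $\R_{\geqslant 0}^n$, and the class of real stable polynomials is closed under the partial derivatives $\partial_{i_1} \cdots \partial_{i_k}$, so all such derivatives are log-concave as required.

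For step (ii), the coefficients of $g$ are exactly $\det(L_S)$, and by the Cauchy–Binet formula (or, for general PSD $L$, by continuity from the strictly positive definite case) these are nonnegative. Therefore $g$ satisfies the hypotheses of \cref{thm:cpow}, and applying that theorem with the given $\alpha \in [0,1]$ immediately yields that $\sum_{S \in \binom{[n]}{k}} \det(L_S)^{\alpha} \prod_{i \in S} x_i$ is strongly log-concave.

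The main conceptual obstacle in this chain is \cref{thm:cpow} itself, which is the genuinely new input; the corollary is essentially a matter of checking that the $k$-DPP polynomial fits its hypotheses. If \cref{thm:cpow} were only available for strictly positive coefficients, a small continuity argument would be needed to handle rank-deficient $L$ (where some principal minors vanish), approximating $L$ by $L + \varepsilon I$ and letting $\varepsilon \downarrow 0$; but since strong log-concavity at $\bone$ is preserved under limits on a compact set of coefficient vectors, this poses no real difficulty.
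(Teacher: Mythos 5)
Your proposal matches the paper's own route: the paper also obtains \cref{cor:dpp} by noting that the $\alpha=1$ polynomial $\sum_{S}\det(L_S)x^S$ is real stable (citing \cite{AOR16}) and hence strongly log-concave, and then applying \cref{thm:cpow} with the coefficients $c_S=\det(L_S)$. Your additional remarks on nonnegativity of the principal minors and on handling rank-deficient $L$ by a limiting argument are consistent with (and subsumed by) the paper's treatment, so there is nothing further to fix.
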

	It follows from \cref{thm:SLCmixing} that for any $0\leqslant\alpha\leqslant 1$ we can generate samples from a ``smoothed'' $k$-DPP distribution, where for any set $S$, $\Pr{S}\propto \det(L_S)^\alpha$, in polynomial time. The weights $\det(L_{S})^{\alpha}$ may be thought of as a way to interpolate between two extremes for selecting diverse data points.

	We also note that for $\alpha = 1/2$, it is known that \cref{cor:dpp} follows from the Brunn-Minkowski theorem applied to appropriately defined zonotopes. For $\alpha = 0$ when the $k$-DPP has full support, and for $\alpha = 1$ as mentioned earlier, the above polynomial is actually real stable, and hence strongly log-concave. \Cref{thm:cpow} gives a unified proof that all of these polynomials are strongly log-concave.

\subsection{Related Works}
	There is a long line of work on designing approximation algorithms to count the  bases of a matroid.  Most of these works focus on expansion properties of bases exchange graph. \Textcite{FM92} showed that for a special class of matroids known as \emph{balanced matroids} \cite{MS91,FM92}, the bases exchange graph has expansion at least 1. A matroid $M$ is balanced if for any minor of $M$ (including $M$ itself), the uniform distribution over its bases satisfies the pairwise negative correlation property. Many of the extensive results in this area \cite{Gam99,JS02,JSTV04,Jur06,Clo10,CTY15,AOR16} only study approximation algorithms for this limited class of matroids, and not much is known beyond the class of balanced matroids. Unfortunately, many interesting matroids are not balanced. An important example is the  matroid of all acyclic subsets of edges of a graph $G=(V,E)$ of size at most $k$ (for some $k<\card{V}-1$) \cite{FM92}.

	There has been other approaches for counting bases. \Textcite{GJ18b} used the popping method to count bases of bicircular matroids. \Textcite{BS07} designed a \emph{randomized} algorithm that gives, roughly, a $\log(n)^r$ approximation factor to the number of bases of a  given matroid with $n$ elements and  rank $r$. In \cite{AOV18}, a subset of the authors gave a deterministic $e^r$ approximation to the number of bases using the fact that $g_\mu(M)$ is  log-concave over $\R^n_{\geqslant 0}$. 

	There is an extensive literature on hardness of exact computation and inapproximability of the Tutte polynomial and the partition function of the random cluster model. It is known that exact computation of the Tutte polynomial for a graph is \SharpP-hard at all points $(x,y)$ except at $(1,1), (-1,-1), (0,-1), (-1,0)$, along the hyperbola $(x-1)(y-1) = 1$, and for planar graphs, along the hyperbola $(x-1)(y-1) = 2$ \cite{JVW90}, \cite{Ver91}, \cite{Wel94}. In the realm of inapproximability, it is known that even for planar graphs, there is no FPRAS to approximate the Tutte polynomial for $x > 1, y < -1$ or $y > 1, x < -1$ assuming \NP{} $\neq$ \RP{} \cite{GJ08, GJ12II}. Furthermore, there is no FPRAS for estimating the partition function $Z_{M}$ of the random cluster model on general graphic matroids when $q > 2$, nor is there an FPRAS for $Z_{M}$ at $q = 2$ for general binary matroids, unless there is an FPRAS for counting independent sets in a bipartite graph \cite{GJ12I,GJ13,GJ14}.

\subsection{Independent Work} 
	In a closely related upcoming work, Br\"and\'en and Huh, in a slightly different language,  independently prove the strong log-concavity of several of the polynomials that appear in this paper. %including the bases generating polynomial of a matroid, the polynomials $f_{M,k,q}$ defined in \cref{subsec:RandomCluster}, and the polynomials
%$f_{\epsilon}$ in \cref{thm:cpow}. 
	In upcoming papers, both groups of authors use these techniques to prove the strongest form of Mason's conjecture and further study closure properties of (strongly) log-concave polynomials.

\subsection{Techniques}
    One of our key observations is a close connection between pure simplicial complexes and multiaffine homogeneous polynomials. Specifically, if $X$ is a pure simplicial complex with positive weights on its maximal faces, we can associate with $X$ a multiaffine homogeneous polynomial $p_{X}$ such that the eigenvalues of the localized random walks on $X$ correspond to the eigenvalues of the Hessian of derivatives of $p_{X}$.
    \begin{table}[htb]
        \centering
        \begin{tabular}{|c|c|}\hline
             Weighted Simplicial Complex $X$ & Multiaffine Polynomial $p_{X}$ \\ \hline
             Dimension-$d$ & Degree-$d$ \\
             Weight of $\emptyset$ & Evaluation at $\bone$\\
             Connectivity of Links & Indecomposability \\
             Link & Differentiation\\
             Local Random Walk & (Normalized) Hessian\\ \hline
        \end{tabular}
        
        \label{tab:my_label}
    \end{table}
    Using this correspondence, one can study multiaffine homogeneous polynomials using techniques from simplicial complexes, and vice versa. To study the walk $\mathcal{M}_{\mu}$ corresponding to a polynomial $g_{\mu}$, we analyze the simplicial complex corresponding to $g_{\mu}$. To do this, we leverage recent developments in the area of high-dimensional expanders, which we discuss below.
%	Expander graphs can be seen as sparse approximations of complete graphs. They share many properties of complete graphs, e.g., structure of eigenvalues, diameter and mixing time of random walks. Over the last two decades, an extensive literature has been developed to study ``high dimensional'' analogues of expander graphs in the setting of simplicial complexes.

%A finite simplicial complex $X$ is a finite collection of subsets, closed under inclusion, of a finite set $X(0)$, called the set of vertices of $X$. The sets in X are called simplices or faces and we denote by $X(i)$ sets of size $i+1$. The graph of $X$, a.k.a., the 1-skeleton of $X$, contains all ``edges'' in $X(1)$ connecting a pair of vertices in $X(0)$.
%For a face $\sigma$, the link of $X$ at $\sigma$, $X_\sigma$, is the simplicial subcomplex with all sets $\tau$ disjoint from $\sigma$ with $\sigma\cup\tau\in X$.   
	Given a simplicial complex $X$ (see \cref{sec:SimplicialComplex}) and an ordering of its vertices, one can associate a high dimensional Laplacian matrix to the $k$-dimensional faces of $X$. These matrices generalize the classical graph Laplacian and there has been extensive research to study their eigenvalues \cite[see][and the references therein]{Lub17}. A method known as Garland's method \cite{Gar73} relates the eigenvalues of graph Laplacians of 1-skeletons of links of $X$ to eigenvalues of high dimensional Laplacians of $X$  \cite[see][]{BS97, Opp18}.

	Recently, \textcite{KM17} studied a high dimensional walk on a simplicial complex, which is closely related to the walk $\cM_\mu$ that we defined above (see \cref{sec:highdimwalk}). Their goal is to argue that, similar to classical expander graphs, high dimensional walks mix rapidly on a high dimensional expander. Their bounds were improved in a work of \textcite{DK17}, who showed that if all nontrivial eigenvalues of the simple random walk matrix on all 1-skeletons of links of $X$ have absolute value at most $\lambda$, then the high dimensional walk on $k$-faces of $X$ has spectral gap at least $\frac1{k+2}-O((k+1)\lambda)$. This was further improved in a recent work of \textcite{KO18}: They showed that if all non-trivial eigenvalues  of the simple random walk matrix on all 1-skeleton of links of $X$ are at most $\lambda$, then the spectral gap of the high dimensional walk is at least $\frac1{k+2}-(k+1)\lambda$. In other words, negative eigenvalues of the random walk matrix do not matter. One  only needs positive eigenvalues to be small. 

	Note that in order to make the spectral gap bounds meaningful one needs $\lambda \ll \frac1{k^2}$. In other words, one needs that, except the trivial eigenvalue of 1, all other eigenvalues are either negative or very close to $0$. Here is the place where the connection to (strong) log-concavity comes into the picture. A polynomial $p$ is log-concave at $\bone$ if $\nabla^2 p(\bone)$ has at most one positive eigenvalue. A polynomial $p$ is strongly log-concave if the same holds for all partial derivatives of $p$. Our main observation is that this property is equivalent to taking $\lambda=0$ in the corresponding simplicial complex. Namely, we obtain the best possible spectral gap of $\frac{1}{k+2}$ when the simplicial complex comes from a strongly log-concave polynomial.

	Our approach has a close connection to the original plan of \textcite{FM92} who used the negative correlation property of balanced matroids to show that the bases exchange walk mixes rapidly. Unfortunately, most interesting matroids do not satisfy negative correlation. But it was observed \cite{AHK18,HW17,AOV18} that all matroids satisfy a \emph{spectral} negative dependence property. Namely, consider the uniform distribution $\mu$ over the bases of a matroid $M$, and consider the Hessian $\nabla^2 \log(g_{\mu})$ of the $\log$ of the generating polynomial $g_{\mu}$ at the point $x =\bone$. Then $M$ is negatively correlated if and only if all off-diagonal entries of this matrix are non-positive, whereas $M$ being spectrally negatively correlated means that this matrix is negative semidefinite. Spectral negative correlation is precisely what one needs to bound the mixing time of the high dimensional walk on the corresponding simplicial complex.

\paragraph{Structure of the paper.} 
	In \cref{sec:prelim} we discuss necessary background on linear algebra, matroids, simplicial complexes and strongly log-concave polynomials. We also provide a useful characterization of strong log-concavity. In \cref{sec:highdimwalk} we discuss and reprove a version of the main theorem of \textcite{KO18} on mixing time of high dimensional walks,  \cref{thm:localexpander}. In \cref{sec:SLCtoLSE} we use this to prove \cref{thm:SLCmixing} and the Mihail-Vazirani conjecture, \cref{thm:basesexchange}.  Finally, in \cref{sec:applications} we first prove our new characterization of strong log-concavity and discuss its applications. Specifically, we give a self-contained proof that the uniform distribution over the bases of a matroid is strongly log-concave and we prove \cref{thm:randomclusterSLC,thm:cpow}.
%For simplicity, we also consider the zero polynomial to be log-concave. Equivalently, $p$ is log-concave if the Hessian of $\log p$
%\[\nabla^2\log p=\frac{p\cdot (\nabla^2 p) - (\nabla p) (\nabla p^\intercal)}{p^{2}}\]
%is negative semi-definite at any point $x \in \R_{>0}^{m}$, where $\nabla p$ is the gradient of $p$.
%Since $(\nabla p)(\nabla p)^\intercal$ is a rank-1 matrix, by Cauchy's interlacing theorem, $p\cdot (\nabla^2 p)$ has at most one positive eigenvalue at any $x\in\R_{>0}^{m}$. Since $p$ has nonnegative coefficients and $x$ has strictly positive entries, $p(x)>0$ so $\nabla^{2} \log p$ being negative semidefinite is equivalent to $\nabla^{2}p \preccurlyeq \frac{(\nabla p)(\nabla p)^{\intercal}}{p}$, where the right-hand side is a rank-1 positive semidefinite matrix. In particular, $\nabla^{2}p$ has at most one positive eigenvalue at $x$. In \cite{AOV18} it is shown that for homogeneous polynomials $p$, the converse of this is also true, i.e. if $\nabla^2 p$ has at most one positive eigenvalue at all $x>0$ then $p$ is log-concave.

\paragraph{Acknowledgements.} 
Part of this work was started while the first and last authors were visiting the Simons Institute for the Theory of Computing. It was partially supported by the DIMACS/Simons Collaboration on Bridging Continuous and Discrete Optimization through NSF grant CCF-1740425. Shayan Oveis Gharan and Kuikui Liu are supported by the NSF grant CCF-1552097 and ONR-YIP grant N00014-17-1-2429. Cynthia Vinzant was partially supported by the National Science Foundation grant DMS-1620014. 

We thank Lap Chi Lau, Mark Jerrum and Alan Frieze for helpful comments on an earlier version of this manuscript.
%{\cynthia{Please check!}}

\section{Preliminaries}\label{sec:prelim}
	First, let us establish some notational conventions. Unless otherwise specified, all logarithms are in base $e$. All vectors are assumed to be column vectors. For two vectors $\phi, \psi\in \R^n$, we use $\dotprod{\phi, \psi}$ to denote the standard Euclidean inner product between $\phi$ and $\psi$. We use $\R_{>0}$ and $\R_{\geqslant 0}$ to denote the set of positive and nonnegative real numbers, respectively, and $[n]$ to denote $\set{1,\dots,n}$. For a vector $x\in \R^n$ and a set $S\subseteq [n]$, we let $x^S$ denote $\prod_{i\in S} x_i$.

	We use $\partial_{x_i}$ or $\partial_i$ to denote the partial differential operator $\partial/\partial x_i$. We denote the gradient of a function or polynomial $p$ by $\nabla p$ and the Hessian of $p$ by $\Hess{p}$. 

\subsection{Linear Algebra}
	We say a matrix $A\in\R^{n\times n}$ is \emph{stochastic} if all entries of $A$ are nonnegative and every row adds up to exactly 1. It is well-known that the largest eigenvalue in magnitude of any stochastic matrix is $1$ and its corresponding eigenvector is the all-ones vector, $\bone$. If the eigenvalues of a matrix $A\in\R^{n\times n}$ are all real, then we order them as
	\[ \lambda_n(A) \leqslant \hdots \leqslant \lambda_1(A). \]
	A symmetric matrix $A\in\R^{n\times n}$ is positive semidefinite (PSD), denoted $A\succcurlyeq 0$, if all its eigenvalues $\lambda_k(A)$ are nonnegative, or equivalently if for all $v\in\R^n$,
	\[ v^\intercal Av \geqslant 0.\]
	Similarly, $A$ is negative semidefinite (NSD), denoted $A\preccurlyeq 0$, if $v^\intercal Av \leqslant 0$ for all $v\in\R^n$.
%, and negative definite (ND) if $v^\intercal Av <0$ for $v \neq 0$. 
	Equivalently, a real symmetric matrix is PSD (NSD) if its eigenvalues are nonnegative (nonpositive), respectively. 
	\begin{theorem}[{Schur Product Theorem \cite[Thm~7.5.3]{HJ13}}]\label{thm:schur}
	If $A,B \in \R^{n \times n}$ are positive semidefinite, then their Hadamard product $A \circ B$, whose entries are $(A \circ B)_{i,j} = A_{i,j}B_{i,j}$, is positive semidefinite.
	\end{theorem}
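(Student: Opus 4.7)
The plan is to use the spectral decomposition of a PSD matrix to express $A \circ B$ as a sum of manifestly PSD rank-one matrices. First I would decompose $A = \sum_{k=1}^{n} u_{k} u_{k}^{\intercal}$ and $B = \sum_{\ell=1}^{n} v_{\ell} v_{\ell}^{\intercal}$, where the $u_{k}$'s are orthogonal eigenvectors of $A$ scaled by the square roots of the corresponding nonnegative eigenvalues, and similarly for $v_{\ell}$. Such decompositions exist precisely because $A$ and $B$ are symmetric with nonnegative spectrum.

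The key identity is that for any two column vectors $u, v \in \R^{n}$, the Hadamard product of the rank-one matrices $uu^{\intercal}$ and $vv^{\intercal}$ is itself a rank-one PSD matrix, namely
\[ (uu^{\intercal}) \circ (vv^{\intercal}) = (u \circ v)(u \circ v)^{\intercal}, \]
where $u \circ v$ denotes the entrywise product. This is immediate from comparing the $(i,j)$-entries: both sides equal $u_{i} u_{j} v_{i} v_{j} = (u_{i} v_{i})(u_{j} v_{j})$.

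Combining these two observations, bilinearity of the Hadamard product yields
\[ A \circ B = \sum_{k=1}^{n} \sum_{\ell=1}^{n} (u_{k} u_{k}^{\intercal}) \circ (v_{\ell} v_{\ell}^{\intercal}) = \sum_{k=1}^{n} \sum_{\ell=1}^{n} (u_{k} \circ v_{\ell})(u_{k} \circ v_{\ell})^{\intercal}, \]
which exhibits $A \circ B$ as a sum of $n^{2}$ rank-one PSD matrices, hence PSD. Equivalently, for any $w \in \R^{n}$,
\[ w^{\intercal} (A \circ B) w = \sum_{k, \ell} \dotprod*{u_{k} \circ v_{\ell},\, w}^{2} \geqslant 0. \]

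There is no real obstacle here; the only thing one must be careful about is the rank-one identity, which is what makes the Hadamard product interact so cleanly with outer products. As an alternative, one could observe that $A \circ B$ is a principal submatrix (indexed by the ``diagonal'' pairs $(i,i)$) of the Kronecker product $A \otimes B$, and the Kronecker product of two PSD matrices is PSD since its eigenvalues are the pairwise products of the eigenvalues of $A$ and $B$; any principal submatrix of a PSD matrix is PSD. Either route suffices.
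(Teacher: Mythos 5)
Your proof is correct. The paper does not prove this statement at all---it is quoted as a standard fact with a citation to Horn--Johnson---and your spectral-decomposition argument (writing $A$ and $B$ as sums of rank-one PSD matrices and using $(uu^{\intercal}) \circ (vv^{\intercal}) = (u \circ v)(u \circ v)^{\intercal}$), as well as your alternative via principal submatrices of the Kronecker product, are precisely the standard textbook proofs, so nothing further is needed.
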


	\begin{theorem}[{Perron-Frobenius Theorem \cite[Ch. 8]{HJ13}}]\label{thm:perron}
		Let $A \in \R^{n \times n}$ be symmetric and have strictly positive entries. Then $A$ has an eigenvalue $\lambda$ which is strictly positive. Furthermore, it has multiplicity one and its corresponding eigenvector $v$ has strictly positive entries.
	\end{theorem}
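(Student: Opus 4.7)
The plan is to use the variational characterization of eigenvalues together with a componentwise absolute value trick that is standard in Perron--Frobenius arguments, adapted to the symmetric setting where all eigenvalues are real.

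First I would set $\lambda_1 = \max_{\norm{v}=1} v^\intercal A v$, which is the largest eigenvalue of the symmetric matrix $A$ by the Rayleigh--Ritz characterization. Evaluating at $v = \bone/\sqrt{n}$ gives $v^\intercal A v = \frac{1}{n}\sum_{i,j} A_{i,j} > 0$ since the entries of $A$ are strictly positive, so $\lambda_1 > 0$. This handles the existence of a strictly positive eigenvalue.

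Next, let $v$ be a unit eigenvector for $\lambda_1$ and let $\abs{v}$ denote its componentwise absolute value. Since $A_{i,j} > 0$ for all $i,j$, I get
\[ \abs{v}^\intercal A \abs{v} = \sum_{i,j} A_{i,j}\abs{v_i}\abs{v_j} \geqslant \sum_{i,j} A_{i,j} v_i v_j = v^\intercal A v = \lambda_1, \]
so $\abs{v}$ is also a maximizer and hence an eigenvector of $A$ with eigenvalue $\lambda_1$. From $A\abs{v} = \lambda_1 \abs{v}$, the left side has strictly positive entries (as $A$ has strictly positive entries and $\abs{v}$ is nonnegative and nonzero), so $\abs{v} > 0$ entrywise. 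Now the inequality above is an equality, which forces $A_{i,j}\abs{v_i}\abs{v_j} = A_{i,j} v_i v_j$ for every $i,j$; since all $A_{i,j} > 0$, all $v_i$ must share the same sign, so either $v = \abs{v}$ or $v = -\abs{v}$. Either way, after possibly replacing $v$ by $-v$, the eigenvector $v$ has strictly positive entries.

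For simplicity of the eigenvalue, suppose for contradiction $\lambda_1$ has multiplicity at least two. Because $A$ is symmetric, I can pick a second eigenvector $w$ for $\lambda_1$ orthogonal to $v$. Applying the same argument to $w$ produces an eigenvector $\abs{w}$ for $\lambda_1$ with strictly positive entries, and again $w = \pm\abs{w}$. But then $\dotprod{v, w}$ is a sum of products of strictly positive numbers, with all terms of the same sign, so it cannot vanish, contradicting orthogonality. Hence $\lambda_1$ is simple, completing the proof.

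The only nontrivial step is the equality-case argument in the third paragraph: one must carefully observe that a sign flip in any coordinate of $v$, combined with strict positivity of every $A_{i,j}$, would produce a strict inequality $\abs{v}^\intercal A \abs{v} > v^\intercal A v$, contradicting that $v$ is a maximizer. Everything else is routine application of Rayleigh--Ritz and symmetry.
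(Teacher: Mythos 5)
Your argument is correct. Note, though, that the paper does not prove this statement at all: it is quoted as a classical result with a citation to Horn and Johnson, so there is no in-paper proof to compare against. What you give is the standard self-contained argument for the \emph{symmetric} case: Rayleigh--Ritz shows $\lambda_1=\max_{\norm{v}=1}v^\intercal Av>0$ (testing with $\bone/\sqrt n$), the componentwise absolute-value trick shows a maximizer can be taken nonnegative, the relation $A\abs{v}=\lambda_1\abs{v}$ with $A>0$ entrywise upgrades nonnegativity to strict positivity, termwise equality in $\sum_{i,j}A_{i,j}\abs{v_i}\abs{v_j}=\sum_{i,j}A_{i,j}v_iv_j$ forces a single sign, and simplicity follows because two entrywise-positive (up to sign) eigenvectors cannot be orthogonal. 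All of these steps are sound; the only implicit ingredient is the standard fact that a unit-norm maximizer of the Rayleigh quotient of a symmetric matrix is an eigenvector for $\lambda_1$, which you are entitled to use. Compared with the textbook treatment the paper points to, which develops Perron--Frobenius for general (non-symmetric) positive matrices via the spectral radius and Collatz--Wielandt-type bounds, your route is more elementary but less general; since the paper only invokes the theorem for symmetric Hessians with positive entries (in the proof of \cref{thm:cpow}), and the eigenvalue you produce is indeed the top one whose eigenvector furnishes the rank-one part $vv^\intercal$ used there, the symmetric specialization fully suffices.
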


	\begin{theorem}[{Cauchy's Interlacing Theorem \cite[Corollary~4.3.9]{HJ13}}]\label{thm:CauchyInterlacing}
		For a symmetric matrix $A\in\R^{n\times n}$ and vector $v\in \R^n$, the eigenvalues of $A$ interlace the eigenvalues of $A+vv^\intercal$.
		That is, for  $B = A+vv^\intercal,$
		\[\lambda_n(A)\leqslant \lambda_n(B)\leqslant\lambda_{n-1}(A) \leqslant \dots \leqslant \lambda_{2}(B) \leqslant \lambda_{1}(A)  \leqslant \lambda_{1}(B). \]
	\end{theorem}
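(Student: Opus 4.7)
The plan is to deduce the interlacing directly from the Courant--Fischer min--max characterization of eigenvalues. Recall that for a real symmetric matrix $M \in \R^{n \times n}$,
\[
\lambda_k(M) \;=\; \max_{\substack{U \subseteq \R^n \\ \dim U = k}} \min_{0 \neq x \in U} \frac{x^\intercal M x}{x^\intercal x} \;=\; \min_{\substack{U \subseteq \R^n \\ \dim U = n-k+1}} \max_{0 \neq x \in U} \frac{x^\intercal M x}{x^\intercal x}.
\]

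First I would handle the lower bounds $\lambda_k(B) \geqslant \lambda_k(A)$ for every $k \in [n]$. The key observation is that $vv^\intercal \succcurlyeq 0$, so for any $x \in \R^n$,
\[
x^\intercal B x \;=\; x^\intercal A x + (v^\intercal x)^2 \;\geqslant\; x^\intercal A x.
\]
Plugging this pointwise inequality into the max--min form of Courant--Fischer immediately yields $\lambda_k(B) \geqslant \lambda_k(A)$ for every $k$. In particular, this gives $\lambda_1(A) \leqslant \lambda_1(B)$ and the leftmost inequality $\lambda_n(A) \leqslant \lambda_n(B)$.

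Next I would prove the upper bounds $\lambda_k(B) \leqslant \lambda_{k-1}(A)$ for $k \geqslant 2$, which is where the rank-one nature of the update is exploited. Let $w_1,\dots,w_n$ be an orthonormal basis of eigenvectors of $A$ ordered so that $A w_i = \lambda_i(A) w_i$, and let $W = \operatorname{span}\{w_{k-1}, w_k, \dots, w_n\}$, so $\dim W = n-k+2$. Since $v^\perp$ is a hyperplane, the subspace $U := W \cap v^\perp$ satisfies $\dim U \geqslant n-k+1$. For any $x \in U$ we have $v^\intercal x = 0$, hence
\[
x^\intercal B x \;=\; x^\intercal A x \;\leqslant\; \lambda_{k-1}(A) \, \|x\|^2,
\]
where the last inequality uses that $x \in W$ is a linear combination of eigenvectors of $A$ with eigenvalues at most $\lambda_{k-1}(A)$. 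Picking any $(n-k+1)$-dimensional subspace of $U$ and plugging it into the min--max form of Courant--Fischer gives $\lambda_k(B) \leqslant \lambda_{k-1}(A)$, which completes the chain.

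The proof is essentially routine; the only mildly delicate step is the dimension count for $U = W \cap v^\perp$, which requires choosing $W$ one dimension larger than the test subspace needed in Courant--Fischer so that intersecting with the hyperplane $v^\perp$ still leaves enough room. This ``slack of one dimension'' is precisely what forces the shift by one index in the interlacing conclusion.
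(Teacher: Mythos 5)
Your proof is correct: the pointwise inequality $x^\intercal B x = x^\intercal A x + (v^\intercal x)^2 \geqslant x^\intercal A x$ gives $\lambda_k(B)\geqslant\lambda_k(A)$ via the max--min form, and the dimension count for $U = W\cap v^\perp$ (which is harmless even when $v=0$) correctly yields $\lambda_k(B)\leqslant\lambda_{k-1}(A)$ via the min--max form, which is exactly the chain claimed. The paper gives no proof of this statement---it is quoted directly from \cite[Corollary~4.3.9]{HJ13}---and your Courant--Fischer argument is the standard proof of that result, so there is nothing to compare beyond noting that your write-up would serve as a self-contained substitute for the citation.
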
	
	The following is an immediate consequence:
	\begin{lemma}\label{lem:Cauchy1eigenvalue}
		Let $A\in\R^{n\times n}$ be a symmetric matrix and let $P\in\R^{m\times n}$. If $A$ has at most one positive eigenvalue, then $PAP^\intercal$ has at most one positive eigenvalue.
	\end{lemma}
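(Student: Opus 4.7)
The plan is to reduce to the Cauchy Interlacing Theorem via a spectral decomposition. First I would handle the trivial case: if $A$ has no positive eigenvalues, then $A \preccurlyeq 0$, so $PAP^\intercal \preccurlyeq 0$ as well, and we are done. So assume $A$ has exactly one positive eigenvalue $\lambda_1 > 0$ with unit eigenvector $v_1$.

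Next I would use the spectral theorem to split $A$ into a positive rank-one part and a negative part. Writing $A = \sum_{i=1}^n \lambda_i v_i v_i^\intercal$, set $B = -\sum_{i \geq 2} \lambda_i v_i v_i^\intercal$. Since $\lambda_i \leqslant 0$ for $i \geqslant 2$, we have $B \succcurlyeq 0$, and
\[ A \;=\; \lambda_1 v_1 v_1^\intercal - B. \]
Conjugating by $P$ yields
\[ P A P^\intercal \;=\; w w^\intercal - P B P^\intercal, \qquad w \;:=\; \sqrt{\lambda_1}\, P v_1. \]
Note $P B P^\intercal$ is PSD, hence $-PBP^\intercal$ is NSD and all of its eigenvalues are $\leqslant 0$.

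Now I would apply \cref{thm:CauchyInterlacing} with the base matrix $-PBP^\intercal$ and the rank-one update $ww^\intercal$. The theorem gives
\[ \lambda_2\parens*{PAP^\intercal} \;\leqslant\; \lambda_1\parens*{-PBP^\intercal} \;\leqslant\; 0, \]
so all eigenvalues of $PAP^\intercal$ except possibly the largest are nonpositive. This shows $PAP^\intercal$ has at most one positive eigenvalue, as desired.

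There is no real obstacle here: once the rank-one/NSD decomposition of $A$ is in place, the statement is an immediate consequence of Cauchy interlacing. The only minor subtlety is checking the degenerate case $\lambda_1 \leqslant 0$ separately, which we do at the outset.
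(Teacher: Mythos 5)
Your proof is correct and follows essentially the same route as the paper: write $A$ as a rank-one positive part plus a negative semidefinite part, conjugate by $P$, and invoke Cauchy interlacing (\cref{thm:CauchyInterlacing}) to conclude that all but the top eigenvalue of $PAP^\intercal$ are nonpositive. Your explicit spectral decomposition and the separate treatment of the case with no positive eigenvalue are just slightly more detailed versions of the paper's ``$A = B + vv^\intercal$ with $B \preccurlyeq 0$'' step.
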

	\begin{proof}
		Since $A$ has at most one positive eigenvalue, we can write $A=B+vv^\intercal$ for some vector $v\in \R^n$ and some negative semidefinite matrix $B$. Then $PAP^\intercal = PBP^\intercal + Pvv^\intercal P^\intercal$. First, observe that $PBP^\intercal\preccurlyeq 0$, since for $x\in\R^m$, $x^\intercal PB P^\intercal x = (P^\intercal x)^\intercal B (P^\intercal x)\leqslant 0$. Second, let $w = Pv\in \R^m$. Then $Pvv^\intercal P^\intercal=ww^\intercal$ and by \cref{thm:CauchyInterlacing}, the eigenvalues of $PBP^\intercal$ interlace the eigenvalues of $PBP^\intercal+(Pv)(Pv)^\intercal$. Since all eigenvalues of $PBP^\intercal$ are nonpositive, $PAP^\intercal=PBP^\intercal+ww^\intercal$ has at most one positive eigenvalue. 
	\end{proof}

	The following fact is well-known.
	\begin{fact}\label{fact:eigenvaluesinvert}
		Let $A\in \R^{n\times k}$ and $B\in \R^{k\times n}$ be arbitrary matrices. Then, non-zero eigenvalues of $AB$ are equal to non-zero eigenvalues of $BA$ with the same multiplicity. 
	\end{fact}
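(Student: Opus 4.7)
The plan is to prove the stronger statement that the characteristic polynomials of $AB$ and $BA$ agree up to a power of $\lambda$; specifically, that
\[ \lambda^k \det(\lambda I_n - AB) = \lambda^n \det(\lambda I_k - BA). \]
Once this polynomial identity is established, every nonzero root appears on both sides with identical multiplicity, and the claim about eigenvalues follows immediately.

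To prove the identity, I would use a block matrix and Schur complement argument. Consider the $(n+k)\times (n+k)$ block matrix
\[ M = \begin{pmatrix} \lambda I_n & A \\ B & I_k \end{pmatrix}. \]
Assuming $\lambda \neq 0$, taking the Schur complement with respect to the $(1,1)$ block $\lambda I_n$ yields
\[ \det(M) = \det(\lambda I_n)\cdot \det\!\left(I_k - \tfrac{1}{\lambda} BA\right) = \lambda^{n-k}\det(\lambda I_k - BA), \]
while taking the Schur complement with respect to the $(2,2)$ block $I_k$ yields
\[ \det(M) = \det(I_k)\cdot \det(\lambda I_n - AB) = \det(\lambda I_n - AB). \]
Equating the two expressions and multiplying by $\lambda^k$ gives the desired polynomial identity for all $\lambda \neq 0$; since both sides are polynomials in $\lambda$ that agree on an infinite set, they are equal as polynomials.

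There is no real obstacle here: this is a classical identity, and the only subtlety is that the raw Schur complement computation is only valid when $\lambda \neq 0$, which is precisely why the equality is stated with compensating powers of $\lambda$ on each side rather than a direct equality of characteristic polynomials. Comparing the orders of vanishing at $\lambda = 0$ on both sides also recovers the familiar fact that $AB$ and $BA$ can differ only in the algebraic multiplicity of the eigenvalue $0$. Since the fact is invoked only as a tool elsewhere in the paper, a short pointer to this standard Schur-complement derivation (or a citation to a linear algebra reference) is sufficient.
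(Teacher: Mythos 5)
Your argument is correct: the block-matrix/Schur-complement computation does yield the identity $\lambda^k \det(\lambda I_n - AB) = \lambda^n \det(\lambda I_k - BA)$, and since both sides are polynomials agreeing for all $\lambda \neq 0$ they agree identically, so the nonzero eigenvalues of $AB$ and $BA$ coincide with equal (algebraic) multiplicity, which is exactly what the paper needs in \cref{lem:stochastic} and \cref{lem:upperlower}. There is nothing to compare against: the paper states this fact as well-known and gives no proof, so your derivation simply supplies a complete, standard justification for the omitted step.
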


	\begin{lemma}\label{lem:stochastic}
		Let $A\in \R^{n\times n}$ be a symmetric matrix with at most one positive eigenvalue. Then, for any PSD matrix $B\in\R^{n\times n}$, $BA$ has at most one positive eigenvalue.
	\end{lemma}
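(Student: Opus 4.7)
The plan is to reduce the non-symmetric product $BA$ to a symmetric conjugation of $A$ whose inertia we can control via \cref{lem:Cauchy1eigenvalue}, and then to invoke \cref{fact:eigenvaluesinvert} to transfer the conclusion back to $BA$.

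First, since $B$ is PSD I would let $B^{1/2}$ denote its unique symmetric PSD square root, so that $B = B^{1/2} B^{1/2}$. Then I would write $BA = B^{1/2}(B^{1/2} A)$ and apply \cref{fact:eigenvaluesinvert} to conclude that the nonzero eigenvalues of $BA$ coincide (with multiplicity) with the nonzero eigenvalues of
\[ (B^{1/2} A) B^{1/2} = B^{1/2} A B^{1/2}. \]
This matrix is symmetric because $A$ and $B^{1/2}$ are both symmetric, so all of its eigenvalues are real.

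Next I would apply \cref{lem:Cauchy1eigenvalue} with $P = B^{1/2}$ (noting that the lemma is stated for arbitrary rectangular $P$, so symmetry of $B^{1/2}$ is not needed). Since $A$ has at most one positive eigenvalue by hypothesis, the lemma gives that $B^{1/2} A B^{1/2}$ has at most one positive eigenvalue.

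Combining the two steps: the nonzero eigenvalues of $BA$ are exactly the nonzero eigenvalues of $B^{1/2} A B^{1/2}$, hence are real and contain at most one positive value. The remaining eigenvalues of $BA$ are zero, which are not positive. Therefore $BA$ has at most one positive eigenvalue. I do not foresee any real obstacle; the only mildly delicate point is that $BA$ need not be symmetric, which is handled cleanly by the similarity-type identity in \cref{fact:eigenvaluesinvert} and eliminates the need to assume $B$ is invertible.
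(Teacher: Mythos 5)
Your proposal is correct and follows essentially the same route as the paper: factor $B$ (the paper uses an arbitrary $B=C^\intercal C$, you use the symmetric square root $B^{1/2}B^{1/2}$), invoke \cref{fact:eigenvaluesinvert} to pass from $BA$ to the symmetric conjugate $B^{1/2}AB^{1/2}$ (resp.\ $CAC^\intercal$), and apply \cref{lem:Cauchy1eigenvalue}. The choice of square root versus a general factorization is immaterial, so there is nothing to add.
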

	\begin{proof}
	Since $B\succcurlyeq 0$, we can write $B=C^\intercal C$ for some $C\in \R^{n\times n}$. By \cref{fact:eigenvaluesinvert}, $BA=C^\intercal C A$ has the same nonzero eigenvalues as the matrix $C A C^\intercal$. Since $A$ has at most one positive eigenvalue, by \cref{lem:Cauchy1eigenvalue}, $C A C^\intercal$ has at most one positive eigenvalue and so does $BA$.
	\end{proof}

	\begin{lemma}\label{lem:adjacency1poseig}
		Let $A\in\R^{n\times n}$ be a symmetric matrix with nonnegative entries and at most one positive eigenvalue, and let $w(i)=\sum_{j=1}^n A_{i,j}$. Then,
		\[ A\preccurlyeq \frac{ww^\intercal}{\sum_i w(i)}.\]
	\end{lemma}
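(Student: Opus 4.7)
The plan is to reduce the PSD statement $A \preccurlyeq \frac{ww^{\intercal}}{\sum_{i}w(i)}$ to the scalar inequality
\[
x^{\intercal}Ax \;\leqslant\; \frac{(w^{\intercal}x)^{2}}{\sum_{i}w(i)} \qquad \text{for every } x \in \R^{n}.
\]
Observing that $w = A\bone$ and $\sum_{i}w(i) = \bone^{\intercal}A\bone$, this is equivalent to the reverse Cauchy-Schwarz inequality
\[
(\bone^{\intercal}Ax)^{2} \;\geqslant\; (x^{\intercal}Ax)(\bone^{\intercal}A\bone),
\]
which is the kind of reversal one expects for a quadratic form of Lorentzian signature -- one positive direction, the rest nonpositive -- so the spectral hypothesis on $A$ is tailor-made for this inequality.

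First I would dispose of the degenerate case $\bone^{\intercal}A\bone = 0$: since every entry of $A$ is nonnegative, this forces $A = 0$ and the claim holds trivially (with the ratio interpreted as $0$). In the remaining case $\bone^{\intercal}A\bone > 0$, I would run a one-variable discriminant argument. Consider
\[
f(t) \;=\; (x + t\bone)^{\intercal}A(x + t\bone) \;=\; (\bone^{\intercal}A\bone)\,t^{2} + 2(\bone^{\intercal}Ax)\,t + (x^{\intercal}Ax),
\]
whose discriminant equals $4\bigl[(\bone^{\intercal}Ax)^{2} - (x^{\intercal}Ax)(\bone^{\intercal}A\bone)\bigr]$. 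If this discriminant were strictly negative, then $f$ would have no real roots; combined with the strictly positive leading coefficient, this would force $f(t) > 0$ for all $t \in \R$, and in particular $f(0) = x^{\intercal}Ax > 0$. Consequently $A$ would restrict to a positive definite form on the two-dimensional subspace spanned by $x$ and $\bone$, and by the min-max characterization of eigenvalues (equivalently, iterated Cauchy interlacing as in \cref{thm:CauchyInterlacing}) this would imply that $A$ has at least two positive eigenvalues, contradicting the hypothesis. Hence the discriminant is nonnegative, which is exactly the target inequality.

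The main conceptual hurdle is recognizing that the desired statement is \emph{not} a standard PSD bound (since $A$ itself is typically indefinite) but rather a reverse Cauchy-Schwarz inside a quadratic form of Lorentzian signature; the nonnegativity of the entries of $A$ enters only to rule out the degenerate case $\bone^{\intercal}A\bone = 0$, and the ``at most one positive eigenvalue'' hypothesis is used precisely to turn the sign of the discriminant around. Once this reframing is in place the argument is a one-step discriminant computation, so I do not foresee any further technical obstacles.
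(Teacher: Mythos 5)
Your proof is correct, but it takes a genuinely different route from the paper. The paper's argument is a conjugation: set $W=\diag(w)$, observe via \cref{lem:Cauchy1eigenvalue} that $\mathcal{A}=W^{-1/2}AW^{-1/2}$ still has at most one positive eigenvalue, note that $\sqrt{w}$ is an exact eigenvector of $\mathcal{A}$ with eigenvalue $1$ (so all other eigenvalues are nonpositive), conclude $\mathcal{A}\preccurlyeq \sqrt{w}\sqrt{w}^{\intercal}/\sum_i w(i)$ from the spectral decomposition, and conjugate back by $W^{1/2}$. You instead prove the equivalent reverse Cauchy--Schwarz inequality $(\bone^{\intercal}Ax)^{2}\geqslant(x^{\intercal}Ax)(\bone^{\intercal}A\bone)$ by a discriminant argument on $t\mapsto(x+t\bone)^{\intercal}A(x+t\bone)$, using Courant--Fischer/interlacing to turn ``positive definite on a two-dimensional subspace'' into ``at least two positive eigenvalues'', contradicting the hypothesis. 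What each buys: the paper's route is shorter given its surrounding machinery and matches the random-walk normalization used throughout, but it implicitly requires all row sums $w(i)$ to be strictly positive (so that $W^{-1/2}$ exists), which holds in its application but is stronger than the stated hypothesis; your route never inverts $W$, uses entrywise nonnegativity only to kill the corner case $\bone^{\intercal}A\bone=0$, and in fact establishes the more general Lorentzian statement $A\preccurlyeq\frac{(A\bone)(A\bone)^{\intercal}}{\bone^{\intercal}A\bone}$ for any symmetric $A$ with at most one positive eigenvalue and $\bone^{\intercal}A\bone>0$. One small point you should make explicit: in the contradiction branch you need $x$ and $\bone$ to be linearly independent so that their span is genuinely two-dimensional; this is automatic, since if $x$ were a multiple of $\bone$ the discriminant would vanish rather than be negative, but it deserves a sentence.
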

	\begin{proof}
		Let $W=\diag(w)$. Then, by \cref{lem:Cauchy1eigenvalue}, $\mathcal{A}=W^{-1/2} A W^{-1/2}$ has at most one positive eigenvalue. Observe that the top eigenvector of $\mathcal{A}$ is the $\sqrt{w}$ vector, where $\sqrt{w}(i) = \sqrt{w(i)}$, for all $i$. In particular, $\mathcal{A}\sqrt{w}=\sqrt{w}$. So, $\sqrt{w}$ is the only eigenvector  of $\mathcal{A}$ with positive eigenvalue and we have 
		\[ {\cal A} \preccurlyeq \frac{\sqrt{w}\sqrt{w}^\intercal}{\norm{\sqrt{w}}^2} = \frac{\sqrt{w}\sqrt{w}^\intercal}{\sum_i w(i)}. \]
		Multiplying both sides of the inequality on the left and right by $W^{1/2}$ proves the lemma.
	\end{proof}

%We say (a symmetric) matrix $A\in \R^{n\times n}$ is an \emph{orthogonal projection} if all eigenvalues of $A$ are $0$ or $1$. It follows that $A$ is an orthogonal projection if and only if $A^2=A$.

	In this paper, we will often switch between different inner products. As such, we highlight the following variational characterization of eigenvalues of a linear operator that is self-adjoint with respect to an arbitrary inner product. In particular, the matrix of the linear operator need not be symmetric.
	\begin{theorem}[Courant-Fischer Theorem]\label{thm:Courant-Fischer}
		Let $T:\R^{n} \rightarrow \R^{n}$ be a linear operator that is self-adjoint with respect to some inner product $\dotprod{\cdot,\cdot}$ (not necessarily Euclidean). If $\lambda_{n} \leqslant \dots \leqslant \lambda_{1}$ are the eigenvalues of $T$, then
		\begin{align*}
		    \lambda_{k} &= \min_{U } \  \max_{v } \ \dotprod{v, Tv},
		\end{align*}
		where the minimum is taken over all $(n-k)$-dimensional subspaces $U\subseteq \R^n$ and the maximum is 
		taken over all vectors $v\in U$ with $\dotprod{v,v}= 1$.
	\end{theorem}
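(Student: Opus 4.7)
The plan is to reduce the statement to the usual Courant--Fischer theorem for symmetric matrices under the Euclidean inner product, by applying a change of basis that converts $\dotprod{\cdot,\cdot}$ into the standard inner product.

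Concretely, any inner product on $\R^{n}$ can be written as $\dotprod{u,v} = u^{\intercal} M v$ for a (symmetric) positive definite matrix $M$, and I would factor $M = S^{\intercal} S$ for some invertible $S$ (for instance, $S = M^{1/2}$). In this notation, $T$ being self-adjoint with respect to $\dotprod{\cdot,\cdot}$ is equivalent to the matrix identity $MT = T^{\intercal} M$. I would then define $\tilde{T} = S T S^{-1}$ and verify directly from this identity that $\tilde{T}^{\intercal} = \tilde{T}$, so $\tilde{T}$ is a symmetric matrix in the usual sense and, being conjugate to $T$, has the same eigenvalues $\lambda_{n} \leqslant \cdots \leqslant \lambda_{1}$. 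Under the substitution $w = Sv$ one then checks the two identities
\[ w^{\intercal} \tilde{T} w \ = \ v^{\intercal} M T v \ = \ \dotprod{v, Tv}, \qquad w^{\intercal} w \ = \ v^{\intercal} M v \ = \ \dotprod{v, v}. \]
Since $v \mapsto S v$ is a linear automorphism of $\R^{n}$ that restricts to a bijection between $(n-k)$-dimensional subspaces, the min-max expression in the statement is literally equal to the min-max expression for $\tilde{T}$ under the Euclidean inner product, which equals $\lambda_{k}(\tilde{T}) = \lambda_{k}(T)$ by the classical Courant--Fischer theorem.

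I do not expect a real obstacle here: once the dictionary $M = S^{\intercal}S$, $\tilde{T} = STS^{-1}$ is in place, the verifications are routine linear algebra. If one preferred to avoid invoking the Euclidean Courant--Fischer as a black box, an equally clean alternative is to apply the spectral theorem for operators self-adjoint under $\dotprod{\cdot,\cdot}$, producing a $\dotprod{\cdot,\cdot}$-orthonormal eigenbasis $v_{1},\dots,v_{n}$ of $T$, and then mimic the classical min-max argument in this basis: the upper bound by taking $U$ to be the $\dotprod{\cdot,\cdot}$-orthogonal complement of $\mathrm{span}(v_{1},\dots,v_{k-1})$ (which witnesses $\max_{v \in U} \dotprod{v,Tv} \leqslant \lambda_{k}$), and the matching lower bound by a dimension count showing that any $(n-k)$-dimensional $U$ has nontrivial intersection with $\mathrm{span}(v_{1},\dots,v_{k})$, on which the Rayleigh quotient is at least $\lambda_{k}$.
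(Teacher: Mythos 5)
The paper gives no proof of this statement---it is quoted as the classical Courant--Fischer theorem---so there is nothing internal to compare against. Your main argument is the standard and correct way to establish the general-inner-product version: writing $\langle u,v\rangle = u^{\intercal}Mv$ with $M\succ 0$, factoring $M=S^{\intercal}S$, checking that self-adjointness ($MT=T^{\intercal}M$) makes $\tilde{T}=STS^{-1}$ symmetric, and observing that $w=Sv$ converts both the quadratic form and the norm while carrying $(n-k)$-dimensional subspaces bijectively to $(n-k)$-dimensional subspaces. That reduction is complete in its essentials.

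One indexing point should be fixed, though it is really a defect of the statement as printed rather than of your reduction. With the eigenvalues ordered $\lambda_{n}\leqslant\dots\leqslant\lambda_{1}$, the classical Euclidean theorem reads $\lambda_{k}=\min_{U}\max_{v}\,v^{\intercal}\tilde{T}v$ with the minimum over subspaces of dimension $n-k+1$; minimizing over $(n-k)$-dimensional subspaces instead produces $\lambda_{k+1}$ (take $U$ spanned by the eigenvectors for $\lambda_{k+1},\dots,\lambda_{n}$). So your final appeal to ``the classical Courant--Fischer theorem'' does not literally apply to the $(n-k)$-dimensional formulation, and your alternative sketch exposes the same slip: the witnessing subspace you propose, the orthogonal complement of $\mathrm{span}(v_{1},\dots,v_{k-1})$, has dimension $n-k+1$, not $n-k$, while in the lower bound a subspace of dimension $n-k$ and $\mathrm{span}(v_{1},\dots,v_{k})$ have dimensions summing to exactly $n$, which does not force a nontrivial intersection. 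The remedy is trivial---state the minimum over $(n-k+1)$-dimensional subspaces, or note that the displayed formula characterizes $\lambda_{k+1}$---and the discrepancy is harmless for how the paper uses the theorem (only qualitative statements about the number of positive eigenvalues), but a careful write-up should make the indexing consistent.
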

	When the inner product $\dotprod{\cdot,\cdot}$ is clear, we call a matrix $A$ self-adjoint when $\dotprod{u, Av}=\dotprod{Au, v}$,
	for all $u, v$. Similarly we call a self-adjoint $A$ positive semidefinite when for all $v$
	\[ \dotprod{v, Av}\geqslant 0. \]
	By \cref{thm:Courant-Fischer}, this is equivalent to $A$ having nonnegative eigenvalues.

	\subsection{Markov Chains and Random Walks}\label{subsec:randomwalks}
	For this paper, we consider a Markov chain as a triple $(\Omega,P,\pi)$ where  $\Omega$ denotes the (finite) state space, $P\in \R_{\geqslant 0}^{\Omega\times \Omega}$ denotes the transition probability matrix and $\pi \in \R_{\geqslant 0}^{\Omega}$ denotes a stationary distribution of the chain (which will be unique for all chains we consider). For  $\tau,\sigma \in \Omega$, we use $P(\tau,\sigma)$ to denote the corresponding entry of $P$, which is the probability of moving from $\tau$ to $\sigma$. 
%In this paper we will study Markov chains defined on simplicial complexes, so we typically use $\tau,\sigma,\eta$ to denote states of the chain.
	We say a Markov chain is $\epsilon$-lazy if for any state $\tau \in \Omega$, $P(\tau,\tau) \geqslant \epsilon$. A chain $(\Omega, P, \pi)$ is \emph{reversible} if there is a nonzero nonnegative function $f:\Omega\to\R_{\geqslant0}$ such that for any pair of states $\tau, \sigma \in \Omega$, 
	\[f(\tau) P(\tau,\sigma) = f(\sigma)P(\sigma,\tau).\]
	If this condition is satisfied, then $f$ is proportional to a stationary distribution of the chain. In this paper we only work with reversible Markov chains. Note that being reversible means that the transition matrix $P$ is self-adjoint w.r.t.\ the following $\dotprod{\cdot, \cdot}$ defined for $\phi,\psi\in \R^\Omega$:
	\[ \dotprod{\phi,\psi}=\sum_{x\in\Omega} f(x)\phi(x)\psi(x).  \]

	Reversible Markov chains can be realized as random walks on weighted graphs. Given a weighted graph $G=(V,E,w)$ where every edge $e\in E$ has weight $w(e)$, the non-lazy \emph{simple} random walk on $G$ is the Markov chain that from any vertex $u\in V$ chooses an edge $e=\set{u,v}$ with probability proportional to $w(e)$ and jumps to $v$. We can make this walk $\epsilon$-\emph{lazy} by staying at every vertex with probability $\epsilon$. It turns out that if $G$ is connected, then the walk has a unique stationary distribution where $\pi(u)\propto w(u)$, where $w(u)=\sum_{v\sim u} w(\set{u,v})$ is the weighted degree of $u$.

	For any reversible Markov chain $(\Omega, P, \pi)$, the largest eigenvalue of $P$ is $1$. We let $\lambda^*(P)$ denote the second largest eigenvalue of $P$ in absolute value. That is, if $-1\leqslant \lambda_n\leqslant \dots\leqslant \lambda_1=1$ are the eigenvalues of $P$, then $\lambda^*(P)$ equals $\max\set{\abs{\lambda_2},\abs{\lambda_n}}$.
	\begin{theorem}[{\cite[Prop 3]{DS91}}]\label{thm:mixingtime}
		For any reversible irreducible  Markov chain $(\Omega, P, \pi)$,  $\eps>0$, and any starting state $\tau\in \Omega$,
	\[ t_\tau(\eps)  \ \leqslant  \ \frac1{1-\lambda^*(P)}\cdot \log\parens*{\frac{1}{\eps\cdot \pi(\tau)}}.\]
	\end{theorem}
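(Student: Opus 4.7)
The plan is to exploit the fact that reversibility of $(\Omega, P, \pi)$ is precisely self-adjointness of $P$ with respect to the weighted inner product $\dotprod{f, g}_\pi = \sum_{x \in \Omega} \pi(x) f(x) g(x)$ on $\R^{\Omega}$. Consequently $P$ admits a complete orthonormal eigenbasis $v_1 = \bone, v_2, \ldots, v_{\card{\Omega}}$ with real eigenvalues $1 = \lambda_1 > \lambda_2 \geqslant \cdots \geqslant \lambda_{\card{\Omega}} \geqslant -1$ satisfying $\abs{\lambda_i} \leqslant \lambda^*(P)$ for every $i \geqslant 2$ (the strict inequality $\lambda_1 > \lambda_2$ uses irreducibility).

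The first step is to rewrite the quantity $\norm{P^t(\tau, \cdot) - \pi}_1$ spectrally. Setting $f_\tau(x) = \bone_{\set{x = \tau}}/\pi(\tau)$, one iteration of detailed balance gives $\pi(\tau) P^t(\tau, x) = \pi(x) P^t(x, \tau)$, so the density of $P^t(\tau, \cdot)$ with respect to $\pi$ is exactly $(P^t f_\tau)(x)$. Expanding $f_\tau = \sum_i c_i v_i$ in the eigenbasis, one computes $c_i = \dotprod{f_\tau, v_i}_\pi = v_i(\tau)$, so $c_1 = 1$ and $\sum_i c_i^2 = \norm{f_\tau}_{2, \pi}^2 = 1/\pi(\tau)$. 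Applying $P^t$ termwise and peeling off the constant component,
\[ \norm{P^t f_\tau - \bone}_{2, \pi}^2 = \sum_{i \geqslant 2} c_i^2 \lambda_i^{2t} \leqslant \lambda^*(P)^{2t} \parens*{\frac{1}{\pi(\tau)} - 1}. \]

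The next step is to upgrade this $\ell^2(\pi)$ estimate to an $\ell^1$ estimate by Cauchy--Schwarz in $\dotprod{\cdot, \cdot}_\pi$:
\[ \norm{P^t(\tau, \cdot) - \pi}_1 = \sum_x \pi(x) \abs{(P^t f_\tau)(x) - 1} \leqslant \norm{\bone}_{2, \pi} \cdot \norm{P^t f_\tau - \bone}_{2, \pi} \leqslant \frac{\lambda^*(P)^t}{\sqrt{\pi(\tau)}}. \]
Finally, requiring the right-hand side be at most $\eps$ and invoking the elementary inequality $\log(1/\lambda) \geqslant 1 - \lambda$ for $\lambda \in (0, 1)$ yields the sufficient condition $t \geqslant \log(1/(\eps\sqrt{\pi(\tau)}))/(1 - \lambda^*(P))$, which, since $\pi(\tau) \leqslant 1$, is implied by the stated bound.

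No step presents a real obstacle; the two subtleties worth flagging are (i) that using $\lambda^*(P) = \max(\abs{\lambda_2}, \abs{\lambda_{\card{\Omega}}})$ rather than the gap $1 - \lambda_2$ is essential so that highly negative eigenvalues (from near-periodic chains) are correctly dampened by the even exponent $2t$, and (ii) that the Cauchy--Schwarz step must be taken in $\dotprod{\cdot, \cdot}_\pi$ so that no additional weights appear, producing the clean factor $\pi(\tau)^{-1/2}$ rather than $\pi(\tau)^{-1}$.
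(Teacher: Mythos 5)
Your proof is correct: the reduction to the density $P^t f_\tau$ via detailed balance, the expansion in a $\pi$-orthonormal eigenbasis, the $\ell^2(\pi)\to\ell^1$ step by Cauchy--Schwarz, and the final use of $\log(1/\lambda)\geqslant 1-\lambda$ are all sound, and you correctly work with the $\ell^1$ norm used in the paper's definition of $t_\tau(\eps)$ and with $\lambda^*(P)$ rather than $1-\lambda_2$ so that negative eigenvalues are handled. The paper itself offers no proof of this statement---it is quoted from Diaconis and Stroock (Prop.~3)---and your spectral argument is essentially the standard proof given there, so there is nothing further to reconcile.
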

	For our results, it will be enough to look at the second largest eigenvalue $\lambda_2(P)$, which we can bound using the conductance of a weighted graph. Consider a weighted graph $G=(V,E,w)$ and a subset $S\subseteq V$ of vertices. We let $\overline{S}$ denote the complement $V\setminus S$. Then the \emph{conductance} of $S$, denoted by $\cond(S)$, is defined as 
	\[ \cond(S) \ = \ \frac{w(E(S,\overline{S}))}{\vol(S)} \ = \ \frac{\sum_{e\in E(S,\overline{S})} w(e)}{\sum_{v\in S} w(v)},\]
	where $E(S,\overline{S})=\set{ \set{u,v}\in E\given u\in S,v\notin S}$ is the set of edges between $S$ and $\overline{S}$, $w(E(S,\overline{S}))$ is the sum of weights of these edges, and the volume $\vol(S)$ is the sum of the weighted degrees of the vertices in $S$. The conductance of $G$ is then
	\[ \cond(G)\ = \ \min_{S} \  \cond(S), \]
	where the minimum is taken over subsets $\emptyset \subsetneq S\subsetneq V$ for which $w(S)\leqslant w(\overline{S})$. 

	We say $G$ is $d$-regular if $w(v)=d$ for all $v\in V$.
	\begin{theorem}[Cheeger's Inequalities \cite{AM85,Alon86}]\label{thm:Cheeger}
		For any $d$-regular weighted graph $G=(V,E,w)$, 
		\[ \frac{d-\lambda_2(A_G)}{2} \ \leqslant  \ \cond(G) \  \leqslant \  \sqrt{2(d-\lambda_2(A_{G}))}, \]
		where $A_G$ is the weighted adjacency matrix of $G$ given by $(A_G)_{ij} = w(\set{i,j})$. 
	\end{theorem}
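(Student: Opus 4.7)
The plan is to reformulate both inequalities in terms of the graph Laplacian $L = dI - A_{G}$, whose second smallest eigenvalue equals $d - \lambda_{2}(A_{G})$ and whose quadratic form satisfies the standard identity $f^{\intercal}Lf = \sum_{\{u,v\} \in E} w(\{u,v\})\,(f(u) - f(v))^{2}$. The two inequalities then correspond to two opposite conversions: turning a sparse cut into a Rayleigh-quotient witness (the easy direction), and turning an eigenvector into a sparse cut (the hard direction).

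For the lower bound, I would invoke the Courant-Fischer characterization (\cref{thm:Courant-Fischer}) to write
\[
d - \lambda_{2}(A_{G}) \;=\; \min_{f \perp \bone,\, f \neq 0}\, \frac{f^{\intercal} L f}{f^{\intercal} f}.
\]
Given a set $S$ with $\vol(S) \leqslant \vol(\overline{S})$ achieving $\cond(G)$, I plug in the test vector $f = \vol(\overline{S})\,\mathbf{1}_{S} - \vol(S)\,\mathbf{1}_{\overline{S}}$, which is orthogonal to $\bone$ in the weighted sense. Using the quadratic-form identity, the numerator evaluates to $\vol(V)^{2}\cdot w(E(S,\overline{S}))$ and the denominator to $\vol(V) \cdot \vol(S) \cdot \vol(\overline{S})$; rearranging and using $\vol(\overline{S}) \geqslant \vol(V)/2$ (together with $d$-regularity to convert between volume and cardinality) recovers the bound.

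For the upper bound, I would run the classical sweep argument. Let $f$ be an eigenvector of $A_{G}$ with eigenvalue $\lambda_{2}(A_{G})$. After subtracting its weighted median and negating if necessary, the positive part $g = \max(f - m, 0)$ is supported on a set of volume at most $\vol(V)/2$ and still satisfies $g^{\intercal}Lg / g^{\intercal}g \leqslant d - \lambda_{2}(A_{G})$ (the standard check that the Rayleigh quotient can only decrease under truncation to the positive side of the median). Draw a threshold $t$ uniformly from $[0, \max_{v} g(v)^{2}]$ and set $S_{t} = \{v : g(v)^{2} > t\}$. Then
\[
\mathbb{E}\!\left[w(E(S_{t},\overline{S_{t}}))\right] \;\propto\; \sum_{e = \{u,v\}} w(e)\,|g(u)^{2} - g(v)^{2}|, \qquad \mathbb{E}\!\left[\vol(S_{t})\right] \;\propto\; \sum_{v} w(v)\,g(v)^{2}.
\]
Applying Cauchy-Schwarz to the factorization $|g(u)^{2} - g(v)^{2}| = |g(u) - g(v)|\cdot|g(u) + g(v)|$, the first sum is at most $\sqrt{g^{\intercal}Lg}\,\cdot\,\sqrt{2 \sum_{e} w(e)(g(u)^{2}+g(v)^{2})}$, and $d$-regularity turns the second radicand into $2d\,\|g\|^{2}$. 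An averaging argument (if the expected ratio is at most $c$, then some individual threshold achieves ratio at most $c$) then produces a specific $S_{t}$ with $\cond(S_{t})$ bounded by the desired square root.

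The main obstacle is the hard direction: one must threshold on $g^{2}$ rather than on $g$ in order to couple the cut-edge contribution with \emph{both} the Dirichlet energy $g^{\intercal}Lg$ and the weighted $2$-norm $\|g\|^{2}$ in a single application of Cauchy-Schwarz, and the median shift is the delicate-but-standard preprocessing that guarantees $\vol(S_{t}) \leqslant \vol(V)/2$. This square-root coupling is precisely the source of the $\sqrt{\,\cdot\,}$ loss relative to the easy direction.
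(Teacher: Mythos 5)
The paper does not prove \cref{thm:Cheeger} at all: it is imported as a black box from Alon--Milman and Alon, and it is only ever invoked for stochastic (i.e.\ $1$-regular) matrices, in \cref{cor:cheegerconnected} and for the walk $P_{r}^{\vee}$ on the bases-exchange graph. So there is no internal argument to compare yours against; what you have written is the standard spectral proof, and in essence it is correct: the easy direction via Courant--Fischer (\cref{thm:Courant-Fischer}) with the two-valued test vector $\vol(\overline{S})\bone_{S}-\vol(S)\bone_{\overline{S}}$, and the hard direction via the median shift, truncation to one side, the random sweep threshold on $g^{2}$, and the single Cauchy--Schwarz step that produces the square-root loss. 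The routine points you flag (orthogonality of the test vector, the fact that truncating the shifted eigenvector on the better side does not increase the Rayleigh quotient, nonemptiness of the sweep set in the averaging step) are all standard and go through.

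The one thing to fix is bookkeeping of the degree $d$. With the paper's definitions $\cond(S)=w(E(S,\overline{S}))/\vol(S)$ and $\vol(S)=d\card{S}$, your easy-direction denominator is $f^{\intercal}f=\vol(S)\vol(\overline{S})\vol(V)/d$, not $\vol(V)\vol(S)\vol(\overline{S})$, and in the hard direction the ratio of expectations comes out as $\sqrt{2(d-\lambda_{2}(A_{G}))/d}$. Thus what your argument actually establishes is the scale-invariant form
\[
\frac{d-\lambda_{2}(A_{G})}{2d}\ \leqslant\ \cond(G)\ \leqslant\ \sqrt{\frac{2\parens*{d-\lambda_{2}(A_{G})}}{d}},
\]
which coincides with the display in \cref{thm:Cheeger} exactly when $d=1$ --- the only case the paper uses --- but not for general $d$: as written, the theorem's two sides are not invariant under rescaling all edge weights, while $\cond(G)$ is, so for $d\neq 1$ the displayed bounds need the extra factor of $d$ (equivalently, one should normalize $A_{G}$ by $d$). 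In other words, your proof is the right one and in fact produces the correctly normalized statement; just state the conclusion with the $d$'s in place rather than claiming the paper's display verbatim for arbitrary $d$.
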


	A direct consequence of the above theorem is that if the (weighted) graph $G$ is connected, i.e., for all proper nonempty subsets $S$ of vertices, $w(E(S,\overline{S}))>0$, then $\lambda_2(A_G)<d$. If the matrix $A_G$ is stochastic, then the graph is $1$-regular, which gives the following.

	\begin{corollary}\label{cor:cheegerconnected}
		If $A\in \R^{n\times n}$ is a stochastic matrix corresponding to a reversible Markov chain with the property that $\sum_{\card{\set{i,j}\cap S}=1} A_{i,j}>0$ for all subsets $\emptyset \subsetneq S\subsetneq [n]$, then $\lambda_2(A)<1$.
	\end{corollary}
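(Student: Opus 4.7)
The plan is to apply Cheeger's inequality (\cref{thm:Cheeger}) exactly as suggested by the discussion immediately preceding the corollary. Specifically, I would view the symmetric stochastic matrix $A$ as the weighted adjacency matrix $A_G$ of a weighted graph $G = ([n], E, w)$, with $w(\set{i,j}) = A_{i,j}$ for $i \neq j$ (and any nonzero diagonal entries of $A$ treated as self-loops).

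Two observations then yield the conclusion. First, since $A$ is stochastic, the weighted degree of each vertex $i$ satisfies $\sum_j A_{i,j} = 1$, so $G$ is $1$-regular with $A_G = A$. Second, the hypothesis $\sum_{\card{\set{i,j} \cap S} = 1} A_{i,j} > 0$ for every proper nonempty $S \subseteq [n]$ is precisely the statement that the cut weight $w(E(S, \overline{S}))$ is positive, which in turn means $\cond(S) > 0$ for every such $S$, and hence $\cond(G) > 0$. Plugging into the right-hand inequality of Cheeger with $d = 1$ gives $\cond(G) \leqslant \sqrt{2(1 - \lambda_2(A))}$, which rearranges to $\lambda_2(A) \leqslant 1 - \cond(G)^2 / 2 < 1$, as desired.

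There is essentially no obstacle here: the corollary is a genuinely immediate consequence of Cheeger once the algebraic hypothesis is recognized as a graph-connectivity statement. The only mild subtlety is that this reading treats $A$ as symmetric, which is consistent with the paper's usage where $A_G$ is the weighted adjacency matrix of a weighted graph and is thus automatically symmetric. For a truly asymmetric reversible transition matrix, one would instead pass to the symmetric edge weights $\pi(i) A_{i,j}$ and invoke a non-regular version of Cheeger; alternatively, a direct maximum-principle argument (take an eigenvector $v$ with $Av = v$, let $S = \set{i : v_i = \min_k v_k}$, use the hypothesis together with reversibility $\pi(i) A_{i,j} = \pi(j) A_{j,i}$ to produce $i \in S, j \notin S$ with $A_{i,j} > 0$, and derive $v_i = \sum_k A_{i,k} v_k > \min_k v_k$) handles the general case uniformly.
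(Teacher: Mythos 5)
Your proof is correct and follows the same route as the paper: the paper also derives this directly from Cheeger's inequality (\cref{thm:Cheeger}), reading the stochastic matrix as the weighted adjacency matrix of a $1$-regular graph whose connectivity (positive cut weight for every proper nonempty $S$) forces $\cond(G)>0$ and hence $\lambda_2(A)<1$. Your closing remark on handling a non-symmetric reversible $A$ (symmetrizing via $\pi(i)A_{i,j}$ or the maximum-principle argument) is a valid extra precaution that the paper glosses over, but it does not change the essence of the argument.
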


\subsection{Matroids}\label{subsec:matroids}
	A \emph{matroid} $M=([n],\cI)$ is a combinatorial structure consisting of a ground set $[n]$ of elements and a nonempty collection $\cI$ of \emph{independent} subsets of $[n]$ satisfying: 
	\begin{enumerate}[i)]
		\item If $S\subseteq T$ and $T\in \cI$, then $S\in \cI$ (hereditary property).
	    \item If $S,T\in \cI$ and $\card{T}>\card{S}$, then there exists an element $i\in T \setminus S$ such that $S\cup \set{i}\in \cI$ (exchange axiom).
	\end{enumerate}
	The \emph{rank}, denoted by $\rank(S)$, of a subset $S \subset [n]$ is the size of any maximal independent set of $M$ contained in $S$. Thus, the independent sets of $M$ are precisely those subsets $S \subset [n]$ for which $\rank(S) = \card{S}$. We call $\rank([n])$ the rank of $M$, and if $M$ has rank $r$, any set $S\in \cI$ of size $r$ is called a \emph{basis} of $M$.

	An element $i \in [n]$ is a \emph{loop} if $\set{i} \notin \cI$, that is, $\set{i}$ is dependent. Two non-loops $i,j \in [n]$ are \emph{parallel} if $\set{i,j} \notin \cI$, that is, $\set{i,j}$ is dependent.

	\begin{definition}[Contraction]
	Let $M = ([n],\cI$) be a matroid and $S \in \cI$. Then the \emph{contraction $M/S$} is the matroid with ground set $[n] \setminus S$ and independent sets $\set{T \subseteq [n] \setminus S \given T \cup S \in \cI}$.
	\end{definition}

	We will use a key property of matroids called the \emph{matroid partition property}. For any matroid $M=([n],\cI)$,  the non-loops of $M$ can be partitioned into sets $S_1,S_2,\dots,S_k$ for some $1 \leqslant k\leqslant n$ with the property that non-loops $j,k \in [n]$ are parallel if and only if they belong to the same set $S_i$. Indeed, one can check from the axioms for a matroid that being parallel defines an equivalence relation on the non-loop elements of $[n]$ and $S_{1},\dots,S_{k}$ are then the corresponding equivalence classes.

\subsection{Simplicial Complexes}\label{sec:SimplicialComplex}

	A \emph{simplicial complex} $X$ on the ground set $[n]$ is a nonempty collection of subsets of $[n]$ that is downward closed, namely  if $\tau \subset \sigma$ and $\sigma \in X$, then $\tau \in X$. The elements of $X$ are called \emph{faces/simplices}, and the \emph{dimension} of a face $\tau \in X$ is defined as $\dim(\tau) = \card{\tau}$. Note that for convenience and clarity of notation, our definition deviates from the standard definition of $\dim(\tau) = \card{\tau}-1$ used by topologists.
	
	The empty set $\emptyset$ has dimension-$0$. A face of dimension 1 is a \emph{vertex} of $X$ and a face of dimension 2 is called an \emph{edge}. More generally, we write 
	\[X(k) \ = \ \set{\tau\in X\given \dim(\tau) =k}\]
	for the collection of dimension-$k$ faces, or \emph{$k$-faces/$k$-simplices}, of $X$. The dimension of $X$ is the largest $k$ for which $X(k)$ is nonempty, and we say that $X$ is \emph{pure} of dimension $d$ if all maximal faces of $X$ have the dimension $d$. In this paper we will only consider pure simplicial complexes.

	The \emph{link} of a face $\tau\in X$ denoted by $X_\tau$ is the simplicial complex on $[n]\setminus \tau$ obtained by taking all faces in $X$ that contain $\tau$ and removing $\tau$ from them,
	\[ X_\tau \ = \ \set{\sigma\setminus\tau\given  \sigma\in X, \sigma\supset\tau}.\]
	Note that if $X$ is pure of dimension $d$ and $\tau\in X(k)$, then $X_\tau$ is pure and has dimension $(d-k)$.

	For any matroid $M = ([n],\cI)$ of rank $r$, the independent sets $\cI$ form a pure $r$-dimensional simplicial complex on $[n]$ called its \emph{independence (or matroid) complex}. Furthermore, for any $S \in \cI$, the link $\cI_{S}$ of the independence complex consists precisely of the independent sets of the contraction $M/S$. There are many other beautiful simplicial complexes associated to matroids, but here we will be mainly interested in the independence complex. 

	We can equip a simplicial complex with a weight function: $w:X\to\R_{>0}$ which assigns a positive weight to each face of $X$. We say that $w$ is \emph{balanced} if for every non-maximal face $\tau\in X$ of dimension $k$, 
	\begin{align}
	    w(\tau) \ = \  \sum_{\sigma \in X(k+1) : \sigma \supset \tau} w(\sigma)\label{eq:balancedweight}
	\end{align}
	For a pure simplicial complex $X$ we can define a balanced weight function by assigning arbitrary positive weights to maximal faces and defining the weight of each lower dimensional face recursively. Indeed, if $X$ is a pure simplicial complex of dimension $d$ and $w$ is a balanced weight function, then, for any $\tau\in X(k)$,
	\[ w(\tau)=(d-k)! \sum_{\sigma\in X(d): \sigma\supset \tau} w(\sigma).\]
	One natural choice is the function which assigns a weight of one to each maximal face, but there are many other interesting choices. 
 
	Any balanced weight function on $X$ induces a weighted graph on the vertices of $X$ as follows. The \emph{1-skeleton} of $X$ is the graph on vertices $X(1)$ with edges $X(2)$.  Then, restricting $w$ to $X(1)$ and $X(2)$ determines a weighted graph, where $w(v)$ gives the weighted degree of each $v \in X(1)$. The weighted graphs coming from both $X$ and its links $X_\tau$ will be useful in later sections.

\subsection{Log-Concave Polynomials}
	We say a polynomial $p\in\R[x_1,\dots,x_n]$ is $d$-homogeneous if every monomial of $p$ has degree $d$; equivalently, $p$ is $d$-homogeneous if $p(\lambda x_1,\dots,\lambda x_n)=\lambda^d p(x_1,\dots,x_n)$ for every $\lambda \in \R$. For a $d$-homogeneous polynomial $p$, the following identity, known as Euler's identity, holds:
	\begin{align}
    	d\cdot p(x) &= \sum_{k=1}^{n} x_{k}\partial_k p(x). \label{eq:eulerppartial}
    \end{align}
	Note that if $p$ is homogeneous then all directional derivatives of $p$ are also homogeneous, so one can apply this to  $\partial_i p(x) $ and $\partial_i\partial_j p(x)$ to find that 
	\begin{align}
	\label{eq:Hesssum}
	    (d-1) \cdot (\nabla p) = \sum_{k=1}^{n} x_{k} \cdot (\nabla (\partial_k p)) = (\nabla^{2}p) \cdot x \quad \text{ and } \quad (d-2) \cdot (\nabla^{2}p) = \sum_{k=1}^{n} x_{k} \cdot (\nabla^{2}(\partial_k p)).
	\end{align}

	A polynomial $p \in \R[x_1,\dots,x_n]$ with nonnegative coefficients is log-concave if $\log p$ is a concave function over $\R_{>0}^{n}$. For simplicity we also consider the zero polynomial to be log-concave. Equivalently, $p$ is log-concave if the Hessian of $\log p$
	\[\nabla^2\log p=\frac{p\cdot (\nabla^2 p) - (\nabla p) (\nabla p)^\intercal}{p^{2}}\]
	is negative semi-definite at any point $x \in \R_{>0}^{n}$, where $\nabla p$ is the gradient of $p$. Since $(\nabla p)(\nabla p)^\intercal$ is a rank-1 matrix, by Cauchy's interlacing theorem, $p\cdot (\nabla^2 p)$ has at most one positive eigenvalue at any $x\in\R_{>0}^{n}$. Since $p$ has nonnegative coefficients and $x$ has strictly positive entries, $p(x)>0$ so $\nabla^{2} \log p$ being negative semidefinite is equivalent to $\nabla^{2}p \preccurlyeq \frac{(\nabla p)(\nabla p)^{\intercal}}{p}$, where the right-hand side is a rank-1 positive semidefinite matrix. In particular, $\nabla^{2}p$ has at most one positive eigenvalue at $x$. In \cite{AOV18} it is shown that for homogeneous polynomials $p$, the converse of this is also true, i.e., if $\nabla^2 p$ has at most one positive eigenvalue at all $x>0$ then $p$ is log-concave.
	\begin{proposition}[\cite{AOV18}]\label{prop:oneposeigenvalue}
		A degree-$d$ homogeneous polynomial $p\in\R[x_1,\dots,x_n]$ with nonnegative coefficients is log-concave over $\R_{>0}^{n}$ iff $(\nabla^2 p)(x)$ has at most one positive eigenvalue at all $x\in \R_{>0}^{n}$.
	\end{proposition}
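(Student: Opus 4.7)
The forward direction is essentially already in the discussion immediately preceding the proposition: log-concavity of $p$ gives $\nabla^{2}\log p \preccurlyeq 0$, which (since $p(x) > 0$ throughout $\R_{>0}^{n}$) is equivalent to $\nabla^{2}p \preccurlyeq \frac{(\nabla p)(\nabla p)^{\intercal}}{p}$, and because the right-hand side is rank-$1$ PSD, Cauchy interlacing (\cref{thm:CauchyInterlacing}) forces $\nabla^{2}p$ to have at most one positive eigenvalue. The nontrivial content is the converse, which I will prove by contradiction, exploiting homogeneity via Euler's identity.

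Suppose $(\nabla^{2}p)(x)$ has at most one positive eigenvalue for every $x \in \R_{>0}^{n}$, and assume toward a contradiction that $p$ fails to be log-concave at some $x \in \R_{>0}^{n}$ (at which necessarily $p(x) > 0$, since $p$ has nonnegative coefficients and is not identically zero). Then there exists $v \in \R^{n}$ with $p(x) \cdot v^{\intercal}(\nabla^{2}p)(x) v > (v^{\intercal}\nabla p(x))^{2}$. The plan is to exhibit a two-dimensional subspace on which $\nabla^{2}p$ is positive definite; by Cauchy interlacing this will force $\nabla^{2}p$ to have at least two positive eigenvalues, the desired contradiction. The natural candidate is $U = \operatorname{span}(v, x)$, and the key input is Euler's identity (\cref{eq:Hesssum}) specialized to the Hessian, $(\nabla^{2}p) x = (d-1)\nabla p$, from which
\[ x^{\intercal}(\nabla^{2}p)x = d(d-1) p(x) \qquad \text{and} \qquad v^{\intercal}(\nabla^{2}p)x = (d-1) v^{\intercal}\nabla p. \]

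With these in hand, the compression of $\nabla^{2}p$ to $U$ is encoded by the $2 \times 2$ Gram matrix
\[ G = \begin{pmatrix} v^{\intercal}(\nabla^{2}p)v & (d-1) v^{\intercal}\nabla p \\ (d-1) v^{\intercal}\nabla p & d(d-1) p(x) \end{pmatrix}. \]
Multiplying the failure-of-log-concavity hypothesis by $d(d-1)$ yields
\[ \det G = d(d-1) p(x) \cdot v^{\intercal}(\nabla^{2}p)v - (d-1)^{2}(v^{\intercal}\nabla p)^{2} > (d-1)(v^{\intercal}\nabla p)^{2} \geqslant 0, \]
and combined with the strictly positive $(2,2)$-entry $d(d-1)p(x)$, Sylvester's criterion gives $G \succ 0$. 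The degenerate case $v = \alpha x$ cannot occur: the hypothesis would collapse to $\alpha^{2}d(d-1)p(x)^{2} > \alpha^{2}d^{2}p(x)^{2}$, which is absurd. Hence $\dim U = 2$, so $\nabla^{2}p$ is positive definite on a genuine $2$-dimensional subspace, and Cauchy interlacing delivers the contradiction.

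The main obstacle is really just identifying the correct two-dimensional subspace; once $U = \operatorname{span}(v,x)$ is chosen, homogeneity via Euler's identity makes every entry of $G$ explicit and the determinant calculation becomes essentially mechanical. Notably, the argument uses neither multiaffineness nor any other structural property of $p$ beyond homogeneity and nonnegativity of coefficients, which matches the generality of the statement.
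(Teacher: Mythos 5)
Your proof is correct, and it is worth noting that the paper itself offers no proof of this proposition: it is quoted from \cite{AOV18}, and your argument is essentially the one used there. The forward direction is the rank-one interlacing observation already sketched in the text before the statement; for the converse you restrict the Hessian to $U=\operatorname{span}(v,x)$ and use homogeneity, via $(\nabla^{2}p)x=(d-1)\nabla p$ and $x^{\intercal}\nabla p=d\,p(x)$ (\cref{eq:Hesssum,eq:eulerppartial}), to compute the $2\times 2$ compression explicitly; positive definiteness of that compression forces at least two positive eigenvalues of $\nabla^{2}p(x)$, contradicting the hypothesis. Two small points of hygiene, neither of which affects correctness: (i) multiplying the failure-of-log-concavity inequality by $d(d-1)$ and using positivity of the $(2,2)$-entry implicitly assumes $d\geqslant 2$; for $d\leqslant 1$ the Hessian vanishes identically, so log-concavity is automatic and the contradiction hypothesis cannot arise, but the case split deserves a sentence. (ii) The final step is eigenvalue interlacing for compressions $Q^{\intercal}(\nabla^{2}p)Q$ with $Q$ an orthonormal basis of $U$ (equivalently, an application of \cref{thm:Courant-Fischer}), not the rank-one-update form of interlacing stated in \cref{thm:CauchyInterlacing}, so you should cite the appropriate tool.
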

	\begin{remark}
	Whenever $p$ has degree at least 2 and nonnegative coefficients, $(\nabla^{2}p)(x)$ has at least one positive entry for any $x \in \R_{>0}^{n}$. Hence, one can see (via, for example, the variational characterization of eigenvalues \cref{thm:Courant-Fischer} and using a test vector with positive entries) that $(\nabla^{2}p)(x)$ must have at least one strictly positive eigenvalue. Thus, whenever we write ``at most one positive eigenvalue'', we also mean it has ``exactly one positive eigenvalue''.
	\end{remark}
%\begin{proof}	
%\end{proof}

%The following facts about strongly log-concave polynomials are also immediate.
%\begin{fact}\label{fact:limits}
%Let $\set{p_{k}(x_{1},\dots,x_{m})}_{k\geqslant 0}$ be a sequence of bounded degree polynomials which are strongly log-concave such that $p_{k} \rightarrow p$ coefficient-wise as $k \rightarrow \infty$. Then $p$ is strongly log-concave.
%\end{fact}
%The proof follows from the fact that the set of log-concave polynomials is closed in the Euclidean space of polynomials of degree at most $d$.

%\begin{fact}\label{fact:specialization}
%If $p(x_{1},\dots,x_{m})$ is strongly log-concave, then $q(x) = p(x,\dots,x)$ is strongly log-concave.
%\end{fact}
	We say a polynomial $p\in \R[x_1, \dots, x_n]$ is \emph{decomposable} if it can be written as a sum of polynomials in disjoint subset of the variables, that is, if there exists a nonempty subset $I \subsetneq [n]$ and nonzero polynomials $g\in \R[x_i : i\in I]$, $h \in \R[x_i : i\not\in I]$  for which $f = g+h$. We call $f$ \emph{indecomposable} otherwise.

	\begin{lemma}\label{prop:LCtoDecomp}
		If $p\in\R[x_1,\dots,x_n]$ has nonnegative coefficients, is homogeneous of degree at least 2, and log-concave at $\bone$, then $p$ is indecomposable. 
	\end{lemma}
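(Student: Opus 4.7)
The plan is to proceed by contradiction: suppose $p = g + h$ where $g \in \R[x_i : i \in I]$ and $h \in \R[x_i : i \notin I]$ are both nonzero, with $\emptyset \neq I \subsetneq [n]$. The key observation will be that the disjointness of variables forces $\nabla^2 p(\bone)$ to have a block-diagonal structure, which combined with log-concavity (via \cref{prop:oneposeigenvalue}) will produce a contradiction.

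First I would verify that $g$ and $h$ inherit the relevant properties of $p$. Since $p$ is $d$-homogeneous with nonnegative coefficients and $g$ is obtained by collecting all monomials of $p$ supported on the variables in $I$, the polynomial $g$ is also $d$-homogeneous with nonnegative coefficients; the same holds for $h$. In particular, both $g$ and $h$ have degree $d \geqslant 2$, which is exactly the hypothesis needed to invoke the remark following \cref{prop:oneposeigenvalue}.

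Next I would analyze the Hessian. Because $g$ depends only on $\set{x_i : i \in I}$ and $h$ only on $\set{x_i : i \notin I}$, the mixed partials $\partial_i \partial_j p$ vanish whenever $i \in I$ and $j \notin I$. Thus $\nabla^2 p(\bone)$ is block-diagonal with blocks $\nabla^2 g(\bone)$ (indexed by $I$) and $\nabla^2 h(\bone)$ (indexed by $[n] \setminus I$). By the remark after \cref{prop:oneposeigenvalue}, each of these blocks has at least one strictly positive eigenvalue (the remark applies since each is degree-$d$ homogeneous with nonnegative coefficients and $d \geqslant 2$, guaranteeing a positive diagonal entry at $\bone$ and hence a positive eigenvalue by plugging a suitable positive test vector into \cref{thm:Courant-Fischer}). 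Hence $\nabla^2 p(\bone)$ has at least two strictly positive eigenvalues.

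However, log-concavity of $p$ at $\bone$ together with \cref{prop:oneposeigenvalue} implies that $\nabla^2 p(\bone)$ has at most one positive eigenvalue, giving the desired contradiction. I do not anticipate any serious obstacle here; the only subtlety is making sure the remark after \cref{prop:oneposeigenvalue} truly applies to $g$ and $h$ in isolation (even when, say, $\card{I} = 1$ and $g$ reduces to a single monomial $c x_i^d$), but in that case $\partial_i^2 g(\bone) = c d(d-1) > 0$ is a $1 \times 1$ positive eigenvalue, so the argument goes through uniformly.
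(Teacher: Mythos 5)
Your proof is correct and takes essentially the same route as the paper's: decompose $p=g+h$, observe that $\nabla^2 p(\bone)$ is block-diagonal with blocks $\nabla^2 g(\bone)$ and $\nabla^2 h(\bone)$, each contributing a strictly positive eigenvalue, contradicting the fact that log-concavity at $\bone$ forces $\nabla^2 p(\bone)$ to have at most one positive eigenvalue. One small correction: the positive entry of $\nabla^2 g(\bone)$ guaranteed by the remark after \cref{prop:oneposeigenvalue} need not be a \emph{diagonal} entry (for multiaffine $g$ the diagonal vanishes identically), but since all entries are nonnegative and some entry $\partial_i\partial_j g(\bone)$ is positive, a test vector such as $e_i+e_j$ still produces a positive eigenvalue for each block, so your argument goes through unchanged.
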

	\begin{proof}
		Suppose that $p$ has nonnegative coefficients, is homogeneous of degree $\geqslant 2$, and is decomposable, with decomposition  $p = g+h$ where $g\in \R[x_i : i\in I]$ and $h \in \R[x_i : i\not\in I]$. Both $g$ and $h$ are restrictions of $p$ obtained by setting some variables equal to zero, therefore both $g$ and $h$ are log-concave. Then, at $\bone$, the Hessians of $g$ and $h$ each have precisely one positive eigenvalue.  However, the Hessian of $p$ at this point is a block diagonal matrix with these two blocks, $\nabla^2g,\nabla^2h$, 
			\[ \nabla^2 p= \begin{bmatrix} \nabla^2 g & 0 \\ 0& \nabla^2h \end{bmatrix}.\]
		 So, $p$ has exactly two positive eigenvalues, meaning that $p$ is not log-concave, a contradiction.
	\end{proof}

	In order to prove several distributions of interest are strongly log-concave, we will prove an equivalent characterization of strongly log-concave polynomials.
	\begin{theorem}\label{thm:SLCdeg2}
		Let $p\in \R[x_1,\dots,x_n]$ be a $d$-homogeneous polynomial such that:
		\begin{enumerate}
		    \item for any $0\leqslant k\leqslant d-2$ and any $(i_{1},\dots,i_{k}) \in [n]^k$, $\partial_{i_{1}}\dotsb \partial_{i_{k}} p$ is indecomposable, and 
		    \item for any  $(i_1,\dots,i_{d-2})\in [n]^{d-2}$, the quadratic $\partial_{i_1}\dots \partial_{i_{d-2}} p$ is either identically zero, or log-concave at $\bone$.
		\end{enumerate}
		Then $p$ is strongly log-concave at $\bone$. 
	\end{theorem}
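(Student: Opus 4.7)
I proceed by downward induction on the order $k$ of differentiation, showing that $\partial_{i_1}\cdots\partial_{i_k} p$ is log-concave at $\bone$ for every sequence. The cases $k = d-2$ (hypothesis~(2)) and $k \geq d-1$ (affine or constant) are immediate. So fix $0 \leq k \leq d-3$, a multi-index $I$, and let $q := \partial_I p$, which is $m$-homogeneous with $m := d-k \geq 3$ and has nonnegative coefficients. By \cref{prop:oneposeigenvalue} it suffices to show $H := (\nabla^2 q)(\bone)$ has at most one positive eigenvalue; since $H$ has zero rows and columns outside the set $V(q)$ of variables actually appearing in $q$, I work with the principal submatrix $H|_{V(q)}$.

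Next, I turn the inductive hypothesis into a matrix inequality on $H$ alone. For each $\ell \in V(q)$, the inductive hypothesis gives log-concavity of $\partial_\ell q$ at $\bone$, so $H^{(\ell)} := (\nabla^2 \partial_\ell q)(\bone)$ is symmetric, entrywise nonnegative, and has at most one positive eigenvalue. \Cref{lem:adjacency1poseig} then yields
\[ H^{(\ell)} \preccurlyeq \frac{(H^{(\ell)}\bone)(H^{(\ell)}\bone)^\top}{\bone^\top H^{(\ell)}\bone}. \]
Applying Euler's identity~\eqref{eq:Hesssum} twice, I evaluate the right-hand side in terms of $H$ and $w := (\nabla q)(\bone)$: $H^{(\ell)}\bone = (m-2)\, H e_\ell$ and $\bone^\top H^{(\ell)}\bone = (m-2)(m-1)\, w_\ell$. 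Summing the resulting bounds over $\ell$ and using the Euler identity $(m-2) H = \sum_\ell H^{(\ell)}$ produces the central inequality
\[ H|_{V(q)} \preccurlyeq \tfrac{1}{m-1}\, H|_{V(q)}\, D^{-1}\, H|_{V(q)}, \qquad D := \diag(w)|_{V(q)}, \]
which, after the congruence $M := D^{-1/2} H|_{V(q)} D^{-1/2}$, becomes $M^2 \succcurlyeq (m-1) M$.

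Finally, I read off the spectrum of $M$. Since $M$ is symmetric, every real eigenvalue $\mu$ satisfies $\mu(\mu - (m-1)) \geq 0$, so $\mu \leq 0$ or $\mu \geq m-1$. Euler's identity also gives $M\sqrt{w} = (m-1)\sqrt{w}$, exhibiting $m-1$ as an eigenvalue with a strictly positive eigenvector. Indecomposability of $q$ (hypothesis~(1)) forces the off-diagonal support graph of $H|_{V(q)}$ to be connected, so $M$ is an irreducible nonnegative symmetric matrix; by the Perron-Frobenius theorem (irreducible form; cf.\ \cref{thm:perron}), $m-1$ is its unique simple top eigenvalue and every other eigenvalue is strictly smaller in absolute value. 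Combined with the dichotomy above, all other eigenvalues lie in $[-(m-1),0]$, so $M$ has exactly one positive eigenvalue. Sylvester's law of inertia (congruence via invertible $D^{-1/2}$) transfers this to $H|_{V(q)}$, hence to $H$, closing the induction.

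The delicate step is precisely the passage from bounds on the individual summands $H^{(\ell)}$ to a bound on $H$: a sum of $n$ rank-one PSD matrices can easily produce many positive eigenvalues, so the inductive hypothesis by itself is far too weak. The point of the argument is that \cref{lem:adjacency1poseig} replaces each $H^{(\ell)}$ not by an arbitrary rank-one PSD but by one built out of $H$ itself, producing a self-referential bound $H \preccurlyeq \tfrac{1}{m-1}\, H D^{-1} H$ whose spectrum is automatically constrained to avoid the interval $(0, m-1)$; indecomposability is then exactly what Perron-Frobenius needs to exclude extra eigenvalues at or above $m-1$.
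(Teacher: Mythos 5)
Your proof is correct, and at its core it runs the same ``trickle-down'' induction as the paper: reduce log-concavity of a degree-$m$ derivative to that of its first-order derivatives, and show every eigenvalue $\mu$ of the (suitably normalized) Hessian satisfies $\mu\leqslant 0$ or $\mu\geqslant m-1$, then kill all but one top eigenvalue using indecomposability. What you do differently is the implementation of both halves. For the spectral dichotomy, the paper's proof of \cref{thm:SLCdeg2} takes an eigenvector $\phi$ of the stochastic matrix $\tilde\nabla^2 q$, decomposes it along $\bone$ in each inner product $\dotprod{\cdot,\cdot}_{\partial_\ell q}$, and computes $\mu\leqslant\mu^2$; you instead sum the rank-one upper bounds $H^{(\ell)}\preccurlyeq (H^{(\ell)}\bone)(H^{(\ell)}\bone)^\intercal/\bone^\intercal H^{(\ell)}\bone$ from \cref{lem:adjacency1poseig} and Euler's identity to get the Loewner inequality $H\preccurlyeq\frac{1}{m-1}HD^{-1}H$, which after the congruence by $D^{-1/2}$ is exactly the same statement ($\mu\leqslant\mu^2$ for the stochastic normalization). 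This is in effect the polynomial-language translation of the paper's \cref{lem:lowerupper} on the simplicial-complex side, and it is arguably cleaner since it avoids the coordinate computation. For simplicity of the top eigenvalue you invoke Perron--Frobenius for irreducible nonnegative matrices rather than the paper's Cheeger-based \cref{cor:cheegerconnected}; both rest on the same observation that indecomposability of $q$ makes the off-diagonal support graph of $\nabla^2 q(\bone)$ connected. Two small caveats, neither a gap: your parenthetical ``every other eigenvalue is strictly smaller in absolute value'' overstates Perron--Frobenius for merely irreducible (possibly bipartite-supported) matrices --- but you only use $\abs{\mu}\leqslant m-1$ together with simplicity, which is what the theorem actually gives, and note \cref{thm:perron} as stated in the paper assumes strictly positive entries, so you do need the standard irreducible version; and \cref{lem:adjacency1poseig} as proved assumes positive row sums, so for $H^{(\ell)}$ with zero rows (variables absent from $\partial_\ell q$) one should restrict to the support and extend by zeros, a routine step. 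Like the paper, you implicitly use that all coefficients are nonnegative (which follows from hypothesis (2) under the paper's definition of log-concavity), so this is consistent with the paper's reading of the statement.
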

	In \cref{prop:LCtoDecomp} we show that the condition that all partial derivatives are indecomposable is necessary for a polynomial to be (strongly) log-concave.

%{\cynthia{C: It seems like we really just prove this at $\bone$. It wouldn't be hard to go from there to the general statement, but we don't do this and we don't need it.  Maybe we should just remove the parentheses around ``at $\bone$''? }}

\section{Walks on Simplicial Complexes}\label{sec:highdimwalk}
	Consider a pure $d$-dimensional complex $X$ with a balanced weight function $w:X\to\R_{>0}$. We will call $(X,w)$ a \emph{weighted complex}. For $1\leqslant k< d$, we define a random walk on $X(k)$ known as the \emph{upper $k$-walk} based on movement from an face in $X(k)$ to a higher-dimensional face and then returning to $X(k)$. Similarly, for $1 \leqslant k < d$, we define a random walk on $X(k+1)$ known as the \emph{lower $k$-walk} based on movement from a face in $X(k+1)$ to a lower-dimensional face and then returning to $X(k+1)$. 
	
	To define these walks we construct a bipartite graph $G_k$ with one side corresponding to $X(k)$ and the other side corresponding to $X(k+1)$. We connect $\tau\in X(k)$ to $\sigma\in X(k+1)$ with an edge of weight $w(\sigma)$ iff $\tau\subset \sigma$. Now, consider the simple (weighted) random walk on $G_k$. Given a vertex we choose a neighbor proportional to the weight of the edge connecting the two vertices. 

	This is a walk on a bipartite graph and is naturally periodic. We can consider the odd steps and even steps, in order to obtain two random walks; one on $X(k)$ called $P_{k}^{\wedge}$, and the other on $X(k+1)$ called $P_{k+1}^{\vee}$, where given $\tau\in X(k)$ you take two steps of the walk in $G_k$ to transition to the next $k$-face with respect to the $P_{k}^{\wedge}$ matrix, and similarly, you take two steps in $G_k$ from $\sigma\in X(k+1)$ to transition with respect to $P_{k+1}^{\vee}$.

	Now, let us formally write down the entries of $P_{k}^{\wedge}$ and $P_{k+1}^{\vee}$. Given a simplex $\tau\in X(k)$, first among all $k+1$ dimensional simplices $\sigma\in X(k+1)$ that contain $\tau$ we choose one proportional to $w(\sigma)$. Then, we delete one of the $\card{\sigma}=k+1$ elements of $\sigma$ uniformly at random to obtain a new state $\tau'$. It follows that the probability of transition to $\tau'$ is equal to the probability of choosing $\sigma=\tau\cup\tau'$ in the first step, which is equal to $\frac{w(\tau\cup\tau')}{w(\tau)}$ since $w$ is balanced, times the probability of choosing $\tau'$ conditioned on $\sigma=\tau\cup\tau'$, which is $\frac1{k+1}$. In summary, for $1 \leqslant k < d$,
	\begin{align}\label{eq:P+def}
	    P_{k}^{\wedge}(\tau,\tau') &= \begin{cases}
	        \frac{1}{k+1}, &\quad \text{if } \tau = \tau' \\
	        \frac{w(\tau \cup \tau')}{(k+1)w(\tau)}, &\quad \text{if } \tau \cup \tau' \in X(k+1) \\
	        0, &\quad \text{otherwise}
	    \end{cases} 
	\end{align}
	Note that upper walk is not defined for $k=d$, because there is no $(d+1)$-dimensional simplex in $X$.

	Analogously, given $\sigma\in X(k+1)$, first we remove a uniformly random element of $\sigma$ to obtain $\tau$. Then, among all all $k+1$ simplices $\sigma'\in X(k+1)$ that contain $\tau$ we choose one proportional to $w(\sigma')$. It follows that for $1 \leqslant k < d$,
	\begin{align}\label{eq:P-def}
	    P_{k+1}^{\vee}(\sigma,\sigma') &= \begin{cases}
	        \sum_{\tau \in X(k) : \tau \subset \sigma} \frac{w(\sigma)}{(k+1)w(\tau)}, &\quad \text{if } \sigma = \sigma' \\
	        \frac{w(\sigma')}{(k+1)w(\sigma \cap \sigma')}, &\quad \text{if } \sigma \cap \sigma' \in X(k) \\
	        0, &\quad \text{otherwise}
	    \end{cases}
	\end{align}
	Observe the corresponding random walks are reversible with respect to the weight function $w$, i.e., for all $\tau,\tau'\in X(k)$, we have
	\[w(\tau) P^{\wedge}_k(\tau,\tau')=w(\tau')P^{\wedge}_k(\tau',\tau) \quad\quad w(\tau) P^{\vee}_k(\tau,\tau')=w(\tau')P^{\vee}_k(\tau',\tau).\]
	This implies that both chains have the same stationarity distribution where the probability of $\tau\in X(k)$ is proportional to $w(\tau)$. 

	\begin{lemma}\label{lem:upperlower}
		For any $1\leqslant k<d$, $P_{k}^{\wedge}$ and $P_{k+1}^{\vee}$ are stochastic, self-adjoint w.r.t.\ the $w$-induced inner product, PSD, and have the same (with multiplicity) non-zero eigenvalues.
	\end{lemma}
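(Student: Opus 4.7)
The plan is to realize both $P_k^{\wedge}$ and $P_{k+1}^{\vee}$ as products of two natural operators going up and down between consecutive levels of the complex, and then to read off every clause of the lemma from this single factorization. Concretely, I would introduce the \emph{down operator} $D : \R^{X(k)} \to \R^{X(k+1)}$ and \emph{up operator} $U : \R^{X(k+1)} \to \R^{X(k)}$ given by
\[ (Dg)(\sigma) = \frac{1}{k+1} \sum_{\tau \subset \sigma,\, \tau \in X(k)} g(\tau), \qquad (Uf)(\tau) = \sum_{\sigma \supset \tau,\, \sigma \in X(k+1)} \frac{w(\sigma)}{w(\tau)} f(\sigma). \]
A direct expansion of the matrix products, using the balanced weight identity $w(\tau) = \sum_{\sigma \supset \tau} w(\sigma)$, shows that the transition probabilities in \cref{eq:P+def,eq:P-def} are exactly the entries of $UD$ and $DU$, so $P_k^{\wedge} = UD$ and $P_{k+1}^{\vee} = DU$.

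From this factorization, stochasticity is immediate: $U$ is row-stochastic by balancedness, $D$ is row-stochastic because it averages uniformly over the $k+1$ sub-$k$-faces, and products of stochastic matrices are stochastic. Self-adjointness of each of $P_k^{\wedge}$ and $P_{k+1}^{\vee}$ with respect to the $w$-induced inner products $\dotprod{\phi, \psi}_{k} = \sum_{\tau \in X(k)} w(\tau) \phi(\tau) \psi(\tau)$ is then just a restatement of the reversibility identities recorded in the preceding paragraph of the paper.

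The main content of the lemma lies in the last two claims, and the key observation is that $U$ and $(k+1)D$ are adjoints of one another with respect to these weighted inner products. Indeed, swapping the order of summation gives
\[ \dotprod{Uf, g}_{k} = \sum_{\sigma \in X(k+1)} w(\sigma) f(\sigma) \sum_{\tau \subset \sigma} g(\tau) = (k+1)\, \dotprod{f, Dg}_{k+1}, \]
so $U^{\ast} = (k+1)D$ as operators, and therefore
\[ P_{k}^{\wedge} = UD = \tfrac{1}{k+1}\, UU^{\ast}, \qquad P_{k+1}^{\vee} = DU = \tfrac{1}{k+1}\, U^{\ast}U. \]
Each of these has the form $\tfrac{1}{k+1} A A^{\ast}$, which is self-adjoint and positive semidefinite with respect to the weighted inner product since $\dotprod{AA^{\ast}g, g} = \dotprod{A^{\ast}g, A^{\ast}g} \geqslant 0$. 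The equality of the nonzero spectra of $UU^{\ast}$ and $U^{\ast}U$ (equivalently, of $UD$ and $DU$), with multiplicities, is then the standard \cref{fact:eigenvaluesinvert}.

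I do not expect a serious obstacle: the only place requiring care is tracking the factor of $k+1$ in the adjoint identity $U^{\ast} = (k+1)D$, which in turn is responsible for the $\tfrac{1}{k+1}$ in the factorization $P_{k}^{\wedge} = \tfrac{1}{k+1}\, UU^{\ast}$. Once that bookkeeping is in place, stochasticity, self-adjointness, positive semidefiniteness, and the coincidence of nonzero spectra all fall out of the $UD \leftrightarrow DU$ decomposition together with the elementary identity that $AB$ and $BA$ share their nonzero eigenvalues.
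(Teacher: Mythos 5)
Your argument is correct and is essentially the paper's own proof: your $U$ and $D$ are exactly the blocks $P_k^{\uparrow}$ and $P_k^{\downarrow}$ of the bipartite walk on $G_k$, the factorizations $P_k^{\wedge}=UD$ and $P_{k+1}^{\vee}=DU$ are the same, and the conclusion likewise rests on \cref{fact:eigenvaluesinvert}. The only cosmetic difference is that you verify the adjointness $U^{\ast}=(k+1)D$ explicitly (level by level), whereas the paper packages the same fact as self-adjointness of the full bipartite walk matrix $P_k$ with respect to its stationary inner product and then takes the diagonal blocks of $P_k^2$.
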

	\begin{proof}
		Let $P_k$ be the transition probability matrix of the simple random walk on $G_k$. Since $G_k$ is bipartite and we can write 
		\[ P_k=\begin{bmatrix} 0 & P_k^{\downarrow} \\ P_k^{\uparrow} & 0\end{bmatrix}\]
		where $P_k^{\downarrow}\in \R^{X(k+1)\times X(k)}$ and $P_k^{\uparrow}\in \R^{X(k)\times X(k+1)}$ are stochastic matrices. Note that $P_k$ is self-adjoint w.r.t.\ the weight-induced inner product given by weights of the stationary distribution. It follows that $P_k$ is self-adjoint w.r.t. the inner product
		\begin{align*}
		    \langle \phi, \psi \rangle = \sum_{\tau \in X(k)} w(\tau)\phi(\tau)\psi(\tau) + (k+1)\sum_{\sigma \in X(k+1)} w(\sigma)\phi(\sigma)\psi(\sigma)
		\end{align*}
		
		Observe that
		\[ P_k^2 = \begin{bmatrix}	P_k^{\downarrow}P_k^{\uparrow} & 0 \\ 0 & P_k^{\uparrow} P_k^{\downarrow} \end{bmatrix}\]
		So, in particular, $P_k^2$ is PSD and stochastic. Since $P_{k}^{\wedge}$ and $P_{k+1}^{\vee}$ correspond to two step walks on $G_k$, indeed we can write
		\begin{eqnarray*}
			P_{k}^{\wedge} &=& P_k^{\uparrow}P_k^{\downarrow}\\
			P_{k+1}^{\vee} &=& P_k^{\downarrow}P_k^{\uparrow}
		\end{eqnarray*}
		It follows that both matrices are self-adjoint w.r.t.\ the $w$-induced inner product, are PSD, and stochastic, and by \cref{fact:eigenvaluesinvert} they have the same eigenvalues.
	\end{proof}

%We will see that $P^{\pm}_k$ have nonnegative real eigenvalues. 
%It follows that all eigenvalues of $P^{\pm}_k$ are in the interval $[0,1]$ 
%and $\lambda^{*}(P_{k}^{\pm}) = \lambda_{2}(P_{k}^{\pm})$. 
%Hence, we also have by \autoref{thm:mixingtime} that $\lambda_{2}(P_{k}^{\pm})$ 
%controls the mixing time of the Markov chain given by $P_{k}^{\pm}$.

	Let us specifically study $P_{1}^{\wedge}$. Observe that $P_{1}^{\wedge}$ is the transition probability matrix of  the simple $(1/2)$-lazy random walk  on the weighted 1-skeleton of $X$ where the weight of each edge $e \in X(2)$ is $w(e)$. We also need to consider the non-lazy variant of this random walk, given by the transition matrix 
	\[\tP_{1}^{\wedge} \ = \ 2(P_{1}^{\wedge}-I/2).\] 
	Similarly, for any face $\tau\in X(k)$, we  define the upper  random walk on the faces of the link $X_{\tau}$. Specifically, let $P_{\tau,1}^{\wedge}$ denote the transition matrix of the upper walk, as above, on the 1-dimensional faces of $X_{\tau}$, and 
	\[\tP^{\wedge}_{\tau, 1} \ = \ 2(P_{\tau,1}^{\wedge}-I/2)\]
	be the transition matrix for the non-lazy version. 

	\begin{definition}[Local Spectral Expanders, \cite{KO18}]
		For $\lambda\geqslant 0$, a pure $d$-dimensional weighted complex $(X,w)$ is a $\lambda$-local-spectral-expander if for every $0 \leqslant k < d - 1$, and for every $\tau \in X(k)$, we have $\lambda_2(\tP^{\wedge}_{\tau, 1})\leqslant \lambda$.
	\end{definition}
	In other words, $X$ is $\lambda$-local spectral expander if the spectral gap of the natural random walk on the 1-skeleton of the link of all simplices of $X$ has a spectral gap of at least $1-\lambda$. In this section we give a somewhat simpler proof of the following special case of the main theorem of \cite{KO18}.
	\begin{theorem}[\cite{KO18}]\label{thm:localexpander}
		Let $(X,w)$ be a pure $d$-dimensional weighted $0$-local spectral expander and let $0\leqslant k<d$. Then, for all $-1 \leqslant i\leqslant k$, $P_{k}^{\wedge}$ has at most $\card{X(i)} \leqslant \binom{n}{i}$ eigenvalues of value $> 1-\frac{i+1}{k+1}$, where for convenience, we set $X(-1) = \emptyset$ and $\binom{n}{-1} = 0$. In particular, the second largest eigenvalue of $P_{k}^{\wedge}$ is at most $\frac{k}{k+1}$.
	\end{theorem}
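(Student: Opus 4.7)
The plan is induction on the level $k$ (with the complex $X$ fixed). The base case $k=0$ is trivial: since $X(0)=\set{\emptyset}$, the matrix $P_{0}^{\wedge}$ is $1\times 1$ equal to $I$, and both required conditions (at most $\card{X(-1)}=0$ eigenvalues above $1$; at most $\card{X(0)}=1$ eigenvalue above $0$) hold vacuously.

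For the inductive step from $k-1$ to $k$, the chain of spectral reductions already present in the preliminaries is the starting point. By \cref{lem:upperlower} and \cref{fact:eigenvaluesinvert}, $P_{k}^{\wedge}$ is self-adjoint and PSD with respect to the $w$-induced inner product, and the nonzero spectrum of the down-up walk $P_{k}^{\vee}$ on $X(k)$ coincides with that of $P_{k-1}^{\wedge}$ on $X(k-1)$, which is exactly what the induction hypothesis controls. So the goal becomes to dominate $P_{k}^{\wedge}$ by a convex combination of the identity and an operator whose spectrum is governed by $P_{k-1}^{\wedge}$.

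The crux is a Garland/Oppenheim-style operator inequality
\[ P_{k}^{\wedge}\ \preccurlyeq\ \tfrac{1}{k+1}\,I\ +\ \tfrac{k}{k+1}\,R_{k}, \]
where $R_{k}$ is self-adjoint w.r.t.\ the $w$-induced inner product and, via the balancedness \eqref{eq:balancedweight} together with the explicit formulas \eqref{eq:P+def}--\eqref{eq:P-def}, can be written as a weighted average over faces $\tau\in X(k-1)$ of the lifted vertex-link walks $P_{\tau,1}^{\wedge}$ on $X_{\tau}(1)$---equivalently, the nonzero spectrum of $R_{k}$ coincides with that of $P_{k-1}^{\wedge}$. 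The $0$-local spectral expansion hypothesis, namely that every $\tP_{\tau,1}^{\wedge}$ has all non-trivial eigenvalues $\leqslant 0$, is precisely the input needed to upgrade the approximate Garland inequalities of \cite{DK17} into the clean operator inequality above, with no residual error term.

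Granted this inequality, the eigenvalue count transfers to $P_k^\wedge$ via Weyl/Courant-Fischer (\cref{thm:Courant-Fischer}): for any threshold $t$, the number of eigenvalues of $P_{k}^{\wedge}$ exceeding $t$ is at most the same count for $\tfrac{1}{k+1}I+\tfrac{k}{k+1}R_{k}$. A direct substitution shows that the threshold $t=1-(i+1)/(k+1)$ on the dominating operator corresponds to the threshold $1-(i+1)/k$ on $R_{k}$; by the induction hypothesis applied to $P_{k-1}^{\wedge}$ at level $k-1$, this count is at most $\card{X(i)}$ for each $-1\leqslant i\leqslant k-1$, while the boundary case $i=k$ is trivial since $P_{k}^{\wedge}$ acts on a space of dimension $\card{X(k)}$. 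Specializing to $i=0$ yields at most $\card{X(0)}=1$ eigenvalue above $k/(k+1)$, which is the spectral gap statement. The main obstacle I expect is establishing the operator inequality itself with the precise form of $R_{k}$: one must carefully use the balancedness of $w$ to rewrite the quadratic form $\dotprod{\phi,P_{k}^{\wedge}\phi}$ as a sum over $\tau\in X(k-1)$ of local quadratic forms on the links $X_{\tau}$, and then invoke the $0$-local spectral expansion hypothesis to bound each local term by the corresponding projection onto the link's stationary direction. The cleanliness of this decomposition---tight enough to align spectra across adjacent levels exactly---is the technical heart of the Kaufman--Oppenheim refinement.
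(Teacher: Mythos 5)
Your proposal follows essentially the same route as the paper: your dominating operator $R_k$ is exactly the down-up walk $P_k^{\vee}$, your key inequality $P_k^{\wedge}\preccurlyeq\frac{1}{k+1}I+\frac{k}{k+1}P_k^{\vee}$ is the paper's \cref{lem:lowerupper}, the spectral identification with $P_{k-1}^{\wedge}$ is \cref{lem:upperlower}, and the inductive threshold arithmetic matches the paper's proof verbatim. The one piece you leave as an expected obstacle---proving the operator inequality by decomposing the quadratic form over faces $\eta\in X(k-1)$ and using $0$-local expansion on each link---is precisely how the paper proves \cref{lem:lowerupper} (via \cref{lem:adjacency1poseig}), so your plan is correct and complete in outline.
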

	\begin{remark}
	In other words, $P_{k}^{\wedge}$ has very few ``big'' eigenvalues. For example, $P_{k}^{\wedge}$ has exactly one eigenvalue strictly larger than $\frac{k}{k+1}$ corresponding to the maximum eigenvalue (which has value 1) and at most $n = \card{X(1)}$ eigenvalues strictly larger than $\frac{k-1}{k+1}$. Hence, $P_{k}^{\wedge}$ has at most $n -1$ eigenvalues between $\frac{k-1}{k+1}$ and $\frac{k}{k+1}$. Note that the significance of this theorem is that we are able to establish an estimate on \emph{all} eigenvalues of $P_{k}^{\wedge}$.
	\end{remark}
For the proof, we will need the following lemma. We remark that the inner product on the space $\R^{X(k)}$ is given by $\dotprod{\phi,\psi}=\sum_{\tau\in X(k)} w(\tau)\phi(\tau)\psi(\tau)$, and that being self-adjoint, PSD, and the Loewner order are defined w.r.t.\ this inner product.
\begin{lemma}\label{lem:lowerupper}
$P_{k}^{\wedge} \preccurlyeq \frac{k}{k+1} P_{k}^{\vee} + \frac{1}{k+1}I$ for all $0 \leqslant k < d$.
\end{lemma}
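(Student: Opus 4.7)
The plan is to show the equivalent inequality
\[ R \ := \ \frac{1}{k+1}I + \frac{k}{k+1}P_{k}^{\vee} - P_{k}^{\wedge} \ \succcurlyeq \ 0 \]
(with respect to the $w$-weighted inner product on $\R^{X(k)}$) via a Garland-type decomposition over codimension-one faces $\eta \in X(k-1)$, then reducing PSD-ness on the star of each $\eta$ to the $0$-local spectral expander hypothesis assumed in \cref{thm:localexpander}. The boundary case $k=0$ is trivial since $X(-1) = \emptyset$ and $P_{0}^{\wedge}$ is the $1 \times 1$ identity; throughout what follows we assume $k \geqslant 1$.

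Using \eqref{eq:P+def} and \eqref{eq:P-def}, I first compute the entries of $(k+1)R$. The diagonal entry is $(k+1)R(\tau,\tau) = \sum_{\eta \in X(k-1),\, \eta \subset \tau} \frac{w(\tau)}{w(\eta)}$, and for $\tau \neq \tau'$ with $\tau \cap \tau' = \eta \in X(k-1)$ one gets $(k+1)R(\tau,\tau') = \frac{w(\tau')}{w(\eta)} - \frac{w(\tau \cup \tau')}{w(\tau)}\bone[\tau \cup \tau' \in X(k+1)]$. All other entries vanish by inspection, so $(k+1)R$ is supported only on pairs $(\tau,\tau')$ sharing a codimension-one subface.

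Next, decompose $(k+1)R = \sum_{\eta \in X(k-1)} R_{\eta}$, where $R_{\eta}$ is supported on the star $\set{\tau \in X(k) : \tau \supset \eta}$: the diagonal entry $\frac{w(\tau)}{w(\eta)}$ is attached to $\eta$ for each $\eta \subset \tau$, and each off-diagonal entry is attached to the unique $\eta = \tau \cap \tau'$. Under the bijection $\tau = \eta \cup \set{v} \leftrightarrow v$ between the star of $\eta$ and the vertices $X_{\eta}(1)$ of the link, and using $w_{\eta}(\set{v}) = w(\eta \cup \set{v})$ and $w_{\eta}(\set{u,v}) = w(\eta \cup \set{u,v})$, a direct calculation gives
\[ R_{\eta} \ = \ \bone\pi_{\eta}^{\intercal} - \tilde{P}^{\wedge}_{\eta,1}, \]
where $\pi_{\eta}(v) := w(\eta \cup \set{v})/w(\eta)$ is precisely the stationary distribution of the non-lazy local walk $\tilde{P}^{\wedge}_{\eta,1}$ on the 1-skeleton of $X_{\eta}$.

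The final step is the spectral comparison at each link. The restriction of the $w$-weighted inner product on $\R^{X(k)}$ to the star of $\eta$ is a positive rescaling (by $w(\eta)$) of the $\pi_{\eta}$-weighted inner product on $\R^{X_{\eta}(1)}$, with respect to which both $\tilde{P}^{\wedge}_{\eta,1}$ and $\bone\pi_{\eta}^{\intercal}$ are self-adjoint and share the top eigenpair $(1,\bone)$. Since $\bone\pi_{\eta}^{\intercal}$ acts as zero on $\bone^{\perp}$ and the $0$-local spectral expander hypothesis gives $\lambda_{2}(\tilde{P}^{\wedge}_{\eta,1}) \leqslant 0$, we conclude $R_{\eta} \succcurlyeq 0$. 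Summing over $\eta \in X(k-1)$ yields $(k+1)R \succcurlyeq 0$, completing the proof. The main technical obstacle is executing the decomposition cleanly: one must check that the diagonal of $(k+1)R$ distributes correctly over the $\eta \subset \tau$ (immediate from the entry computation) and that the off-diagonal at pairs with $\card{\tau \cap \tau'} = k-1$ is localized to the unique $\eta = \tau \cap \tau'$ (by dimension counting). Once this decomposition is in place, the reduction to the local spectral inequality at each link is essentially immediate.
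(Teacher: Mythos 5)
Your proposal is correct and follows essentially the same route as the paper: the identical Garland-type decomposition over $\eta \in X(k-1)$ into star-supported pieces (your $R_{\eta}$ is exactly $-(k+1)M_{\eta}$ in the paper's notation), with the $0$-local-spectral-expander hypothesis invoked at each link to make every piece positive semidefinite. The only difference is cosmetic: the paper phrases the local step as the rank-one domination $A_{\eta} \preccurlyeq \frac{w_{\eta}w_{\eta}^{\intercal}}{w(\eta)}$ via \cref{lem:stochastic,lem:adjacency1poseig}, while you write $R_{\eta} = \bone\pi_{\eta}^{\intercal} - \tP^{\wedge}_{\eta,1}$ and argue directly in the $\pi_{\eta}$-weighted inner product, which is an equivalent formulation of the same spectral fact.
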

\begin{proof}
For convenience, let $M = P_{k}^{\wedge} - \parens*{\frac{k}{k+1} P_{k}^{\vee} + \frac{1}{k+1}I}$. Fix $\eta \in X(k-1)$. We will first consider submatrices $M_{\eta}$ whose entries are given by the following:
\begin{align*}
    M_{\eta}(\tau,\sigma) = \begin{cases}
        M(\tau,\sigma), & \quad\text{if } \tau \neq \sigma, \eta = \tau \cap \sigma \\
        -\frac{1}{k+1} \cdot \frac{w(\tau)}{w(\eta)}, &\quad\text{if } \tau = \sigma, \tau \supset \eta \\
        0, &\quad\text{otherwise}
    \end{cases}
\end{align*}
Note that $M = \sum_{\eta \in X(k-1)} M_{\eta}$ and hence, it suffices to prove that $M_{\eta} \preccurlyeq 0$ for every $\eta \in X(k-1)$. \\~\\
Fix $\eta \in X(k-1)$. Let $\tau,\sigma \in X(k)$ with $\tau \neq \sigma$ and $\tau \cap \sigma = \eta$. Then
\begin{align*}
    M_{\tau \cap \sigma}(\tau,\sigma) = M(\tau, \sigma) = \frac{1}{k+1}\parens*{\frac{w(\tau \cup \sigma)}{w(\tau)} - \frac{w(\sigma)}{w(\tau \cap \sigma)}} = \frac{1}{k+1} \parens*{\frac{w(\tau\cup\sigma)w(\tau \cap \sigma) - w(\tau)w(\sigma)}{w(\tau)w(\tau \cap \sigma)}}
\end{align*}
Furthermore, by definition, if $\tau \in X(k)$ with $\tau \supset \eta$, then $M_{\eta}(\tau,\tau) = -\frac{1}{k+1} \cdot \frac{w(\tau)}{w(\eta)}$. A matrix calculation reveals that
\begin{align*}
    M_{\eta} = \frac{1}{(k+1)w(\eta)}\diag(w_{\eta})^{-1} \cdot (w(\eta) \cdot A_{\eta} - w_{\eta}w_{\eta}^{\intercal})
\end{align*}
where $w_{\eta}$ is the $\card{X(k)}$-dimensional vector whose non-zero entries are $w(\tau)$ for $\tau \supset \eta$, and $A_{\eta}$ is the $\card{X(k)} \times \card{X(k)}$ matrix whose non-zero entries are $w(\tau \cup \sigma)$ for $\tau,\sigma \in X(k)$ satisfying $\tau \cup \sigma \in X(k+1)$ and $\tau \cap \sigma = \eta$. Note that $M_\eta$ is NSD w.r.t.\ the inner product defined by $w$, if and only if $\diag(w_\eta)M_\eta$ is NSD in the usual sense, because for any $v$
\[ \dotprod{v, M_\eta v}=v^\intercal \diag(w_k) M_\eta v=v^\intercal \diag(w_\eta)M_\eta v, \]
where $w_k$ is the vector of $w$ values on $X(k)$ and for the last equality we used that $w_k$ is the same as $w_\eta$ on all $\tau\supset \eta$.

Thus, it suffices to prove that $A_{\eta} \preccurlyeq \frac{w_{\eta}w_{\eta}^{\intercal}}{w(\eta)}$. We view $A_{\eta}$ as the weighted adjacency matrix of the 1-skeleton (which we recall is a graph) of the link $X_{\eta}$. Then $\tP_{\eta,1}^{\wedge} = \frac{1}{k+1}\diag(w_{\eta})^{-1}A_{\eta}$ gives its non-lazy simple random walk matrix. As $(X,w)$ is a 0-local spectral expander, $\tP_{\eta,1}^{\wedge}$ has at most one positive eigenvalue, whence $A_{\eta} = (k+1)\diag(w_{\eta}) \cdot \tP_{\eta,1}^{\wedge}$ has at most one positive eigenvalue by \cref{lem:stochastic}.

Finally, observe that the weights being balanced enforces that $w(\tau) = \sum_{\sigma \in X(k) : \tau \cup \sigma \in X(k+1)} w(\tau \cup \sigma)$ and $w(\eta) = \sum_{\tau \in X(k) : \tau \supset \eta} w(\tau)$. That $A_{\eta} \preccurlyeq \frac{w_{\eta}w_{\eta}^{\intercal}}{w(\eta)}$ then follows immediately by \cref{lem:adjacency1poseig}.
\end{proof}
\begin{proof}[Proof of \cref{thm:localexpander}]
We go by induction on $k$. The case $k=0$ is trivial, as $P_{0}^{\wedge}$ is $1 \times 1$. When $k = 1$, we have $P_{1}^{\wedge} = \frac{1}{2}\parens*{\tP_{1}^{\wedge} + I}$. As $(X,w)$ is a 0-local spectral expander, $\tP_{1}^{\wedge}$ has exactly one positive eigenvalue, with value 1. Hence, $P_{1}^{\wedge}$ has eigenvalue 1 with multiplicity 1. All other eigenvalues of $P_{1}^{\wedge}$ are less than or equal to $1/2$, of which, there are $\card{X(1)} - 1$ many. Thus, the base case holds.

Assume the claim holds for some $d - 1 > k \geqslant 0$. Recall by \cref{lem:upperlower}, $P_{k+1}^{\vee}$ has the same non-zero eigenvalues as $P_{k}^{\wedge}$. By \cref{lem:lowerupper},
\begin{align*}
    P_{k+1}^{\wedge} \preccurlyeq \frac{k+1}{k+2} P_{k+1}^{\vee} + \frac{1}{k+2}I
\end{align*}
For $-1 \leqslant i \leqslant k$, $P_{k}^{\vee}$ as at most $\card{X(i)}$ eigenvalues $> 1 - \frac{i+1}{k+1}$. Hence, $P_{k+1}^{\wedge}$ has at most $\card{X(i)}$ eigenvalues $> \frac{k+1}{k+2} \cdot \parens*{1 - \frac{i+1}{k+1}} + \frac{1}{k+2} = 1 - \frac{i+1}{k+2}$. For $i = k+1$, we trivially have that $P_{k+1}^{\wedge}$ has at most $\card{X(k+1)}$ eigenvalues $> 0$, as $P_{k+1}^{\wedge}$ is $\card{X(k+1)} \times \card{X(k+1)}$.
\end{proof}

\section{From Strongly Log-Concave Polynomials to Local Spectral Expanders}\label{sec:SLCtoLSE}
In this section, we prove \cref{thm:SLCmixing}. 
Let $p = \sum_Sc_Sx^S \in \R[x_1,\dots,x_n]$ be a $d$-homogeneous multiaffine strongly log-concave polynomial with nonnegative coefficients. We can construct a pure $d$-dimensional complex $X^p$ from $p$ as follows: For every term $c_S x^S$ of $p$ we include the $d$-dimensional simplex 
$S$ with weight
\begin{equation}\label{eq:wX^p} 
w(S)=c_S.	
\end{equation}
Note that since $p$ has nonnegative coefficients, the above weight function is nonnegative. 
We turn $X^p$ into a simplicial complex by including all subsets of the $d$-dimensional simplices and weighting each lower dimensional simplex inductively according to \cref{eq:balancedweight}.

\begin{proposition}\label{prop:SLCtoHDE}
Let $p\in \R[x_1,\dots,x_n]$ be a multiaffine homogeneous polynomial with nonnegative coefficients. If $p$ is strongly log-concave then $(X^p,w)$ is a $0$-local-spectral-expander, where $w(S) = c_{S}$ for every maximal face $S \in X^{p}$.
\end{proposition}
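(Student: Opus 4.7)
The plan is to identify the link $X^{p}_{\tau}$ of each face $\tau \in X^{p}(k)$ (for $0 \leqslant k < d-1$) with the support complex of the partial derivative polynomial $q := \partial_{\tau} p$, and then read off the non-lazy local walk $\tilde{P}^{\wedge}_{\tau,1}$ as a positive diagonal rescaling of the Hessian $\nabla^{2} q(\bone)$. Strong log-concavity of $p$ then directly controls the spectrum of that Hessian, which is what we need.

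First I would set up the dictionary between weights on $X^{p}$ and partial derivatives of $p$ at $\bone$. Using the balanced-weight recursion (\ref{eq:balancedweight}) and the fact that $p$ is multiaffine, a short induction on codimension yields $w(\eta) = (d - \abs{\eta})! \cdot \partial_{\eta} p(\bone)$ for every $\eta \in X^{p}$. Since the link weights are just the inherited ones, this specializes on $X^{p}_{\tau}$ to $w_{\tau}(\eta') = (d-k-\abs{\eta'})! \cdot \partial_{\eta'} q(\bone)$. Applied to vertices and edges of the $1$-skeleton of the link, this tells me that the weighted adjacency matrix $A_{\tau}$ (indexed by the link vertices, namely those $i \in [n]\setminus \tau$ with $\partial_{i} q(\bone) > 0$) equals $(d-k-2)!$ times the corresponding principal submatrix of $\nabla^{2} q(\bone)$---whose diagonal is automatically zero because $q$ is multiaffine---while the weighted degree vector is $(d-k-1)! (\partial_{i} q(\bone))_{i}$. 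Consequently the non-lazy walk on the link factors as
\begin{equation*}
    \tilde{P}^{\wedge}_{\tau,1} \ = \ \frac{1}{d-k-1}\, \diag(\partial_{i} q(\bone))^{-1}\, H,
\end{equation*}
where $H$ is the restriction of $\nabla^{2} q(\bone)$ to the link vertices.

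With this factorization in hand I would invoke strong log-concavity. Since $p$ is strongly log-concave, $q = \partial_{\tau} p$ is log-concave at $\bone$, and the standard identity $\nabla^{2} \log q = (q \cdot \nabla^{2} q - \nabla q (\nabla q)^{\intercal})/q^{2}$ together with Cauchy interlacing (\cref{thm:CauchyInterlacing}) forces $\nabla^{2} q(\bone)$ to have at most one positive eigenvalue, as noted in the paragraph preceding \cref{prop:oneposeigenvalue}. Restriction to a principal submatrix preserves this property by \cref{lem:Cauchy1eigenvalue}, so $H$ has at most one positive eigenvalue. The prefactor $\frac{1}{d-k-1}\diag(\partial_{i}q(\bone))^{-1}$ is a PSD diagonal matrix, so by \cref{lem:stochastic} the product $\tilde{P}^{\wedge}_{\tau,1}$ also has at most one positive eigenvalue; stochasticity pins this unique positive eigenvalue to $1$, hence $\lambda_{2}(\tilde{P}^{\wedge}_{\tau,1}) \leqslant 0$. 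Since $\tau$ was arbitrary, $(X^{p}, w)$ is a $0$-local-spectral-expander.

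The only real nuisance is bookkeeping: tracking the factorials through the balanced-weight recursions on $X^{p}$ and on the link, and correctly discarding the variables $i \in [n]\setminus\tau$ that lie outside the link---these contribute identically-zero rows and columns to $\nabla^{2} q(\bone)$ and never appear in the walk, so they do not affect the non-zero spectrum. Once the combinatorics-to-polynomials dictionary is established, the spectral conclusion is an immediate application of the two linear algebra lemmas \cref{lem:Cauchy1eigenvalue} and \cref{lem:stochastic}.
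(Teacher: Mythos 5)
Your proposal is correct and follows essentially the same route as the paper: the weight--derivative dictionary $w(\eta)=(d-\abs{\eta})!\,\partial_{\eta}p(\bone)$ is the paper's \cref{lem:wTau_pTau}, the identification $\tilde{P}^{\wedge}_{\tau,1}=\frac{1}{d-k-1}\diag(\nabla p_{\tau}(\bone))^{-1}\nabla^{2}p_{\tau}(\bone)$ is exactly the paper's normalized Hessian \cref{eq:tnablatP}, and the spectral conclusion via \cref{lem:Cauchy1eigenvalue} and \cref{lem:stochastic} matches the paper's argument. Your explicit restriction to the link vertices (discarding the zero rows and columns before inverting the diagonal) is a minor bookkeeping refinement of the same proof, not a different approach.
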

The converse of the above statement also holds true and we will discuss it in the next section.
We build up to the proof of \cref{prop:SLCtoHDE} and thereby the proof of \cref{thm:SLCmixing}.

We now fix a multiaffine $d$-homogeneous strongly log-concave polynomial $p$ with nonnegative coefficients. 
Fix a simplex $\tau\in X^p(k)$ and let $p_\tau=\left(\prod_{i\in\tau} \partial_i\right) p$.
Note that $p_\tau$ is $(d-k)$-homogeneous.

\begin{lemma}\label{lem:wTau_pTau}
	For any $0\leqslant k\leqslant d$, and any simplex $\tau\in X^p(k)$, 
	$ w(\tau)=(d-k)!\cdot p_{\tau}(\bone)$.
\end{lemma}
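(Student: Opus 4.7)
The plan is to unpack both sides of the claimed identity directly against the definitions, using that $p$ is multiaffine and $d$-homogeneous, together with the explicit closed-form for the balanced weight recurrence that was already stated in the preliminaries, namely $w(\tau) = (d-k)! \sum_{\sigma \in X^p(d):\, \sigma \supseteq \tau} w(\sigma)$ for $\tau \in X^p(k)$.

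First I would rewrite $p$. Since $p$ is multiaffine, $d$-homogeneous, and has nonnegative coefficients, only monomials $x^S$ with $\card{S} = d$ appear, and by \cref{eq:wX^p} we have
\[
    p(x) \;=\; \sum_{S \in X^p(d)} c_S\, x^S \;=\; \sum_{S \in X^p(d)} w(S)\, x^S.
\]
Next I would compute $p_\tau$. Because $p$ is multiaffine, $\partial_i$ acts on $x^S$ by deleting $i$ when $i \in S$ and annihilating the monomial otherwise. The operators $\partial_i$ commute, so applying $\prod_{i \in \tau} \partial_i$ keeps exactly those monomials $x^S$ with $S \supseteq \tau$, yielding
\[
    p_\tau(x) \;=\; \sum_{S \in X^p(d):\, S \supseteq \tau} c_S\, x^{S \setminus \tau}.
\]
Evaluating at $\bone$ kills all variables, so $p_\tau(\bone) = \sum_{S \in X^p(d):\, S \supseteq \tau} c_S$.

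Combining this with the iterated balanced-weight formula recalled above immediately gives
\[
    w(\tau) \;=\; (d-k)! \sum_{\sigma \in X^p(d):\, \sigma \supseteq \tau} w(\sigma) \;=\; (d-k)! \sum_{S \in X^p(d):\, S \supseteq \tau} c_S \;=\; (d-k)!\, p_\tau(\bone),
\]
as claimed. There is no real obstacle here; the one point to keep in mind is that the edge cases $k = d$ (where $p_\tau = c_\tau$ is a scalar and $(d-k)! = 1$) and $\tau \in X^p(k)$ with no strict extensions (impossible unless $k=d$ since $X^p$ is pure of dimension $d$) are consistent. An equally clean alternative would be induction on $d-k$, using the base case $k = d$ together with the unit-step recurrence $w(\tau) = \sum_{\sigma \supset \tau,\,\sigma \in X^p(k+1)} w(\sigma)$ and Euler's identity \eqref{eq:eulerppartial} applied to the $(d-k)$-homogeneous polynomial $p_\tau$ to get $(d-k)\, p_\tau(\bone) = \sum_{i \notin \tau} p_{\tau \cup \{i\}}(\bone)$; but the direct computation above is shorter.
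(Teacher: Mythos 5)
Your proof is correct, but it takes a more direct route than the paper. The paper proves the lemma by induction on $d-k$: the base case $k=d$ is \cref{eq:wX^p}, and the inductive step unrolls one level of the balanced-weight recurrence \cref{eq:balancedweight} and then applies Euler's identity \cref{eq:eulerppartial} to the $(d-k)$-homogeneous polynomial $p_\tau$ to convert $\sum_i \partial_i p_\tau(\bone)$ into $(d-k)\,p_\tau(\bone)$ --- exactly the alternative you sketch in your last sentence. Your main argument instead expands everything in one shot: you use multiaffineness to write $p_\tau(\bone)=\sum_{S\in X^p(d):\,S\supseteq\tau} c_S$ and then invoke the closed-form $w(\tau)=(d-k)!\sum_{\sigma\in X^p(d):\,\sigma\supseteq\tau}w(\sigma)$, which the paper states in \cref{sec:SimplicialComplex} without proof (it follows from the recurrence by counting the $(d-k)!$ chains from $\tau$ up to each maximal face, so the inductive work is merely relocated rather than avoided). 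What your route buys is transparency --- the identity is visibly just a re-indexing of the coefficient sum --- at the cost of leaning explicitly on multiaffineness and on the unproved closed form; the paper's induction needs only the one-step recurrence and homogeneity via Euler's identity, which is the formulation it reuses elsewhere (e.g., in the proof of \cref{prop:SLCtoHDE}). Either way the statement is established, and your treatment of the edge case $k=d$ is consistent with the paper's base case.
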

\begin{proof}
	We prove this by induction on $d-k$. If $\dim(\tau)=d$ then $p_{\tau}=c_{\tau}$ and the statement follows immediately from \cref{eq:wX^p}.
	So, suppose the statement holds for all simplices $\sigma \in X^p(k+1)$ and fix a simplex $\tau\in X^p(k)$.
	%Let $q=\prod_{j\in \tau}\partial_j p$. 
 Then by definition, 
\[w(\tau) \ = \ \sum_{\sigma\in X^p(k+1): \sigma\supset \tau} w(\sigma) \ = \ \sum_{i\in X^p_\tau(1)} w(\tau\cup i).\]
Using the inductive hypothesis and the fact that  $\partial_i p_\tau=0$ for $i\notin X^p_\tau(1)$, we then find that 
\[	w(\tau) 
	\ = \ (d-k-1)!\sum_{i\in X^p_\tau(1)} p_{\tau \cup \set{i}}(\bone)
	\ = \ (d-k-1)!\sum_{i=1}^n \partial_i p_\tau(\bone) 
	\ = \ (d-k)!\cdot p_{\tau}(\bone),
\]
where the last equality follows from Euler's identity. 
\end{proof}

Recall that $\tP^{\wedge}_{\tau,1}$ is the transition probability matrix of the non-lazy random walk on the 1-skeleton of the link $X^p_\tau$. 
To prove the \cref{prop:SLCtoHDE} it is enough to show that $\lambda_2(\tP^{\wedge}_{\tau,1})\leqslant 0$, i.e. that $\tP^{\wedge}_{\tau,1}$ has at most one positive eigenvalue.

\begin{proof}[Proof of \cref{prop:SLCtoHDE}] 
Since $p$ is strongly log-concave, $\Hess{p_\tau}(\bone)$ has at most one positive eigenvalue. Let 
\begin{equation}\label{eq:tHessian}
\tilde{\nabla}^{2} p_\tau = \frac{1}{d-k-1}\diag(\nabla p_\tau(\bone))^{-1} \Hess{p_\tau(\bone)}.
\end{equation}
We claim that
\begin{equation} 
\tilde{\nabla}^2 p_\tau=\tP^{\wedge}_{\tau,1}. 
\label{eq:tnablatP}
\end{equation}
To see this, note that by \cref{eq:P+def}, for $i,j\in X_{\tau}^p$,  
\[\tP^{\wedge}_{\tau,1}(i,j) =\frac{w_{\tau}(\set{i,j})}{w_\tau(\set{i})}=\frac{w(\tau\cup \set{i,j})}{w(\tau\cup \set{i})}.\]
On the other hand, by \cref{eq:tHessian},
\[ (\tilde{\nabla}^2 p_\tau) (i,j) = \frac{(\partial_i\partial_j p_\tau)(\bone)}{(d-k-1)\cdot (\partial_i p_\tau)(\bone)}.  \]
The above two are equal by \cref{lem:wTau_pTau}, which proves \cref{eq:tnablatP}.

Since $p$ has nonnegative coefficients, the vector $\nabla p_\tau(\bone)$ has nonnegative entries, which implies $\diag(\nabla p_\tau(\bone))\succcurlyeq 0$. Since $\nabla^2 p_\tau(\bone)$ has at most one positive eigenvalue, $\tilde{\nabla}^2p_\tau(\bone)$ has at most one positive eigenvalue by \cref{lem:stochastic} as desired. Therefore by \cref{eq:tnablatP}, $\tilde{\nabla}^2p_\tau=\tP^{\wedge}_{\tau,1}$ has at most one positive eigenvalue and $\lambda_2(\tP^{\wedge}_{\tau,1})\leqslant 0$.
\end{proof}
	
With this we are ready to prove our main theorem.

\begin{proof}[Proof of \cref{thm:SLCmixing}]
Let $\mu$ be a $d$-homogeneous strongly log-concave distribution and let $P_{\mu}$ be the transition probability matrix of the chain $\cM_\mu$. By \cref{thm:mixingtime} it is enough to show that $\lambda^*(P_{\mu})\leqslant 1-1/d$. Observe that the chain $\cM_\mu$ is exactly the same as the chain $P^{\vee}_{d}$ for the simplicial complex $X^{g_\mu}$ defined above. Therefore, $\lambda^*(P_{\mu})=\lambda^*(P^{\vee}_{d}) = \lambda^{*}(P_{d-1}^{\wedge})$, where the last equality follows by \cref{lem:upperlower}. Since $g_\mu$ is strongly log-concave, by \cref{prop:SLCtoHDE}, $X^{g_\mu}$ is $0$-local-spectra-expander. Therefore, by \cref{thm:localexpander},
\[ \lambda^{*}(P_{d-1}^{\wedge}) \leqslant 1-\frac{1}{(d-1)+1}=1-\frac1d,
\]
as desired.
\end{proof}

\subsection{Bases Exchange Walk}
In this part we prove \cref{thm:basesexchange}.
Fix a rank $r$ matroid $M=([n],\cI)$, let $\mu$ denote the uniform distribution on the bases of $M$, 
and consider the simplicial complex $X^{g_{\mu}}$. As before, recall that $P_{r}^{\vee}$ is the transition matrix for the Markov chain  $\cM_{\mu}$ defined in the introduction. As discussed in \cref{subsec:randomwalks}, each reversible Markov chain is equivalent to a random walk in a (weighted) undirected graph. Let $H_M$ be the graph corresponding to $P_{r}^{\vee}$. Then the vertices of $H_M$ correspond to bases of $M$, and the weight of an edge between two bases $\tau,\tau'$ is, 
\[P_{r}^{\vee}(\tau,\tau')=P_{r}^{\vee}(\tau',\tau).\]
The symmetry of $P_{r}^{\vee}$ follows by the fact that $P_{r}^{\vee}$ is reversible and $w(\tau)=w(\tau')=1$.
Observe that $P_{r}^{\vee}$ is the adjacency matrix of $H_M$. Furthermore, note that by \cref{thm:gMclc} (proof in \cref{sec:slc}), $g_{\mu}$ is a strongly log-concave polynomial and hence, $X^{g_{\mu}}$ is a 0-local spectral expander by \cref{prop:SLCtoHDE}. By \cref{thm:localexpander}, $\lambda_2(P_{r}^{\vee}) \leqslant 1-\frac1{r}$. 
Note that the weighted degree of each vertex of $H_M$ is 1 since $P_{r}^{\vee}$ is a stochastic matrix.
Therefore, by Cheeger's inequality, \cref{thm:Cheeger}, 
\begin{equation}\label{eq:condHM}\cond(H_M) \ \geqslant \ \frac{1-\lambda_2(P_{r}^{\vee})}{2} \ \geqslant  \ \frac{1-(1-1/r)}{2}=\frac1{2r}.	
\end{equation}

Let $G_M=(\cB,E)$ be the (unweighted) bases exchange graph associated to $M$ as defined in the introduction. It follows that $G_M$ is the unweighted base graph of $H_M$. Fix a nonempty set $S\subset \cB$ of bases such that $\card{S}\leqslant \card{\cB}/2$.
We need to show that the expansion $\expansion(S)$ is $\geqslant 1$.
Note that since the weighted degree of each vertex of $H_M$ is 1, $\vol_H(S)=\card{S}$. It follows that
\[ \cond(H_M)\leqslant \cond(S) = \frac{\sum_{\tau,\tau':\tau \in S,\tau'\notin S} P_{r}^{\vee}(\tau,\tau')}{\card{S}} \leqslant \frac{\sum_{\tau,\tau':\tau \in S,\tau'\notin S} \frac1{2r}}{\card{S}} = \frac{\frac1{2r} \card{E(S,\overline{S})}}{\card{S}}=\frac{\expansion(S)}{2r}
\]
The second inequality follows by the following fact. Putting the above together with \cref{eq:condHM}, implies $\expansion(S)\geqslant 1$ as desired. This completes the proof of \cref{thm:basesexchange}.
\begin{fact} 
For any pair of bases, $\tau,\tau'\in X^{g_{\mu}}(r)$, 
\[P^{\vee}_{r}(\tau,\tau')\leqslant \frac1{2r}.\]
\end{fact}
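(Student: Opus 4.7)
The plan is to unwind the explicit formula \cref{eq:P-def} for $P^{\vee}_{r}$ and exploit the fact that $\mu$ is uniform. Since the fact is invoked only for pairs with $\tau \in S$ and $\tau' \notin S$, it suffices to treat $\tau \neq \tau'$. By the definition, $P^{\vee}_{r}(\tau,\tau') = 0$ unless $\tau \cap \tau' \in X^{g_{\mu}}(r-1)$, so the only interesting case is when $\tau$ and $\tau'$ differ in exactly one element.

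In that case, with $\eta = \tau \cap \tau'$, \cref{eq:P-def} gives
\[
  P^{\vee}_{r}(\tau,\tau') \;=\; \frac{w(\tau')}{r \cdot w(\eta)}.
\]
Since $\mu$ is the uniform distribution on bases, every maximal face $\sigma \in X^{g_{\mu}}(r)$ carries the same weight $c = c_{\sigma}$. The weight of $\eta$ is then determined by the balanced extension: from \cref{eq:balancedweight} applied inductively (or directly from \cref{lem:wTau_pTau}), we have $w(\eta) = c \cdot N(\eta)$, where $N(\eta)$ denotes the number of bases of $M$ containing $\eta$. Therefore
\[
  P^{\vee}_{r}(\tau,\tau') \;=\; \frac{1}{r \cdot N(\eta)}.
\]

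The key combinatorial observation is that both $\tau$ and $\tau'$ are distinct bases containing $\eta$, so $N(\eta) \geqslant 2$. Plugging this in yields $P^{\vee}_{r}(\tau,\tau') \leqslant \frac{1}{2r}$, proving the fact. There is no substantive obstacle here; the only subtlety is being careful about the uniform weighting and the balanced weight function, but both follow immediately from the setup $\mu(\sigma) = 1/\card{\mathcal{B}}$ and \cref{eq:balancedweight}.
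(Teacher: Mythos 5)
Your proof is correct and follows essentially the same route as the paper: both rest on the formula $P^{\vee}_{r}(\tau,\tau')=\frac{w(\tau')}{r\,w(\tau\cap\tau')}$ together with the observation that $\tau\cap\tau'$ lies under at least the two distinct bases $\tau,\tau'$, so $w(\tau\cap\tau')\geqslant w(\tau)+w(\tau')$. Your phrasing via $w(\eta)=c\cdot N(\eta)$ and $N(\eta)\geqslant 2$ is just the uniform-weight specialization of that same inequality, so there is no substantive difference.
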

\begin{proof}
Suppose $P_{r}^{\vee}(\tau,\tau')>0$. This means that $\card{\tau\cap\tau'}=r-1$. Therefore, 
\[w(\tau\cap \tau')\geqslant w(\tau)+w(\tau') =2.\]
So, $P_{r}^{\vee}(\tau,\tau')=\frac{w(\tau')}{rw(\tau\cap\tau')}\leqslant \frac1{2r}.$
\end{proof}

\section{Proof of Strong Log-Concavity and Applications}\label{sec:slc}
\label{sec:applications}
In this part we prove \cref{thm:SLCdeg2} using connections with high dimensional expanders. 
\cref{thm:SLCdeg2} can be seen as the converse of \cref{prop:SLCtoHDE}.
In fact, the following statement was proved by \textcite{Opp18}.
\begin{theorem}[\cite{Opp18}]\label{thm:opp} Let $(X,w)$ be a pure $d$-dimensional weighted simplicial complex such that:
\begin{enumerate}
    \item for all $\tau\in X(k)$, for $0\leqslant k\leqslant d-2$, the 1-skeleton of $X_\tau$ is a connected (weighted) graph, and 
    \item for any $\tau\in X(d-2)$, $\lambda_2(\tP^{\wedge}_{\tau,1})\leqslant 0$.
\end{enumerate}
Then $(X,w)$ is a weighted $0$-local-spectral-expander.
\end{theorem}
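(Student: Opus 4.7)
The plan is to proceed by downward induction on $k$, proving that for every $\tau \in X(k)$ with $0 \leqslant k \leqslant d-2$ one has $\lambda_{2}(\tilde{P}^{\wedge}_{\tau,1}) \leqslant 0$. The base case $k = d-2$ is exactly hypothesis (2). For the inductive step, assume the bound holds for every $\tau' \in X(k+1)$ and fix $\tau \in X(k)$ with $k \leqslant d-3$. The link $X_{\tau}$ has dimension $d - k \geqslant 3$ and connected 1-skeleton by hypothesis (1), and for each vertex $v$ of $X_\tau$ the iterated link $(X_{\tau})_{v} = X_{\tau \cup \set{v}}$ satisfies $\lambda_{2}(\tilde{P}^{\wedge}_{\tau \cup \set{v},1}) \leqslant 0$ by the inductive hypothesis. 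The whole theorem thus reduces to a local-to-global ``trickle-down'' lemma: \emph{if $X_\tau$ has connected 1-skeleton and every vertex link of $X_\tau$ has $\lambda_{2} \leqslant 0$ on its 1-skeleton non-lazy walk, then $\lambda_{2}(\tilde{P}^{\wedge}_{\tau,1}) \leqslant 0$}.

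To establish this lemma, I would take $\phi$ to be an eigenfunction of $\tilde{P}^{\wedge}_{\tau,1}$ on $X_{\tau}(1)$ with eigenvalue $\mu = \lambda_{2}(\tilde{P}^{\wedge}_{\tau,1})$, normalized to be orthogonal to $\bone$ in the $w$-induced inner product. For each vertex $v$ of $X_{\tau}$, I decompose $\phi|_{(X_{\tau})_{v}(1)} = \bar{\phi}_{v}\bone + \phi_{v}^{\perp}$, where $\bar{\phi}_{v}$ is the weighted mean of $\phi$ on the vertices of $(X_\tau)_v$ and $\phi_{v}^{\perp}$ lies in the orthogonal complement of $\bone$. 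A short computation using $\tilde{P}^{\wedge}_{\tau,1}(v,u) = w(\tau \cup \set{u,v})/w(\tau \cup \set{v})$ reveals $\bar{\phi}_{v} = (\tilde{P}^{\wedge}_{\tau,1}\phi)(v) = \mu\phi(v)$. Since $\bone$ is the top eigenvector of $\tilde{P}^{\wedge}_{\tau \cup \set{v},1}$ with eigenvalue $1$ and all remaining eigenvalues are $\leqslant 0$ by assumption,
\[
\dotprod{\phi|_{(X_{\tau})_{v}(1)}, \tilde{P}^{\wedge}_{\tau \cup \set{v},1}\phi|_{(X_{\tau})_{v}(1)}} \ \leqslant \ w(\tau \cup \set{v})\bar{\phi}_{v}^{2} \ = \ w(\tau \cup \set{v})\mu^{2}\phi(v)^{2}.
\]
The heart of the argument is the local-to-global identity
\[
\sum_{v \in X_{\tau}(1)} \dotprod{\phi|_{(X_{\tau})_{v}(1)}, \tilde{P}^{\wedge}_{\tau \cup \set{v},1}\phi|_{(X_{\tau})_{v}(1)}} \ = \ \dotprod{\phi, \tilde{P}^{\wedge}_{\tau,1}\phi},
\]
both sides of which expand by double-counting to $2\sum_{\sigma \in X_{\tau}(3)} w(\tau \cup \sigma)\sum_{\set{u,u'} \subset \sigma}\phi(u)\phi(u')$. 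The global side uses balancedness $w(\tau \cup \set{u,v}) = \sum_{c : \tau \cup \set{u,v,c} \in X(k+3)} w(\tau \cup \set{u,v,c})$ to convert edge-weights to triangle-weights, while on the local side each triangle contributes once per choice of ``apex'' $v \in \sigma$. Combining the identity with the pointwise bound and summing gives $\mu\norm{\phi}^{2} \leqslant \mu^{2}\norm{\phi}^{2}$, forcing $\mu \leqslant 0$ or $\mu \geqslant 1$; the latter is ruled out by \cref{cor:cheegerconnected} applied to the connected 1-skeleton of $X_\tau$.

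The main technical obstacle is the local-to-global quadratic form identity: it is purely a combinatorial double-counting argument, but requires threading together (i) the explicit definitions of the walk matrices $\tilde{P}^{\wedge}_{\tau,1}$ and $\tilde{P}^{\wedge}_{\tau \cup \set{v},1}$ in terms of weights on edges and triangles, (ii) the balancedness relation for weights in $X_\tau$, and (iii) the symmetry of each triangle $\sigma \in X_\tau(3)$ under the three choices of apex vertex. A minor subtlety is that the identity invokes faces in $X_\tau(3)$, so it implicitly requires $\dim X_\tau \geqslant 3$---precisely the regime $k \leqslant d-3$ of the inductive step, dovetailing with the base case $k = d-2$.
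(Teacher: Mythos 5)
Your proposal is correct and follows essentially the same route as the paper: the paper itself cites \cite{Opp18} for \cref{thm:opp} rather than reproving it, but its proof of the polynomial counterpart \cref{thm:SLCdeg2} is exactly this trickle-down argument translated into the language of normalized Hessians. Your identity $\bar{\phi}_{v} = \mu\phi(v)$, your local-to-global double-counting identity, and the conclusion $\mu \leqslant \mu^{2}$ with $\mu = 1$ excluded via connectivity and \cref{cor:cheegerconnected} correspond line-by-line to the paper's \cref{eq:tHessumi,eq:phionepk} (Euler's identity playing the role of your triangle double count) and its use of indecomposability.
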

Our proof of \cref{thm:SLCdeg2} can be seen as a translation of Oppenheim's result into the language of polynomials.\footnote{In a companion paper, we will give an alternative proof based purely on elementary calculus and linear algebra.} We actually prove a slightly stronger statement, in the sense that a direct translation of \cref{thm:opp} corresponds to testing strong log-concavity of a \emph{multiaffine} homogeneous polynomial. In this case, each derivative $\partial_{\tau}p$ corresponds to taking the link of $\tau$ in $X^{p}$. Indecomposability of each derivative then corresponds to connectivity of the 1-skeleton of the corresponding link, and log-concavity of quadratics corresponds to $\lambda_{2}(\tP_{\tau,1}^{\wedge})\leqslant 0$.

\begin{proof}[Proof of \cref{thm:SLCdeg2}]
We proceed by induction on the degree of $p$. If the degree of $p$ is at most 2, the claim obviously holds. So, suppose $d\geqslant 3$. For any $1\leqslant i\leqslant n$, let $p_i$ denote $\partial_i p$.  By induction, we can assume that for all $i$, $p_i$ is strongly log-concave (at $\bone$).
%
%
% we will prove that $p$ is log-concave (at $\bone$). This together with the inductive hypothesis implies that $p$ is strongly log-concave at $\bone$. For any polynomial  $q$ that we work with in the proof, we abuse notation and write $q$ to also denote $q(\bone)$. 

First, by \cref{eq:Hesssum}, $\Hess p(\bone)=\frac1{d-2}\sum_{i=1}^n \Hess p_i (\bone)$. By induction and \cref{prop:oneposeigenvalue}, each matrix $\Hess p_i(\bone)$ has at most one positive eigenvalue. 

Instead, we work with the normalized Hessian matrix, $\tilde{\nabla}^2 p=\frac{1}{d-1} \diag(\nabla p(\bone))^{-1} \Hess p(\bone)$ as defined in \cref{eq:tHessian}.
 Since the normalized Hessian matrix is stochastic, its top eigenvector  is the all-ones vector. When working with the normalized Hessian we need to use the correct inner product operators. For a $d$-homogeneous polynomial $p$ with nonnegative coefficients and $d > 1$, and vectors $\phi,\psi\in\R^n$, define
\begin{align*}
    \dotprod{\phi,\psi}_{p} = (d-1) \sum_{j=1}^{n} \phi(j)\psi(j)   (\partial_j p(\bone)),
\end{align*}
which gives the norm $\norm{\phi}_p^2=\dotprod{\phi,\phi}_p$.
The following identity is immediate:
%Note that in order for these to be well-defined inner products, we require $n > 2$. We have the following facts:
    \begin{align*}
        \dotprod{\phi, (\tilde{\nabla}^{2}p)\psi}_{p} = \dotprod{\phi, \Hess p(\bone) \psi} = \dotprod{(\tilde{\nabla}^{2}p)\phi, \psi}_{p}
    \end{align*}
    In particular, $\tilde{\nabla}^2 p$ is self-adjoint with respect to $\dotprod{\cdot ,\cdot}_{p}$. Furthermore, by \cref{eq:Hesssum}, 
  \begin{equation}\label{eq:tHessumi}\dotprod{\phi,(\tilde{\nabla}^2 p)\psi}_{p}
  = \dotprod{\phi, \Hess p(\bone)\psi}=
  \frac1{d-2}\sum_{k=1}^n \dotprod{\phi, \nabla^2 p_k(\bone) \psi}=\frac1{d-2}\sum_{k=1}^n \dotprod{\phi,\tilde{\nabla}^2 p_k \psi}_{p_k}.
  \end{equation}
We highlight that the Hessian $\nabla^{2}p(\bone)$ may be viewed as the weighted adjacency matrix of a graph with edge weights $\partial_{i}\partial_{j}p(\bone)$. Thus, our normalized Hessian may be viewed as the associated random walk matrix, and the inner product $\langle\cdot,\cdot\rangle_{p}$ may be viewed as a change of basis, which converts the random walk matrix into the normalized adjacency matrix.

Let $\mu$ be an eigenvalue of $\tilde{\nabla}^{2}p$ with eigenvector $\phi$. We prove that $\mu\leqslant \mu^{2}$. We claim that this is enough for the induction step: First, since $\tilde{\nabla}^{2}p$ is stochastic $\mu\leqslant 1$. Therefore, we either $\mu=1$ or $\mu\leqslant 0$. So, to prove that $\tilde{\nabla}^2 p$ has (exactly) one positive eigenvalue, it is enough to show that $\lambda_2(\tilde{\nabla}^{2}p)<1$. But, since $p$ is indecomposable,  the underlying (weighted) graph of $\tilde{\nabla}^2 p$ is connected, so by  \cref{cor:cheegerconnected}, $\lambda_2(\tilde{\nabla}^2 p)<1$.

It remains to prove that $\mu\leqslant \mu^2$.
From \cref{eq:tHessumi},
\begin{align}\label{eq:muphitnabla}
    \mu \norm{\phi}_{p}^{2} &= \dotprod{ \phi, (\tilde{\nabla}^{2} p)\phi }_{p} = \frac{1}{d-2} \sum_{k=1}^{n} \dotprod{ \phi, (\tilde{\nabla}^{2} p_{k})\phi }_{p_k}
\end{align}
Decomposing $\phi$ orthogonally along $\bone$ write %w.r.t. $\dotprod{\cdot,\cdot}_{p_k}$ as 
\[\phi=\phi_{k}^{\perp\bone} + \phi_{k}^{\bone}\] 
where $\phi_k^{\bone}=\frac{\dotprod{ \phi,\bone}_{p_k}}{\norm{\bone}_{p_k}^2}\bone$ and $\phi_k^{\perp \bone}$ is orthogonal to $\bone$, i.e., $\dotprod{ \phi_k^{\perp\bone},\bone}_{p_k}=0$.
It follows that \[\dotprod{ \phi_k^{\perp\bone}, (\tilde{\nabla}^2 p_k) \phi_k^{\perp\bone}}_{p_k}\leqslant 0.\] This is because  $\tilde{\nabla}^{2}p_{k}$ has exactly one positive eigenvalue with corresponding eigenvector of $\bone$. Therefore,
\begin{align}
    \mu\norm{\phi}^2_p\leqslant \frac{1}{d-2} \sum_{k=1}^{n} \dotprod{ \phi_{k}^{\bone}, (\tilde{\nabla}^{2}p_{k})\phi_{k}^{\bone} }_{p_k} = \frac{1}{d-2} \sum_{k=1}^{n} \dotprod{ \phi_{k}^{\bone}, \phi_{k}^{\bone} }_{p_k} = \frac{1}{d-2} \sum_{k=1}^{n} \frac{\dotprod{ \phi, \bone }_{p_k}^{2}}{\dotprod{ \bone,\bone }_{p_k}}\label{eq:muphitnabla2}
\end{align}
Next, we rewrite the numerator and denominator of each ratio in the righthand side. We have
\begin{align*}
    \dotprod{ \bone,\bone }_{p_k} = (d-2)\sum_{i=1}^{n}  (\partial_{i}p_{k}(\bone)) = (d-2)(d-1) \cdot p_{k}(\bone)
\end{align*}
where we used \cref{eq:eulerppartial} for polynomial $p_k$.
Furthermore,
\begin{align}\label{eq:phionepk}
    \dotprod{ \phi, \bone }_{p_k} = (d-2)\sum_{i=1}^{n} \phi(i)  (\partial_{i}p_{k}(\bone)) = (d-2) \cdot ((\nabla^{2}p(\bone))\phi)(k)
\end{align}
So, putting the above identities together, we obtain
\begin{align*}
    \frac{\dotprod{ \phi, \bone }_{p_k}}{\dotprod{ \bone,\bone }_{p_k}} = \frac{1}{(d-1) \cdot p_{k}} ((\nabla^{2}p(\bone)) \phi)(k) = ((\tilde{\nabla}^{2}p(\bone))\phi)(k) = \mu \cdot \phi(k)
\end{align*}
where the last identity we crucially used $\phi$ is the eigenvector of $\tilde{\nabla}^2p$ corresponding to $\mu$. Plugging into \cref{eq:muphitnabla2} we get
\begin{align*}
    \mu\norm{\phi}_p^2&\leqslant \frac{1}{d-2} \sum_{k=1}^{n} \frac{\dotprod{ \phi, \bone }_{p_k}^{2}}{\dotprod{ \bone,\bone }_{p_k}} 
    =\frac{\mu}{d-2}\sum_{k=1}^n \phi(k) \dotprod{ \phi,\bone}_{p_k}\\
    &=\mu\sum_{k=1}^n \phi(k) ((\Hess p(\bone))\phi)(k) = \mu \dotprod{ \phi, (\Hess p(\bone))\phi}=\mu\dotprod{ \phi,(\tilde{\nabla}^2 p)\phi}_p = \mu^2 \norm{\phi}_p^2.
    %= \frac{1}{d-1} \sum_{k=1}^{n} (d-1) d p_{k} \parens*{\frac{\dotprod{ \phi, \bone }_{p_k}}{\dotprod{ \bone, \bone }_{p_k}}}^{2} \\&= \sum_{k=1}^{n} d p_{k} \mu^{2} \phi(k)^{2} = \mu^{2} \sum_{k=1}^{n} \phi(k)^{2}w_{p}(k) = \mu^{2} \norm{\phi}_{p}^{2}
\end{align*}
The second equality uses \cref{eq:phionepk} and the last equality uses definition of $\phi$. So, $\mu\leqslant \mu^2$ as desired.
%Thus, we have proven that $\mu \leqslant \mu^{2}$. In particular, we have $\mu \leqslant 0$ or $\mu \geqslant 1$. By (5), if $\mu$ is not the eigenvalue corresponding to eigenvector $\bone$, $\mu < 1$, whence $\mu \leqslant 0$. This shows that $\tilde{\nabla}^{2}p$ has at most one positive eigenvalue. By (4), $\nabla^{2}p$ has at most one positive eigenvalue, giving logconcavity.
\end{proof}

\subsection{An Alternative Proof of Strong Log-concavity of  Bases Generating Polynomial}
\begin{theorem}\label{thm:gMclc}
Let $M = ([n],\cI)$ be a matroid of rank $r$. Then, for every choice of ``external field'' $\bm{\lambda} = (\lambda_{1},\dots,\lambda_{n}) \in \R_{>0}^{n}$, its (weighted) bases generating polynomial
\begin{align*}
    g_{M}(x_{1},\dots,x_{n}) = \sum_{B \text{ basis}} \bm{\lambda}^{B}x^{B}
\end{align*}
is strongly log-concave (at $\bone$). 
\end{theorem}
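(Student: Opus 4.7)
The plan is to apply \cref{thm:SLCdeg2}, which reduces strong log-concavity at $\bone$ to two conditions on iterated derivatives of $g_M$: indecomposability up to order $r-2$, and log-concavity (or vanishing) of the quadratic derivatives of order exactly $r-2$.

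First I would compute the derivatives. For distinct indices $i_1,\dots,i_k$ with $S=\{i_1,\dots,i_k\}$, multiaffinity of $g_M$ gives
\[
\partial_{i_1}\cdots\partial_{i_k} g_M \;=\; \begin{cases}\bm{\lambda}^{S}\cdot g_{M/S}(x) & \text{if } S\in\cI,\\ 0 & \text{otherwise,}\end{cases}
\]
where $g_{M/S}$ is the weighted bases generating polynomial of the contraction $M/S$ with the inherited external field. Tuples with repeated indices yield zero automatically, so in both conditions I only have to examine $g_{M/S}$ for $S\in\cI$, which is a strictly positive scalar multiple of a weighted bases generating polynomial of a matroid of rank $r-|S|$.

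For condition (1), indecomposability of $g_{M/S}$ when $|S|\leqslant r-2$: the zero polynomial is trivially indecomposable, so assume $S\in\cI$, in which case $M/S$ has rank at least $2$. Suppose for contradiction $g_{M/S}=g+h$ with nonzero summands in disjoint variable sets $I,J$. Then there are bases $B_1\subseteq I$ and $B_2\subseteq J$ of $M/S$ with $|B_1|\geqslant 2$. Applying the matroid exchange axiom to $B_1$ and $B_2$ produces some $i\in B_1$ and $j\in B_2$ such that $(B_1\setminus\{i\})\cup\{j\}$ is a basis of $M/S$; but this set contains $j\in J$ and at least one element of $B_1\setminus\{i\}\subseteq I$, contradicting the claimed decomposition.

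For condition (2), the quadratic case $|S|=r-2$ with $S\in\cI$: now $M/S$ has rank $2$, so $g_{M/S}(x)=\sum \lambda_i\lambda_j\, x_ix_j$, where the sum is over unordered bases $\{i,j\}$ of $M/S$. I would invoke the matroid partition property to write the non-loops of $M/S$ as a disjoint union of parallel classes $S_1,\dots,S_m$, so that $\{i,j\}$ is a basis of $M/S$ iff $i$ and $j$ are non-loops in distinct classes. Let $v$ be the $\bm{\lambda}$-vector supported on the non-loops and $w^{(a)}$ its restriction to class $S_a$. A direct check on each type of index pair gives the clean identity
\[
\nabla^{2} g_{M/S}(\bone) \;=\; vv^{\intercal} - \sum_{a=1}^{m} w^{(a)}(w^{(a)})^{\intercal},
\]
which is a rank-$1$ PSD perturbation of an NSD matrix. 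By \cref{thm:CauchyInterlacing} it has at most one positive eigenvalue, hence $g_{M/S}$ is log-concave at $\bone$ by \cref{prop:oneposeigenvalue}; multiplication by the positive scalar $\bm{\lambda}^{S}$ preserves this. Both hypotheses of \cref{thm:SLCdeg2} are then satisfied, giving strong log-concavity of $g_M$.

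The most delicate step is verifying the Hessian identity in the rank-$2$ case: one has to correctly account for diagonal entries, in-class pairs, and cross-class pairs, and the cancellation of the first two types is exactly what the matroid partition property buys us. Everything else is standard matroid theory plus Cauchy interlacing, so I do not expect further obstacles.
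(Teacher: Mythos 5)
Your proposal is correct and follows essentially the same route as the paper's proof: reduce to \cref{thm:SLCdeg2}, identify $\partial_S g_M$ (up to the harmless factor $\bm{\lambda}^S$) with the weighted bases generating polynomial of the contraction $M/S$, use the exchange axiom for indecomposability, and use the matroid partition property to write the rank-$2$ Hessian as $vv^{\intercal}$ minus a sum of rank-one PSD terms. The only difference is that you spell out details the paper leaves implicit (the explicit exchange argument and the interlacing step), which is fine.
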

\begin{proof} 
We verify the indecomposability and log-concavity conditions of \cref{thm:SLCdeg2}. If $i_{1},\dots,i_{k}$ contains duplicate elements, then multiaffine-ness of $g_M$ forces $\partial_{i_{1}}\dotsb \partial_{i_{k}}g_M$ to be identically zero. 
For any subset $S = \set{i_{1},\dots,i_{k}}$, we use $\partial_{S}$ as a shorthand notation for $\partial_{i_{1}}\dotsb \partial_{i_{k}}$. 
If $S$ is not independent, then no basis contains $S$, and again, $\partial_{S}g_M = 0$ identically. Hence, we assume $S \in \cI$. 

We first argue that $\partial_{S}g_{M}$ is indecomposable. Observe that $\partial_{S}g_{M}$ equals the weighted basis generating polynomial 
$ g_{M/S}$ of the contraction $M/S$. As $M/S$ is a matroid of rank $\geqslant 2$, applying the exchange property immediately tells us that $g_{M/S}$ is indecomposable. 

Now, we verify log-concavity of all quadratics. Assume $S \in \cI$ and $\card{S} = r-2$. As $S$ has rank-$(r-2)$, $M/S$ has rank two. In particular, $\partial_{S}g_{M} = g_{M/S}$ is quadratic and $\nabla^{2}g_{M/S}$ has entries
\begin{align*}
    (\nabla^{2}g_{M/S})_{ij} = \begin{cases}
        \lambda_{i}\lambda_{j}, &\quad \text{if } \set{i,j} \text{ is independent in } M/S \\
        0, &\quad \text{otherwise}.
    \end{cases}
\end{align*}
We need to prove that the above matrix has at most one positive eigenvalue.

For any set $T\subseteq [n]$, let $\bm{\lambda}_T$ denote the vector with $i$th entry $\lambda_i$ for $i\in T$ and $0$ for $i\not\in T$. 
The matroid partition property tells us that we can partition the non-loops of $M/S$
into blocks $B = B_{1} \cup \dots \cup B_{k}$. 
Then
\begin{align*}
    \nabla^{2}g_{M/S}  \ = \  \bm{\lambda}_{B}\bm{\lambda}_{B}^{\intercal} - \sum_{i=1}^{k} \bm{\lambda}_{B_{i}} \bm{\lambda}_{B_{i}}^{\intercal} \preccurlyeq \bm{\lambda}_{B}\bm{\lambda}_{B}^{\intercal} .
\end{align*}
This proves that $\nabla^{2} \partial_{S}g_{M}$ has at most one positive eigenvalue and thus $\partial_{S}g_{M}$ is log-concave.
\end{proof}
\begin{corollary}
Let $M = ([n],\cI)$ be a matroid of rank $r$. Then the polynomial
\begin{align*}
\sum_{S \in \cI : \card{S} = k} \bm{\lambda}^{S}x^{S}
\end{align*}
is strongly log-concave (at $\bone$), for every choice of ``external field'' $\bm{\lambda} = (\lambda_{1},\dots,\lambda_{n}) \in \R_{>0}^{n}$.
\end{corollary}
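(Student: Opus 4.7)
The plan is to reduce the statement to \cref{thm:gMclc} by a standard truncation argument. Given the matroid $M = ([n], \cI)$ of rank $r$ and an integer $0 \leqslant k \leqslant r$, I define the \emph{$k$-truncation} $M^{(k)} = ([n], \cI^{(k)})$, where $\cI^{(k)} = \set{S \in \cI : \card{S} \leqslant k}$. The task then reduces to observing that $M^{(k)}$ is itself a matroid of rank $k$ whose bases are precisely the independent sets of $M$ of size $k$, after which the desired conclusion follows by applying \cref{thm:gMclc} to $M^{(k)}$ with the same external field $\bm{\lambda}$.

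First I would verify that $M^{(k)}$ satisfies the two matroid axioms. The hereditary property is immediate: if $S \subseteq T$ and $T \in \cI^{(k)}$, then $T \in \cI$ so $S \in \cI$, and $\card{S} \leqslant \card{T} \leqslant k$, hence $S \in \cI^{(k)}$. For the exchange axiom, suppose $S, T \in \cI^{(k)}$ with $\card{T} > \card{S}$. Since $\card{S} < \card{T} \leqslant k$, the exchange axiom for $M$ furnishes an element $i \in T \setminus S$ with $S \cup \set{i} \in \cI$; moreover $\card{S \cup \set{i}} \leqslant k$, so $S \cup \set{i} \in \cI^{(k)}$. Thus $M^{(k)}$ is a matroid, and by the exchange property all maximal elements of $\cI^{(k)}$ have the same size $k$, so $M^{(k)}$ has rank $k$ and its bases are exactly the independent sets of $M$ of size $k$.

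Now I would simply write down the weighted bases generating polynomial of $M^{(k)}$ for the external field $\bm{\lambda}$:
\begin{align*}
    g_{M^{(k)}}(x_{1},\dots,x_{n}) = \sum_{B \text{ basis of } M^{(k)}} \bm{\lambda}^{B} x^{B} = \sum_{S \in \cI : \card{S} = k} \bm{\lambda}^{S} x^{S},
\end{align*}
which is exactly the polynomial in the statement of the corollary. Applying \cref{thm:gMclc} to the rank-$k$ matroid $M^{(k)}$ yields that this polynomial is strongly log-concave at $\bone$, completing the proof.

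There is essentially no obstacle here beyond recalling the truncation construction; the only substantive content is the verification that truncation is a matroid, which is classical and follows cleanly from the exchange axiom for $M$. The heavy lifting — namely the indecomposability of derivatives and log-concavity of quadratics — has already been carried out in the proof of \cref{thm:gMclc}, and we are simply invoking it in the smaller matroid.
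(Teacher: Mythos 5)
Your proposal is correct and follows exactly the paper's argument: the polynomial is the weighted bases generating polynomial of the rank-$k$ truncation of $M$, which is a matroid, so \cref{thm:gMclc} applies. The only difference is that you spell out the verification of the matroid axioms for the truncation, which the paper treats as classical and omits.
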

\begin{proof}
This is the weighted basis generating polynomial of the rank-$k$ truncation of $M$, which is still a matroid. Thus, the claim follows from \cref{thm:gMclc}.
\end{proof}
\subsection{The Random Cluster Model on Matroids}
%\begin{theorem}
%The following polynomial is strongly log-concave for $0 < q \leqslant 1$
%\begin{align*}
   % f_{M,k,q}(x_{1},\dots,x_{n}) = \sum_{S \in \binom{[n]}{k}} \mu_{M,q}(S)x^{S}
%\end{align*}
Using the the simplified characterization of strong log-concavity, we can also prove \cref{thm:randomclusterSLC}.

\begin{proof}[Proof of  \cref{thm:randomclusterSLC}]
We verify the indecomposability and log-concavity conditions of \cref{thm:SLCdeg2}. Let $f$ denote $f_{M,k,q}$. 
As before, for the derivatives $\partial_{i_1}\hdots \partial_{i_k}f$, by multiaffine-ness of $f$, 
we may assume $i_{1},\dots,i_{k}$ does not contain any duplicate elements. 
 Consider $S = \set{i_{1},\dots,i_{k}}$. 
Note that $\partial_{S}f$ has a monomial $x^{T}$ with nonzero coefficient for every $T \subset [n] \setminus S$ with $\card{T} \leqslant k - \card{S}$. Hence, indecomposability of $\partial_{S}f$ for every $S \subset [n]$ with $\card{S} \leqslant k$ is immediate. 

Now, we verify log-concavity of quadratics. Let $S \subset [n]$ with $\card{S} = k-2$. First, we calculate that
\[
    (\partial_{S}f)(x_{1},\dots,x_{n}) 
    = \sum_{T \in \binom{[n]}{k} : T \supset S} q^{-\rank(T)}\bm{\lambda}^{T \setminus S}x^{T \setminus S} 
    = \sum_{\set{i,j} \in \binom{[n] \setminus S}{2}} q^{-\rank(S \cup \set{i,j})} \lambda_{i}\lambda_{j}x_{i}x_{j}.
\]
Then for elements $i\neq j$ of $[n]\setminus S$, the $(i,j)$th entry of $\nabla^{2}\partial_{S}f$ is 
\begin{align*}
    (\nabla^{2}\partial_{S}f)_{ij}  = q^{-\rank(S \cup \set{i,j})}\lambda_{i}\lambda_{j} = q^{-\rank(S)}q^{-\rank_{M/S}(\set{i,j})}\lambda_{i}\lambda_{j}
\end{align*}
We will show that the matrix $A = q^{\rank(S)}\nabla^{2}\partial_{S}f$ at most one positive eigenvalue. 
Note that for $i\neq j$ in $[n]\setminus S$, $A_{ij} = q^{-\rank_{M/S}(\set{i,j})}\lambda_{i}\lambda_{j}$.
From here, consider the vector $v\in \R^n$ with $v_i = 0$ for $i\in S$, $v_i = \lambda_{i}$ for loops of $M/S$ and $v_i = q^{-1}\lambda_{i}$ 
for non-loops of $M/S$. 

Now consider the matrix $vv^{\intercal} - A$, we can check that for $i,j\in [n]\setminus S$, 
\begin{align*}
    (vv^{\intercal} - A)_{ij} = \begin{cases}
        (q^{-2} - q^{-1})\lambda_{i}\lambda_{j}&\quad\text{if } i,j \text{ are parallel non-loops in $M/S$, and}  \\
        0&\quad\text{otherwise}
    \end{cases}
\end{align*}
In particular, by the matroid partition property, if $B_{1},\dots,B_{k}$ denote the equivalence classes of non-loops of $M/S$ which are parallel to each other, then
\begin{align*}
    vv^{\intercal} - A = (q^{-2} - q^{-1})\sum_{j=1}^{k} \bm{\lambda}_{B_{j}}\bm{\lambda}_{B_{j}}^{\intercal}
\end{align*}
where $\bm{\lambda}_{B_{j}}$ is the vector with entries $\bm{\lambda}_{B_{j}}(i) = \lambda_{i}$ if $i \in B_{j}$ and $\bm{\lambda}_{B_{j}}(i) = 0$ otherwise.
%{\cynthia{C: I changed this formula -- please  double check that it's now correct.}}

As $0 < q \leqslant 1$, $q^{-2} - q^{-1} \geqslant 0$, in which case the right-hand side is positive semidefinite and $A \preccurlyeq vv^{\intercal}$. We conclude $A$ has at most one positive eigenvalue as desired.
\end{proof}

\begin{remark}
Observe that
\begin{align*}
    q^{r}f_{M,r,q}(x_{1},\dots,x_{n})
\end{align*}
converges to the bases generating polynomial coefficient-wise as $q \rightarrow 0$. Hence, one can view this as a stronger result than strong log-concavity of the bases generating polynomial.
\end{remark}
\subsection{Geometric Scaling of Coefficients}
In this section we prove \cref{thm:cpow}.
%\begin{proof}

\begin{proof}[Proof of \cref{thm:cpow}]
If $k = 0,1$, the claim is obvious so assume $k \geqslant 2$. The claim is obvious when $\alpha = 1$, and the case $\alpha = 0$ follows by taking coefficient-wise limits as $\alpha \rightarrow 0$. Hence, we will also assume $0 < \alpha < 1$. Finally, we will assume that all coefficients $c_{S}$ are strictly positive. The result for general strongly log-concave polynomials then follows by taking coefficient-wise limits.  \\~\\
Let $T \in \binom{[n]}{k-2}$. We must prove that $\nabla^{2} \partial_{T}f_{\alpha}$ has at most one positive eigenvalue. Observe that we may concisely write
\begin{align*}
    \nabla^{2} \partial_{T}f &= \brackets*{c_{T \cup \set{i,j}}}_{ij} \\
    \nabla^{2} \partial_{T}f_{\alpha} &= \brackets*{c_{T \cup \set{i,j}}^{\alpha}}_{ij}
\end{align*}
As $\nabla^{2}\partial_{T}f$ has at most one positive eigenvalue, and all entries are nonnegative, we may write
\begin{align*}
    \nabla^{2} \partial_{T}f = vv^{\intercal} - A
\end{align*}
for some vector $v \in \R^{n}$ and a positive semidefinite matrix $A$. Note that since $\nabla^{2}\partial_{T}f$ has strictly positive entries, the Perron-Frobenius Theorem (see \cref{thm:perron}) tells us that the entries of $v$ are strictly positive. In particular, $c_{T \cup \set{i,j}} = v_{i}v_{j} - A_{ij} > 0$ where $v_{i}v_{j} > 0$. Our goal is to write
\begin{align*}
    c_{T\cup\set{i,j}}^{\alpha} = (v_{i}v_{j} - A_{ij})^{\alpha} = v_{i}^{\alpha}v_{j}^{\alpha} \parens*{1 - \frac{A_{ij}}{v_{i}v_{j}}}^{\alpha}
\end{align*}
and then Taylor expand $\parens*{1 - \frac{A_{ij}}{v_{i}v_{j}}}^{\alpha}$. Consider the function $\varphi_{\alpha}(x) = (1 - x)^{\alpha}$, whose Taylor expansion about zero we recall is
\begin{align*}
    \sum_{k=0}^{\infty} \frac{\prod_{j=0}^{k-1} (\alpha - j)}{k!} \cdot (-1)^{k}x^{k} = \sum_{k=0}^{\infty} \parens*{\prod_{j=0}^{k-1} \frac{\alpha - j}{1 + j}} \cdot (-1)^{k}x^{k} = 1 - \sum_{k=1}^{\infty} \parens*{\prod_{j=0}^{k-1} \abs*{\frac{\alpha - j}{1 + j}}} \cdot x^{k}
\end{align*}
where for the last equality, we crucially use the fact that $0 < \alpha < 1$. The interval of convergence of this power series contains $(-1,1)$, as if $a_{k} = (-1)^{k}\prod_{j=0}^{k-1} \frac{\alpha - j}{1 + j}$, then
\begin{align*}
    \abs*{\frac{a_{k+1}x^{k+1}}{a_{k}x^{k}}} = \abs{x} \cdot \frac{\alpha - k}{1 + k} \rightarrow \abs{x} \quad \text{as} \quad k \rightarrow \infty
\end{align*}
gives a radius of convergence of 1 by the Ratio Test. Hence, to apply this power series representation to our values of $x$, we verify that $x = \frac{A_{ij}}{v_{i}v_{j}} \in (-1,1)$, i.e. $\abs{A_{ij}}< v_{i}v_{j}$, for every $i,j$. \\~\\
For $i = j$, we have $A_{ii} \geqslant 0$ so $v_{i}^{2} - A_{ii} > 0$ is gives the desired inequality. For $i \neq j$, observe that $A$ being positive semidefinite means that its principal minors are nonnegative. In particular, for $S = \set{i,j}$, we have
\begin{align*}
    \det(A_{S,S}) = A_{ii}A_{jj} - A_{ij}^{2} \geqslant 0 \implies \abs{A_{ij}} \leqslant \sqrt{A_{ii}A_{jj}} < v_{i}v_{j}
\end{align*}
Having verified that the power series is valid for every entry of our matrix, we have
\begin{align*}
    \nabla^{2}\partial_{T}f_{\alpha} &= \brackets*{v_{i}^{\alpha}v_{j}^{\alpha}}_{ij} \circ \parens*{\bone\bone^{\intercal} - \sum_{k=1}^{\infty} \parens*{\prod_{j=0}^{k-1} \abs*{\frac{\alpha - j}{1 + j}}} \cdot \brackets*{\frac{A_{ij}}{v_{i}v_{j}}}^{\circ k}} \\
    &= \underset{(1)}{\underbrace{\brackets*{v_{i}^{\alpha}v_{j}^{\alpha}}_{ij}}} - \underset{(2)}{\underbrace{\sum_{k=1}^{\infty} \parens*{\prod_{j=0}^{k-1} \abs*{\frac{\alpha - j}{1 + j}}} \cdot \parens*{\brackets*{v_{i}^{\alpha}v_{j}^{\alpha}}_{ij} \circ \brackets*{\frac{A_{ij}}{v_{i}v_{j}}}^{\circ k}}}}
\end{align*}
Here, we recall that $A \circ B$ denotes the Hadamard product of $A,B$, where $(A \circ B)_{ij} = A_{ij}B_{ij}$. Similarly, $A^{\circ k}$ denotes the $k$-iterated Hadamard product of $A$ with itself. \\~\\
All we must do is prove that (1) and (2) are both positive semidefinite, and that (1) is rank-1. Observe that
\begin{align*}
    [v_{i}^{\alpha}v_{j}^{\alpha}]_{ij} = [v_{i}^{\alpha}]_{i} \cdot [v_{i}^{\alpha}]_{i}^{\intercal} \quad\quad \brackets*{\frac{1}{v_{i}v_{j}}}_{ij} = \brackets*{\frac{1}{v_{i}}}_{i} \cdot \brackets*{\frac{1}{v_{i}}}_{i}^{\intercal}
\end{align*}
This tells us (1) is positive semidefinite and rank-1. For the second, observe that
\begin{align*}
    \brackets*{\frac{A_{ij}}{v_{i}v_{j}}}_{ij} = A \circ \brackets*{\frac{1}{v_{i}v_{j}}}_{ij}
\end{align*}
As $A \succcurlyeq 0$ by assumption, this matrix is positive semidefinite by the Schur Product Theorem (see \cref{thm:schur}). Again, inductively applying the Schur Product Theorem, we have $\brackets*{v_{i}^{\alpha}v_{j}^{\alpha}}_{ij} \circ \brackets*{\frac{A_{ij}}{v_{i}v_{j}}}^{\circ k} \succcurlyeq 0$ for every $k$. As (2) is a nonnegative linear combination of positive semidefinite matrices, it is positive semidefinite. 
\end{proof}

\begin{remark}
Not that this operation does not preserve complete log-concavity when $f$ is not assumed to be multiaffine. For example, consider the degree-2 bivariate polynomial $f(x,y) = ax^{2} + bxy + cy^{2}$, where $a,b,c > 0$. Here,
\begin{align*}
    \nabla^{2}f = \begin{bmatrix}
        2a & b \\
        b & 2c
    \end{bmatrix}
\end{align*}
so log-concavity amounts to $\det(\nabla^{2} f) = 4ac - b^{2} \leqslant 0$, i.e. $b^{2} \geqslant 4ac$. Now, raise each coefficient to the power $\alpha$. Then,
\begin{align*}
    \nabla^{2} f_{\alpha} = \begin{bmatrix}
        2a^{\alpha} & b^{\alpha} \\
        b^{\alpha} & 2c^{\alpha}
    \end{bmatrix}
\end{align*}
so log-concavity amounts to $\det(\nabla^{2} f_{\alpha}) = 4a^{\alpha}c^{\alpha} - b^{2\alpha} \leqslant 0$, i.e. $b^{2} \geqslant 4^{1/\alpha}ac$. Clearly, as one decreases $\alpha$ to 0, this inequality gets stronger, which certainly isn't implied by log-concavity of $f$. \\~\\
The problem lies in the fact that when you differentiate a monomial that contains variables with multiplicities, you will obtain ``factorial coefficients'' which are not raised to the power $\alpha$. The operation must be modified appropriately to take this into account. We defer the discussion of the appropriate generalization of this operation for non-multiaffine polynomials to a future article.
\end{remark}

\printbibliography

\end{document}